\newcommand{\href}[1]{#1} % does nothing, but defines the command so the print-opt version ignores \href tags
\definecolor{background-color}{gray}{0.95}
\definecolor{steelblue}{rgb}{0.27, 0.51, 0.71}
\definecolor{brickred}{rgb}{0.8, 0.25, 0.33}
\definecolor{bluegray}{rgb}{0.4, 0.6, 0.8}
\definecolor{amethyst}{rgb}{0.6, 0.4, 0.8}
\newcommand{\code}[1]{\texttt{#1}}
\newcommand*{\Rnsp}{\textsf{R}}
\newtheorem{definition}{Definition}
\newtheorem{theorem}{Theorem}
\newtheorem{lemma}{Lemma}
\newtheorem{proposition}{Proposition}
\newcommand{\ve}[1]{\mathbf{#1}}           % for vectors
\newcommand{\tr}[1]{{#1}^{\mkern-1.5mu\mathsf{T}}}              % for transpose
\newcommand{\abs}[1]{\lvert{#1}\rvert}              % norm
\newcommand{\biggiven}{~\vline~}
\newcommand{\genpool}{g}
\newcommand{\ordpool}[2]{ord \left ( #1 ; #2 \right )}
\newcommand{\stopool}{Sto}
\newcommand{\gampool}{gam}
\newcommand{\tippool}{Tip}
\newcommand{\fispool}{Fis}
\newcommand{\pearpool}{Pea}
\newcommand{\chipool}[2]{chi \left ( #1 ; #2 \right )}
\newcommand{\hrpool}[2]{HR \left ( #1 ; #2 \right)}
\newcommand{\centquot}{q}
\newcommand{\prevalence}{\eta}
\DeclareMathOperator*{\argmin}{arg\,min}
\newcommand*{\intersect}{\cap}
\newcommand{\field}[1]{\mathbb{#1}}
\newcommand{\Reals}{\field{R}}
\title{Balancing central and marginal rejection when combining independent significance tests}
\author[1]{Chris Salahub}
\author[1]{Wayne Oldford}
\affil[1]{{\footnotesize Department of Statistics and Actuarial Science, University of Waterloo}}
\affil[]{{\footnotesize \texttt{\{chris.salahub, rwoldford\}@uwaterloo.ca}}}
\begin{document}

\maketitle

\begin{abstract}
  A common approach to evaluating the significance of a collection of $p$-values combines them with a pooling function, in particular when the original data are not available. These pooled $p$-values convert a sample of $p$-values into a single number which behaves like a univariate $p$-value. To clarify discussion of these functions, a telescoping series of alternative hypotheses are introduced that communicate the strength and prevalence of non-null evidence in the $p$-values before general pooling formulae are discussed. A pattern noticed in the UMP pooled $p$-value for a particular alternative motivates the definition and discussion of central and marginal rejection levels at $\alpha$. It is proven that central rejection is always greater than or equal to marginal rejection, motivating a quotient to measure the balance between the two for pooled $p$-values. A combining function based on the $\chi^2_{\kappa}$ quantile transformation is proposed to control this quotient and shown to be robust to mis-specified parameters relative to the UMP. Different powers for different parameter settings motivate a map of plausible alternatives based on where this pooled $p$-value is minimized.
\end{abstract}

\section{Introduction} \label{c:pooledPvals:intro}

When presented with a collection of $p$-values, a natural question is whether they constitute evidence as a whole against the null hypothesis that there are no significant results. The multiple testing problem arises because answering this question requires different analysis than univariate $p$-values. A univariate threshold applied to all $p$-values, for example, will no longer control the type I error at the level of the threshold. A common approach to control the type I error (often called the family-wise error rate in this context) is to use a function to combine the $p$-values into a single value which behaves like a univariate $p$-value.

Explicitly, consider a collection of $M$ independent test statistics $\ve{t} = \tr{(t_1, \dots, t_M)}$ having $p$-values $\ve{p} = \tr{(p_1, \dots, p_M)}$ for the null hypotheses $H_{01}$, $H_{02}$, $\dots$, $H_{0M}$ -- for example, $\chi^2$ tests for the association of $M$ individual genes with the presence of a disease where each $H_{0i}$ asserts no association.
Assessing the overall significance of $\ve{p}$ while controlling the family-wise error rate (FWER) at the outset of analysis is common practice in meta-analysis and big data applications (\citealp{heardrubin2018choosing, wilson2019harmonic}).
The FWER is the probability of rejecting one or more of $H_{01}, \dots, H_{0M}$ when all are true, equivalent to the type I error of the joint hypothesis
\begin{equation*}
  H_0 = \intersect_{i = 1}^M H_{0i}.
\end{equation*}
To emphasize the null distributions, $p_i \sim U = Unif(0,1)$ for all $i \in \{1, \dots, M\}$, this is often written
\begin{equation*}
  H_0 : p_1, p_2, \dots, p_M \overset{\mathrm{iid}}{\sim} U.
\end{equation*}

To test $H_0$, a statistic $l(\ve{p}): [0,1]^M \mapsto \Reals$ of the $p$-values with a distribution that is known or easily simulated under $H_0$ can be computed. If $l(\ve{p})$ has cumulative distribution function (CDF) $F_l(l)$ under $H_0$, then $l(\ve{p})$ admits $\genpool(\ve{p}) = 1 - F_l(l(\ve{p})) \sim Unif(0,1)$ such that rejecting $H_0$ when $\genpool(\ve{p}) \leq \alpha$ controls the FWER at level $\alpha$.\footnote{Note that the use of the CDF in $\genpool(\ve{p}) = 1 - F_l(l(\ve{p}))$ implies that $\genpool(\ve{p})$ is identical for any statistic that is a monotonic transformation of $l(\ve{p})$.} $\genpool(\ve{p})$ therefore summarizes the evidence against $H_0$ in a statistic which behaves like a univariate $p$-value: its magnitude is inversely related to its significance.

If we want $\genpool(\ve{p})$ to additionally have convex acceptance regions like a univariate $p$-values, it should be continuous in each argument and monotonically non-decreasing, i.e. $\genpool(p_1, \dots, p_M) \leq \genpool(p_1^{*}, \dots, p_M^{*}) \leftrightarrow p_1 \leq p_1^{*}, \dots, p_M \leq p_M^{*}$. Functions failing these can behave counter-intuitively, as they may accept $H_0$ for small $p_i$ only to reject as $p_i$ increases for some margin $i$. Finally, if there is no reason to favour any margin, $\genpool$ should be symmetric in $\ve{p}$. The term evidential statistic refers to $\genpool(\ve{p})$ meeting these criteria generally (\citealp{goutisetal1996assessing}), and when testing $H_0$ they are called pooled $p$-values. There is no lack of pooled $p$-value proposals, including the statistics of \cite{tippett1931methods}, \cite{fisher1932statistical}, \cite{pearson1933method}, \cite{stoufferetal1949american}, \cite{mudholkar1977logit}, \cite{heardrubin2018choosing}, and \cite{cinarviechtbauer2022poolr}.

As all of these methods have convex acceptance regions and control the FWER at $\alpha$ under the rule $\genpool(\ve{p}) \leq \alpha$, statistical power against alternative hypotheses is often used to distinguish them. Ideally, one among them would be uniformly most powerful (UMP) against a very broad alternative but this is not possible because of the generality of $H_0$. Indeed, \cite{birnbaum1954combining} proves that if all $f_i$ are strictly non-increasing so that $p_i \sim f_i$ is biased to small values when $H_{0i}$ is false, then there is no UMP test against the negation of $H_0$,
\begin{equation*}
  H_{1} = \neg H_0 : p_1 \sim f_1, p_2 \sim f_2, \dots, p_M \sim f_M
\end{equation*}
where $f_i \neq U$ for at least one $i \in \{1, \dots, M\}$. As the simulation studies in \cite{westberg1985combining}, \cite{loughin2004systematic}, and \cite{kocak2017meta} readily demonstrate, the number of false $H_{0i}$ and the non-null distributions $f_i$ together specify the unique most powerful test. For the particular case of testing $H_0$ against $H_1$ with $f_1 = f_2 = \dots = f_M = Beta(a, b)$ for $a \in (0,1]$ and $b \in [1, \infty)$, the Neyman-Pearson lemma proves that the pooled $p$-value $\hrpool{\ve{p}}{w}$ induced by the statistic
\begin{equation} \label{MC:eq:hrumptest}
   l_{HR}(\ve{p}; w) = w \sum_{i = 1}^M \ln p_i - (1 - w) \sum_{i = 1}^M \ln ( 1 - p_i )
\end{equation}
with $w = (1 - a)/(b - a) \in [0,1]$ is uniformly most powerful (UMP) (\citealp{heardrubin2018choosing}).

Though $\hrpool{\ve{p}}{(1 - a)/(b - a)}$ is UMP against $H_1$ for $f_1 = \dots = f_M = Beta(a,b)$, it is rarely assumed that $f_1 = \dots = f_M$ in the pairwise search for variables. Rather, some of these are assumed to be non-uniform while others are assumed null. A discussion of this setting therefore requires measures of the prevalence and strength of evidence against $H_0$, captured by a series of telescoping alternative hypotheses that bridge the gap between $H_1$ and the setting where $\hrpool{\ve{p}}{w}$ is UMP.

Section \ref{sec:strAndPrev} introduces the measures of strength and prevalence used in this paper, as these are required to understand central and marginal rejection later. Prevalence is measured by $\prevalence$, the proportion of non-null tests, while strength is measured by the Kullback-Leibler divergence. Assuming that non-null tests come from a restricted beta family, the power of a UMP method for particular beta distributions is investigated for different values of the prevalence and strength in Section \ref{sec:power}. A pattern of high power for either strong evidence in a few tests or weak evidence in many tests is noticed and developed into a framework for choosing pooled $p$-values in Section \ref{rejlev}. Following the necessary definitions to develop this framework, including the concepts of central and marginal rejection, it is proved that the tendency to reject concentrated evidence is always less than diffuse evidence, allowing for the definition of a coefficient of the preference of a pooling function to diffuse evidence.

Section \ref{sec:chi} proposes a pooling function based on the $\chi^2$ quantile transformation which controls this preference through its degrees of freedom. It is proven that large degrees of freedom give a pooling function which prefers diffuse evidence while small degrees of freedom prefer concentrated evidence. In a simulation study, this proposal is shown to nearly match the UMP when correctly specified and is more robust to errors in specification. These conclusions are extended in Section \ref{sec:chiIdentifying} where a sweep of parameter values is used to identify the the most powerful choice for a given sample and suggest a region of most plausible alternative hypotheses within the framework in light of it.

\section{Measuring the strength and prevalence of evidence} \label{sec:strAndPrev}

When proving that no UMP exists for the general hypothesis $H_1 = \neg H_0$, \cite{birnbaum1954combining} provides a couple of two-dimensional examples. Though these are demonstrative, they are not instructive for the discussion of tests generally. To the same conclusions, each of \cite{westberg1985combining}, \cite{loughin2004systematic}, and \cite{kocak2017meta} simulate a variety of populations with differing proportions of $\ve{p}$ generated under $U$ or some alternative distribution. We begin by defining a telescoping series of alternative hypotheses which capture the settings explored in these empirical investigations.

\subsection{Telescoping alternatives}

Starting at $H_1$, assume $H_{0i}$ is false only for $i \in J \subset \{1, \dots, M\}$ and quantify the proportion of non-null hypotheses by $\prevalence = \abs{J}/M$. This implies an alternative hypothesis
\begin{equation*}
  H_{2} : p_i \sim \begin{cases} f_i \neq U & \text{ if } i \in J, \\
    U & \text{ if } i \notin J.
  \end{cases}
\end{equation*}
As no distinctions between $H_{01}, \dots, H_{0M}$ are made in $H_0$, $\prevalence$ captures the prevalence of evidence against $H_0$ without loss of generality. If it is additionally assumed that all $i \in J$ have the same alternative distribution $f \neq U$, this gives the alternative hypothesis
\begin{equation*}
  H_{3} : p_i \sim \begin{cases} f & \text{ if } i \in J, \\
    U & \text{ if } i \notin J.
  \end{cases}
\end{equation*}
Finally, in the particular case where $\prevalence = 1$, $\abs{J} = M$ and 
\begin{equation*}
  H_4 : p_1, p_2, \dots, p_M \overset{\mathrm{iid}}{\sim} f \neq U
\end{equation*}
is obtained. Though restricted compared to $H_1$, this alternative makes sense for meta-analysis or repeated experiments, where we could assume all $p_i$ are independently and identically distributed when $H_{0}$ is false.

$H_4$ was distinguished from $H_1$ as early as \cite{birnbaum1954combining} (there called $H_A$ and $H_B$ respectively), but no exploration of intermediate possibilities was considered. All of \cite{westberg1985combining}, \cite{loughin2004systematic}, and \cite{kocak2017meta} explore factorial combinations of $\prevalence$ and $f$, and so use instances of $H_3$ for their investigations. When testing $H_4$ against $H_0$, \cite{heardrubin2018choosing} prove $\hrpool{\ve{p}}{w}$ is the UMP pooled $p$-value if $f$ is from a constrained beta family. By stating clearly $H_1 \supset H_2 \supset H_3 \supset H_4$, a framework for alternative hypothesis is created that contextualizes and relates these previous results. Additionally, the parameter $\prevalence$ under $H_3$ naturally measures the prevalence in $\ve{p}$ of evidence against $H_0$.

\subsection{Measuring the strength of evidence} \label{strPre:KLdiv}

Intuitively, if $f$ has a density highly concentrated near zero then it provides ``strong'' evidence against $H_0$. This is because $p$-values generated by $f$ will tend to be smaller than those following $U$ and therefore will be rejected more frequently for any $\alpha$. Any measure of the strength of evidence in $f$ should therefore increase as the magnitude of $f$ for small values increases.

This relatively simple criterion is challenging to apply to $H_2$. Every $f_i$ for $i \in J$ may be distinct and a single value characterizing their multiple, potentially very different, departures from $U$ introduces ambiguity. Taking a mean of measures, for example, conflates different instances of $H_2$. If $J = \{1,2\}$, the mean strength of evidence cannot distinguish strong evidence in $f_1$ with weak evidence in $f_2$ from moderate evidence in both. This difficulty is avoided if all $f_i$ are equal, i.e. if $H_3$ is chosen as the alternative hypothesis. Indeed, this choice is common in previous empirical investigations.

\cite{westberg1985combining} generates $\ve{p}$ by testing the difference in means of two simulated normal samples, and measures the strength of evidence by the true difference in means between the generative distributions. This is reasonable, but limits us to tests comparing population parameters and requires assumptions on $\ve{t}$ (the tests generating $\ve{p}$). Considering $f$ directly, \cite{loughin2004systematic} takes $p_i \overset{\mathrm{iid}}{\sim} Beta(a,b)$ for all $i \in J$, restricts $a = 1 \leq b$ so that $f$ is non-increasing, and measures the strength of evidence with one minus the median of $f$: $1 - 0.5^{1/b}$.\footnote{\cite{kocak2017meta} also uses $p_i \sim Beta(a,b)$ but takes the broader $0 < a \leq b$ and does not attempt to measure the strength of $f$'s departure from $U$.} This measure of strength is limited to $Beta(1,b)$, though it does achieve the intuitive ordering desired. A more general measure that applies to any $f$ is the Kullback-Leibler (KL) divergence, given by
$$D(p, q) = \int_{\mathcal{X}} p(x) \ln \left ( \frac{p(x)}{q(x)} \right ) dx$$
for $q(x)$ to density $p(x)$ with mutual support on $\mathcal{X}$.

Widespread application of the KL divergence in information theory and machine learning aside, one interpretation of this measure suits pooled hypothesis testing nicely. \cite{joyce2011kullback} describes the KL divergence as the extra information encoded in $q(x)$ when expecting $p(x)$. The explicit assumption underlying the pooled test of $H_0$ is that $f_i = U$ for all $i \in \{1, \dots, M\}$, which gives a natural expected density $q(x) = U(x)$. Furthermore, this density is, in some sense, minimally informative: no region of $[0,1]$ is distinguished from any other by $U$. Any additional information which discriminates particular regions of $[0,1]$, in particular values near $0$, will help inform rejection.

\subsection{Choosing a family for the alternative distribution} \label{familyForAlt}

The beta family of distributions is appealing as a model for the alternative distribution $f$ under $H_3$ for two main reasons. First, it has the same support as $U$ without the need for adjustment. Second, a wide variety of different density shapes can be achieved by changing its two parameters and it has a non-increasing density whenever $a \leq 1 \leq b$. This latter quality makes it ideal to model alternative $p$-value distributions under the assumption that $p_i$ is biased to small values when $H_0$ is false. These features are likely why it is commonly used in the literature (e.g. \cite{loughin2004systematic}, \cite{kocak2017meta}). More significantly, the Neyman-Pearson lemma proves that $\hrpool{\ve{p}}{w}$ is UMP for $p_1, \dots, p_M \overset{\mathrm{iid}}{\sim} Beta(a, 1/w + a(1 - 1/w))$ when $0 \leq w,a \leq 1$, and so a best-case benchmark for power exists to compare to other tests (\citealp{heardrubin2018choosing}). Were another distribution chosen, this important reference point would be absent.

When $f = Beta(a,b)$ the KL divergence also has a relatively simple expression.
Letting $u(x) = 1$ be the uniform density and $f(x) = \frac{\Gamma(a + b)}{\Gamma(a) \Gamma(b)} x^{a - 1}(1 - x)^{b - 1}$ be the beta density with parameters $a$ and $b$, $D(u, f)$ is given by
$$D(u, f) = - \int_{0}^1  \ln f(x) dx = a + b + \ln \left ( \frac{\Gamma(a) \Gamma(b)}{\Gamma(a + b)} \right ) - 2.$$
For the case of the beta densities where $\hrpool{\ve{p}}{w}$ is UMP, i.e. $a \leq 1 \leq b$, we can express this in terms of $a$ and $w = (1 - a)/(b - a)$:
\begin{equation} \label{eq:klbeta}
  D(u,f) := D(a, w) = 2a + \frac{1 - a}{w} + \ln \left ( \frac{\Gamma(a) \Gamma \left ( \frac{1}{w} + a \left [ 1 - \frac{1}{w} \right ] \right )}{\Gamma \left (2a + \frac{1 - a}{w} \right )} \right ) - 2.
\end{equation}
This is a less convenient expression, but provides a direct link between the strength of evidence $D(a, w)$ and the UMP test against $H_4$ by the shared parameter $w$. Interestingly, though the strength depends on both $a$ and $w$, the UMP test only depends on the latter.

To visualize the strength of evidence provided by different beta distributions, shaded inset densities for different choices of $w$ and the KL divergence $D(a,w)$ are displayed in Figure \ref{fig:klDivs} placed at $\ln(w)$, $\ln D(a, w)$.\footnote{The densities of these inset plots were determined for each $w$, $D(a,w)$ pair by finding the corresponding $a$ value numerically using Equation (\ref{eq:klbeta}). This is the cause of the irregular plots in the upper right corner: when $w$ is large enough the required $a$ to obtain a set $D(a,w)$ is too small to be represented as a floating point double alongside $w \approx 1$. This is inconvenient, but these cases correspond to densities that are effectively degenerate at zero in any case.} When $a = w = 1$, $f = u$ so $D(a, w) = 0$. Decreasing either of $a$ or $w$ from 1 causes a larger magnitude of $f$ near zero, with the limiting case $a = w = 0$ corresponding to a degenerate distribution at zero. This general trend in shapes is seen in Figure \ref{fig:klDivs}, decreasing $w$ or increasing $D(a, w)$ increases the concentration of $f$ near zero.

\begin{figure}[!h]
  \begin{center}
    \includegraphics{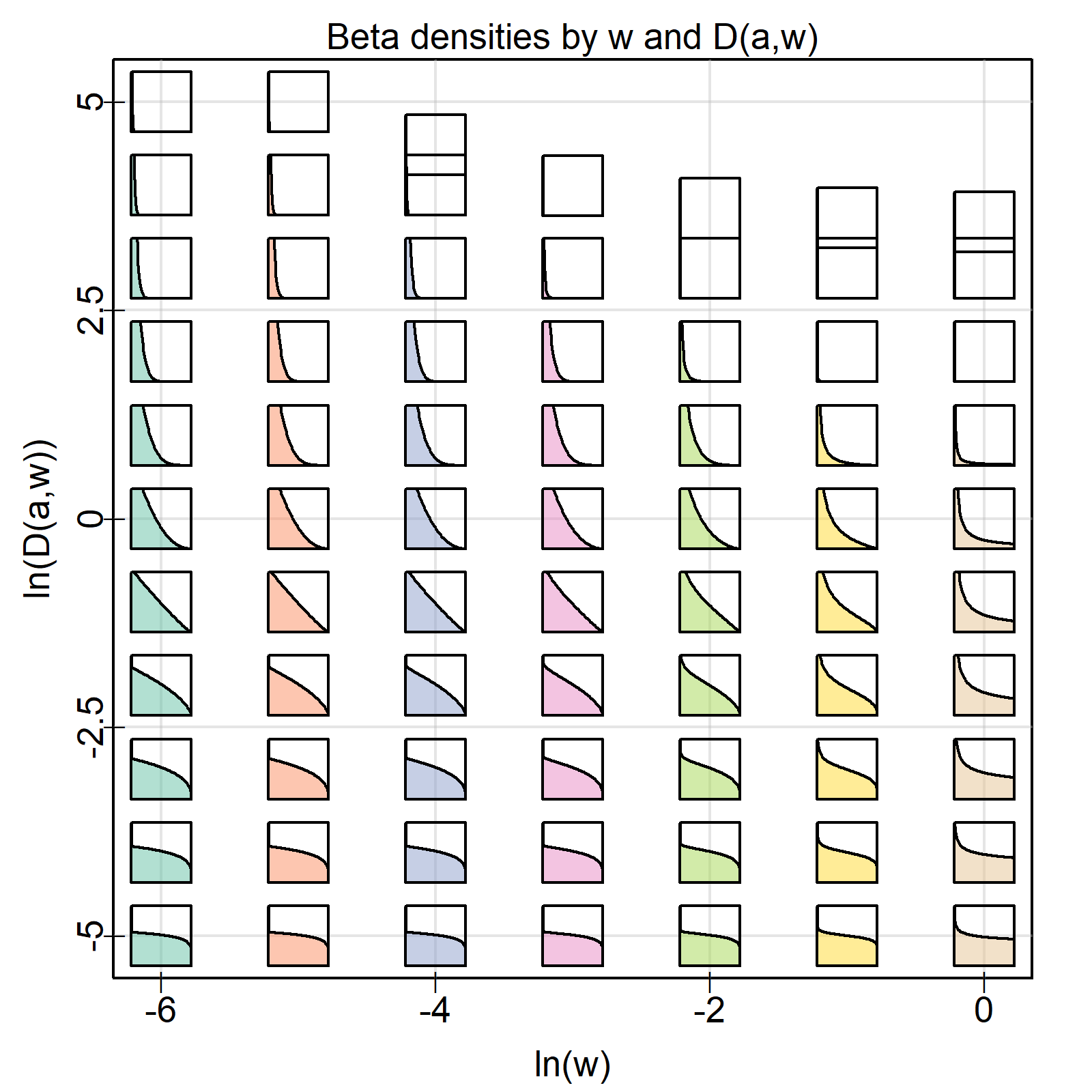}
  \caption{Densities and log KL divergences of $Beta(a, 1/w + a(1 - 1/w))$ from $U$ by $w$. Insets display the densities over $[0,1]$ horizontally and $[0,2]$ vertically and are centred at the $\ln(w)$, $\ln D(a,w)$ coordinates corresponding to the density. These densities range from nearly vertical at 0 when $D(a,w) \approx e^5$ to nearly uniform when $D(a,w) \approx e^{-5}$.}
  \label{fig:klDivs}
  \end{center}
\end{figure}

This suggests a limitation in the KL divergence. Though the ordering of beta densities by $D(a,w)$ generally conforms to the intuitive rule -- larger divergences correspond to inset densities with greater magnitude near zero -- the parameter $w$ is still relevant to the shape. The KL divergence does not distinguish between departures from uniform near 1 and near 0, despite their relevance for rejection when, for example, rejecting the null hypotheses of $p$-values below a threshold. This is particularly obvious in the final row of inset plots in Figure \ref{fig:klDivs}. When $\ln(w) = -6$, the density for $\ln D(a,w) = -5$ is mostly flat, with a slight increase in density near zero and a large decrease near one. When $\ln(w) = 0$, however, a much larger spike in the density near zero is present. %It would seem the KL divergence does not quite provide an unambiguous ordering across the entirety of beta distributions considered, though it does so within each column. 

Nonetheless, the ordering on beta densities imposed by the KL divergence is still very informative. It classifies, generally, which densities are biased to small values.  Therefore, $D(a,w)$ provides a convenient measure of the strength of evidence contained in $f$ under the alternative hypothesis, and has computationally convenient form for the case of interest where $f = Beta(a,b)$.

\section{Pooled $p$-values} \label{sec:pooled}

Having explored the possible alternatives to $H_0$ and some specific instances of these alternatives, we can now focus on the pooled $p$-values meant to test $H_0$ against these alternatives.
Recall that any pooled $p$-value $\genpool(\ve{p})$ is derived from the null distribution of a corresponding statistic $l(\ve{p})$. The statistics underlying pooled $p$-values are of two basic kinds, based either on the $k^{th}$ order statistic $p_{(k)}$ of $\ve{p}$ (e.g. \cite{tippett1931methods} and \cite{wilkinson1951statistical}) or on transformations of each $p_i$ using some quantile function $F^{-1}(p)$ (e.g. \cite{fisher1932statistical}, \cite{pearson1933method}, \cite{stoufferetal1949american}, \cite{lancaster1961combination}, \cite{edgington1972additive}, \cite{mudholkar1977logit}, \cite{heardrubin2018choosing}, \cite{wilson2019harmonic}, and \cite{cinarviechtbauer2022poolr}). The former case takes the general form
\begin{equation} \label{MC:eq:orderpool}
  \ordpool{\ve{p}}{k} = \sum_{l = k}^M {M \choose l}p_{(k)}^l (1 - p_{(k)})^{M-l},
\end{equation}
and the latter
\begin{equation} \label{MC:eq:quantilePval}
  \genpool(\ve{p}) = 1 - F_M \left ( \sum_{i = 1}^M c_i F^{-1} ( 1 - p_i ) \right )
\end{equation}
where $c_1, \dots, c_M \in \Reals$ are known constants, typically $c_1 = \dots c_M = 1$. Equation (\ref{MC:eq:orderpool}) gives the pooled $p$-value based on $l_{Ord}(\ve{p};k) = p_{(k)}$, as with $\tippool(\ve{p}) = \ordpool{\ve{p}}{1} = 1 - (1 - p_{(1)})^M$ (\citealp{tippett1931methods}), while different choices of $F$ and $F_M$ in Equation (\ref{MC:eq:quantilePval}) give the pooled $p$-value based on $l(\ve{p}) = \sum_{i = 1}^M F^{-1}(1 - p_i)$.

Obvious choices include the normal and gamma families, as these are closed under addition. For example, letting $\Phi$ be the $N(0,1)$ CDF and choosing $F(x) = \Phi(x)$ and $F_M(x) = \Phi(x/\sqrt{M})$ gives
\begin{equation} \label{MC:eq:stoufferpool}
  \stopool(\ve{p}) = 1 - \Phi \left ( \sum_{i = 1}^M \Phi^{-1}(1 - p_i)/\sqrt{M} \right)
\end{equation}
based on $l_{Sto}(\ve{p}) = \sum_{i = 1}^M \Phi(1 - p_i)$ from \cite{stoufferetal1949american}. Letting $G_{k, \theta}(x)$ be the CDF of the gamma distribution with shape parameter $k$ and scale parameter $\theta$, taking $F(x) = G_{k,\theta}(x)$ and $F_M(x) = G_{Mk,\theta}(x)$ gives
\begin{equation} \label{MC:eq:gammaPool}
  \gampool(\ve{p}) = 1 - G_{Mk, \theta} \left ( \sum_{i = 1}^M G_{k, \theta}^{-1}(1 - p_i) \right )
\end{equation}
based on $l_{gam}(\ve{p}) = \sum_{i = 1}^M G_{k, \theta}^{-1}(1 - p_i)$. $\gampool$ requires the choice of a parameter $\theta$; choosing $\theta = 1$ gives the gamma method from \cite{zaykinetal2007combining} while $k = 1$ and $\theta = 2$ gives Fisher's method from \cite{fisher1932statistical}.\footnote{Should the $p$-values be weighted for some reason, the gamma distribution also allows more stable weighting than the constants $c_1, \dots, c_M$ in Equation (\ref{MC:eq:quantilePval}) by analogously giving each $p_i$ an individual shape parameter $k_i$ (or equivalently $\chi^2$ degrees of freedom $\kappa_i$) (\citealp{lancaster1961combination}).}

R. A. Fisher's method deserves some additional consideration alongside an analogous proposal from Karl Pearson around the same time\footnote{\cite{owen2009karl} notes that Pearson's proposal is actually slightly different than has been credited in the literature following \cite{birnbaum1954combining}. Though the use of $\sum_{i = 1}^M \ln(p_i)$ was suggested by Karl Pearson, it was in the context of comparing the value to $\sum_{i = 1}^M \ln(1 - p_i)$ and taking the minimum of the two. \cite{owen2009karl} develops this idea into a series of pooling functions that perform best for concordant or discordant effect estimates in a regression setting.}. Both of $l_{Fis}(\ve{p}) = -2 \sum_{i = 1}^M \ln p_i$ (\citealp{fisher1932statistical}) and $l_{Pea}(\ve{p}) = -2 \sum_{i = 1}^M \ln (1 - p_i)$ (\citealp{pearson1933method}) were originally proposed only as computational tricks for the distribution of $\prod_{i = 1}^M p_i$ (\citealp{wallis1942compounding}), but are also quantile transformations based on the $\chi_{2}^2$ distribution. Let
\begin{equation} \label{MC:eq:chiF}
  F_{\chi}(x; \kappa) = \int_0^x \frac{1}{2^{\kappa/2} \Gamma(\kappa / 2)} t^{\kappa/2 - 1} e^{-t/2} dt,
\end{equation}
be the CDF of the $\chi^2_{\kappa}$ distribution, in particular $F_{\chi}(x; 2) = 1 - e^{-x/2}$.
Therefore, $F^{-1}_{\chi}(1 - p; 2) = -2 \ln p$ and so taking $F(x) = F_{\chi}(x; 2)$ and $F_M(x) = F_{\chi}(x; 2M)$ gives
$$\fispool(\ve{p}) = 1 - F_{\chi} \big ( l_{F}(\ve{p}); 2M \big ) = 1 - F_{\chi} \left ( -2 \sum_{i = 1}^M \ln p_i ; 2M \right ) = 1 - F_{\chi} \left ( \sum_{i = 1}^M F_{\chi}^{-1} (1 - p_i; 2); 2M \right )$$ consistent with Equation (\ref{MC:eq:quantilePval}). In contrast, $l_{Pea}(\ve{p})$ uses lower tail probabilities by taking $F_{\chi}(p_i; 2)$, and so
$$\pearpool(\ve{p}) = F_{\chi} \left ( \sum_{i = 1}^M F_{\chi}^{-1} (p_i; 2); 2M \right )$$
departs from the general quantile transformation equation.

\section{Benchmarking the most powerful test} \label{sec:power}

Alone, $\fispool(\ve{p})$ is preferred to $\pearpool(\ve{p})$, as \cite{birnbaum1954combining} found $\pearpool(\ve{p})$ inadmissible for the alternative hypothesis $H_1$ if the $t_i$ independently follow particular distributions in the exponential family. $\fispool(\ve{p})$, in contrast, was admissible in this setting and is optimal in some sense for others (\citealp{littellfolks1971asymptotic, koziolperlman1978combining}).
Together, the statistics for these two pooled $p$-values are combined in $l_{HR}(\ve{p}; w)  = -\frac{w}{2} l_{Fis}(\ve{p})  + \frac{1 - w}{2} l_{Pea}(\ve{p})$, the UMP statistic under $H_4$ when $f = Beta(a,  1/w + a(1 - 1/w))$ (\citealp{heardrubin2018choosing}). Intuitively, then, $\hrpool{\ve{p}}{w}$ is a test based on a linear combination of the lower and upper tail probabilities of $\ve{p}$ transformed to $\chi^2_2$ quantiles. When $w = 1$ it considers the upper tail alone and when $w = 0$ the lower tail alone. Note that $l_{HR}(\ve{p}; w)$, the statistic, and $\hrpool{\ve{p}}{w}$, the unique corresponding pooled $p$-value, will be used interchangeably thoughout this paper.
%This suggests an avenue for future work, as this same weighted combination could be applied to any quantile transformation.

Unfortunately, this imbues $\hrpool{\ve{p}}{w}$ with some practical shortcomings. While both $l_{Fis}(\ve{p})$ and $l_{Pea}(\ve{p})$ are $\chi^2_{2M}$ distributed under $H_0$, they are not independent and so their combination in $l_{HR}(\ve{p}; w)$ does not have a closed-form distribution. Approximation as in \cite{mudholkar1977logit} or simulation must be used to determine the $\alpha$ quantiles or visualize their distribution. This means, for example, the kernel density estimates of $l_{HR}(\ve{p}; w)$ by $w$ in Figure \ref{fig:lwdens} required the generation of 100,000 independent simulated samples of the case $p_1, \dots, p_{10} \overset{\mathrm{iid}}{\sim} U$.
Further, $H_4$ is the least general of the telescoping alternative hypotheses $H_1 \supset H_2 \supset H_3 \supset H_4$ and is a less natural choice than $H_3$ if only a subset of tests are thought to be significant. Empirical and theoretical investigations show the most powerful test depends on $\prevalence$ and $f$ under $H_3$, so $\hrpool{\ve{p}}{w}$ may be less exceptional under this more general hypothesis.
Finally, $\hrpool{\ve{p}}{w}$ is only UMP if $w$ is known, which is seldom true in practice.

\begin{figure}[!h]
  \begin{center}
    \includegraphics{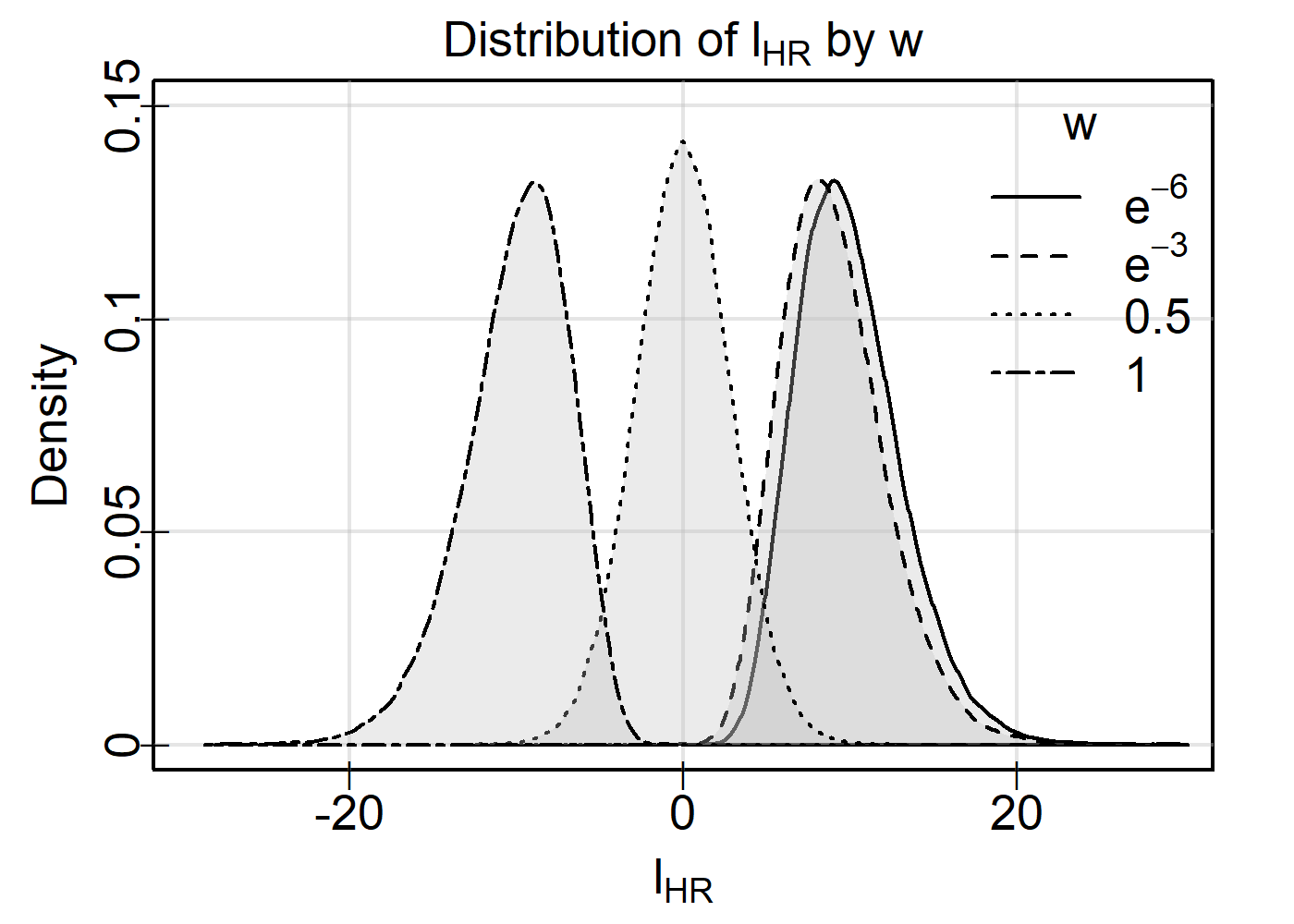}
    \caption{Densities of $l_{HR}$ by $w$ when $M = 10$. Solid lines indicate $w =  e^{-6}$, dashed lines $w = e^{-3}$, dotted lines $w = 1/2$, and dot-dashed lines $w = 1$. Note how $w = 1$ and $w = e^{-6}$ are nearly mirrored distributions skewed away from zero and $w = 1/2$ is symmetric at zero.}
  \label{fig:lwdens}
  \end{center}
\end{figure}

These practical difficulties manifest in two possible errors, assuming $H_4$ when $H_3$ is true and choosing $w$ when the true parameter is $\omega$, and four cases of mis-specification depending on which is present. The power of $\hrpool{\ve{p}}{w}$ under all four cases was investigated by a simulation study at level $\alpha = 0.05$. For both of $H_3$ and $H_4$ and a range of mis-specified $w$, $\hrpool{\ve{p}}{w}$ was applied to factorial combinations of $D(a,w)$, $w$, and $M$ covering their respective ranges. $D(a,w)$ was chosen on the log scale ranging from $-5$ to 5 at 0.5 increments, $w$ was chosen on the log scale at values $-6, -5, \dots, 0$, and the values of $M$ were 2, 5, 10, and 20.

For each of the parameter settings, $l_{HR}(\ve{p}; w)$'s 0.95 quantiles under $H_0$ given $w$ and $M$ are simulated by generating 100,000 independent samples $\ve{p}_i = p_{i1}, \dots, p_{iM} \overset{\mathrm{iid}}{\sim} U$, computing $l_{HR}(\ve{p}_i; w) = l_{HRi}$, and taking the 0.95 quantile of the sequence $l_{HR1}, \dots, l_{HR100,000}$ as the 0.95 quantile of $l_{HR}(\ve{p}; w)$ under $H_0$. Note that the value of $a$ and the case do not impact this simulation, and so these quantiles are used across all $a$ values under both $H_3$ and $H_4$.

Next a Monte Carlo estimate of the probability of rejecting $H_0$ using $l_{HR}(\ve{p}; w)$ (i.e. the power of $l_{HR}(\ve{p}; w)$) is generated for each case. The details of this estimate depend on whether $H_3$ or $H_4$ was used to generate the data and whether $w$ was known or not when choosing $l_w$. To begin, consider the benchmark case when $w$ is known and the data are generated according to $H_4$.

\subsection{Case 1: correct hypothesis and $w$} \label{subsec:h4w}

If the data are generated under $H_4$ and $w$ is chosen correctly, $\hrpool{\ve{p}}{w}$ is UMP and so provides the greatest power of any test. In this case, the probability of rejection for a given $a, w$, and $M$ setting is estimated by generating 10,000 independent samples $\ve{p}_i = p_{i1}, \dots, p_{iM} \overset{\mathrm{iid}}{\sim} Beta(a, 1/w + a(1 - 1/w))$. $l_{HR}(\ve{p}_i; w)$ is computed for each sample and compared to the simulated 0.95 quantile of $l_{HR}(\ve{p}; w)$ under $H_0$\footnote{This corresponds to testing whether $\hrpool{\ve{p}_i}{w} \leq 0.05$.}. If the value is larger than the quantile $H_0$ is correctly rejected and if it is smaller the test incorrectly fails to reject $H_0$. Power is estimated as the proportion of the 10,000 generated samples which correctly lead to the rejection of $H_0$, giving a worst-case standard error less than 0.005 based on the binomial distribution.

\begin{figure}[!h]
  \begin{center}
    \begin{tabular}{cc}
      \includegraphics{./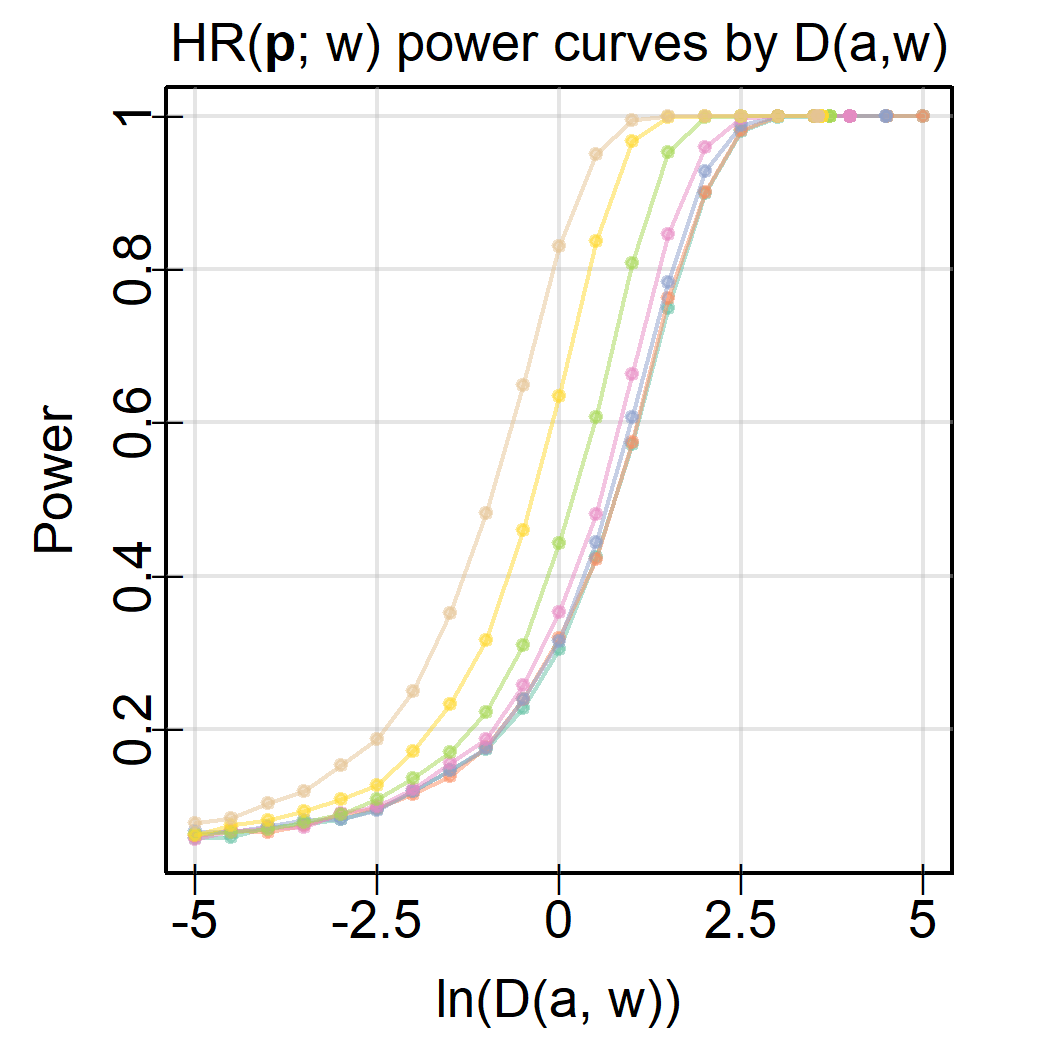} & \includegraphics{./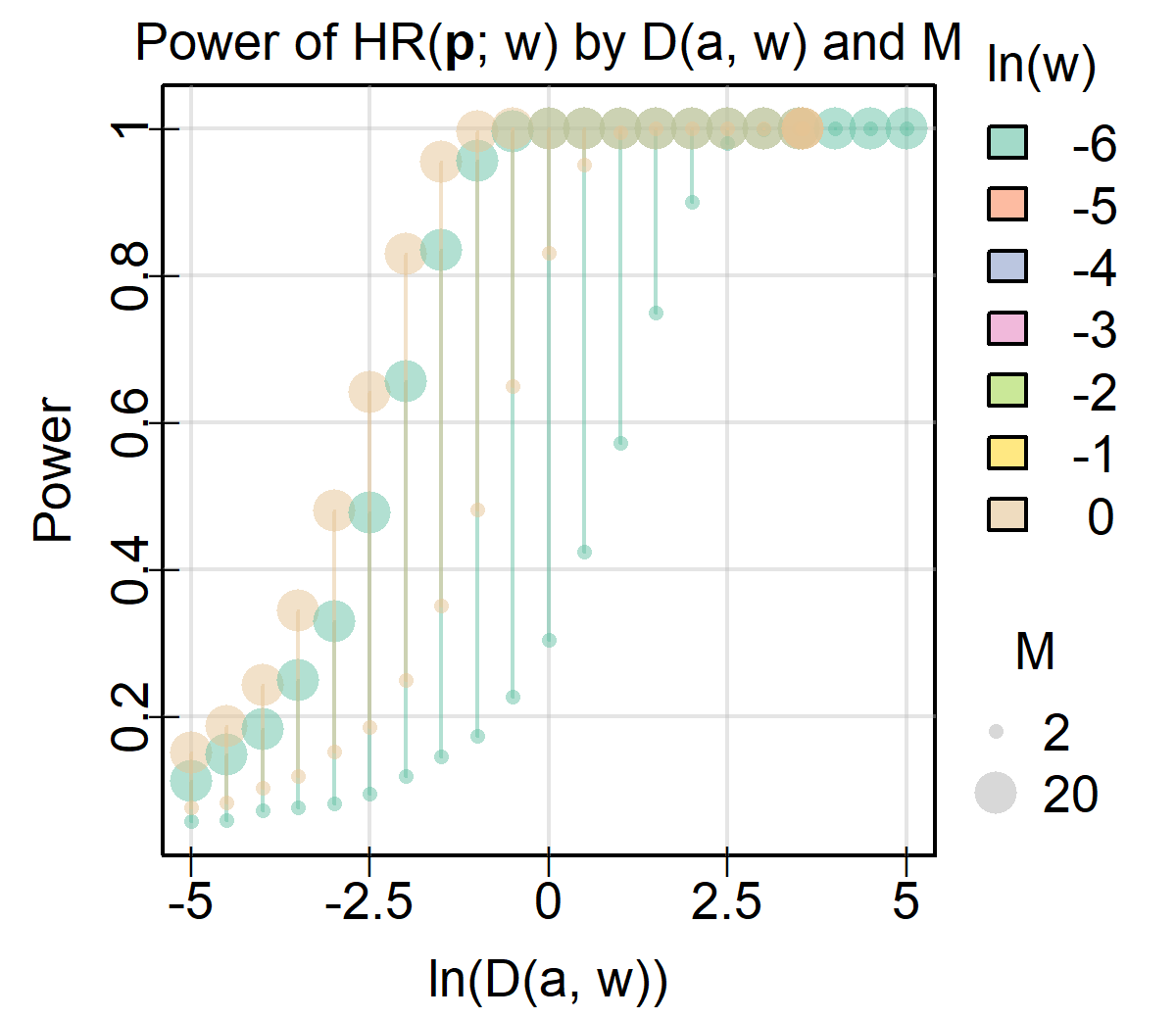} \\
      {\footnotesize (a)} & {\footnotesize (b)} \\
    \end{tabular}
    \caption{Power of $\hrpool{\ve{p}}{w}$ by the KL divergence coloured by $w$ and scaled by $M$ displayed using (a) power curves when $M = 2$ joined by $w$ and (b) lines from $M = 2$ to $M = 20$ for $w = e^{-6}$ and $0$. Increasing either $M$ or the KL divergence increases the power and the greatest rate of change in both occurs when the divergence is in the interval $(e^{-2}, e^2)$.}
  \label{fig:powersByMW}
  \end{center}
\end{figure}

This procedure is applied to all settings of $M$, $w$, and $D(a,w)$ outlined in Section \ref{sec:power}, corresponding to the beta densities of Figure \ref{fig:klDivs}. The beta parameter $a$ was determined for a given $w$ and $D(a,w) = D$ by finding the root of $f(x) = D(x,w) - D$, while the parameter $b$ is given by $1/w + a(1 - 1/w)$. This results in an imbalance in the settings, as $w > e^{-4}$ require $a$ less than the typical floating point minimum value to achieve $D(a,w) = 5$. The impact on the coverage of $D(a,w)$ for each $w$ choice as a result of this, however, was slight as shown by the small gap in plots in Figure \ref{fig:klDivs}.

Figure \ref{fig:powersByMW}(b) shows a scatterplot of the power of $\hrpool{\ve{p}}{w}$ by $D(a, w)$ for every setting when $M = 2$ and $M = 20$, and Figure \ref{fig:powersByMW}(a) shows the power curves of $\hrpool{\ve{p}}{w}$ by $D(a, w)$ coloured by $w$. Generally, power increases in both $M$ (the number of $p$-values) and $D(a,w)$ (the KL divergence), which is unsurprising. Decreasing $D(a,w)$ for $f = Beta(a, 1/w + a(1 - 1/w))$ necessarily gives a density closer to $u$ which is therefore less likely to cause rejection, thus reducing the power. At a certain threshold on $D(a,w)$, $f \approx u$ and so the power will be $\sim \alpha$ for all $D(a, w)$ less than the threshold. Similarly, when $D(a,w)$ is large enough, rejection occurs almost certainly and so the power is constant at one. This suggests that most changes in the power of $\hrpool{\ve{p}}{w}$ occur for moderate levels of evidence; when the evidence is too weak or too strong, all pooled $p$-values will perform equally poorly or well. Regardless of $D(a,w)$, increasing the number of $p$-values makes any distributional differences between $f$ and $u$ more easily detectable, as the whole sample KL divergence is given by $M D(a,w)$. This is why the impact of $M$ on the power in Figure \ref{fig:powersByMW}(b) increases in $D(a,w)$.

An interesting feature of Figure \ref{fig:powersByMW}(a) is the ordering of the lines by decreasing $w$ for essentially every KL divergence. This pattern holds almost everywhere with the exception of several crossings of the lowest power curves. Referring to Figure \ref{fig:klDivs}, increasing $w$ for a given $D(a,w)$ increases the magnitude of the density near zero, thereby increasing the probability of extremely small $p$-values. This difference was noted in Section \ref{strPre:KLdiv}, and causes the expected increase in the power of the UMP.

\subsection{Case 2: correct hypothesis with mis-specified $w$} \label{subsec:h4omega}

Of course, the curves of Figure \ref{fig:powersByMW}(a) are not realistic. In practice, $D(a, w)$ is set by the data generating process and we lack the perfect knowledge of $w$ and $f$ needed to attain them. Suppose that $\ve{p}$ is generated under $H_4$ with $f = Beta(a, \beta)$ and $a \in [0, 1]$, $\beta \in [1, \infty)$ but that $\hrpool{\ve{p}}{w}$ is used instead of the correct $\hrpool{\ve{p}}{\omega}$ with $\omega = (1 - a)/(\beta - a)$. Though $\hrpool{\ve{p}}{w}$ is from the same family of tests, it is no longer UMP and so will not match the power achieved by $\hrpool{\ve{p}}{\omega}$.

The reduction of power from using $\hrpool{\ve{p}}{w}$ when the UMP is $\hrpool{\ve{p}}{\omega}$ for each $a$, $w$, $\omega$, and $M$ setting from Section \ref{subsec:h4w} was determined by generating 10,000 independent samples $p_{i1}, \dots, p_{iM} \overset{\mathrm{iid}}{\sim} Beta(a, 1/\omega + a(1 - 1/\omega))$ and computing $\hrpool{\ve{p}_i}{w}$. The proportion of samples rejected based on the simulated null 0.95 quantiles was recorded as the power of $\hrpool{\ve{p}}{w}$ when data were truly generated with the parameter $\omega$. This procedure was repeated for each of $w = e^{-6}, e^{-3}, 1/2,$ and $1$ for every parameter setting. Figure \ref{fig:MisPowers} displays the results when $M = 2$ for $\omega = e^{-6}$ and $1$. As one setting of $w$ matches $\omega$ in this case, the highest curve displays the power of the UMP.

\begin{figure}[!h]
  \begin{center}
    \includegraphics[scale = 1]{./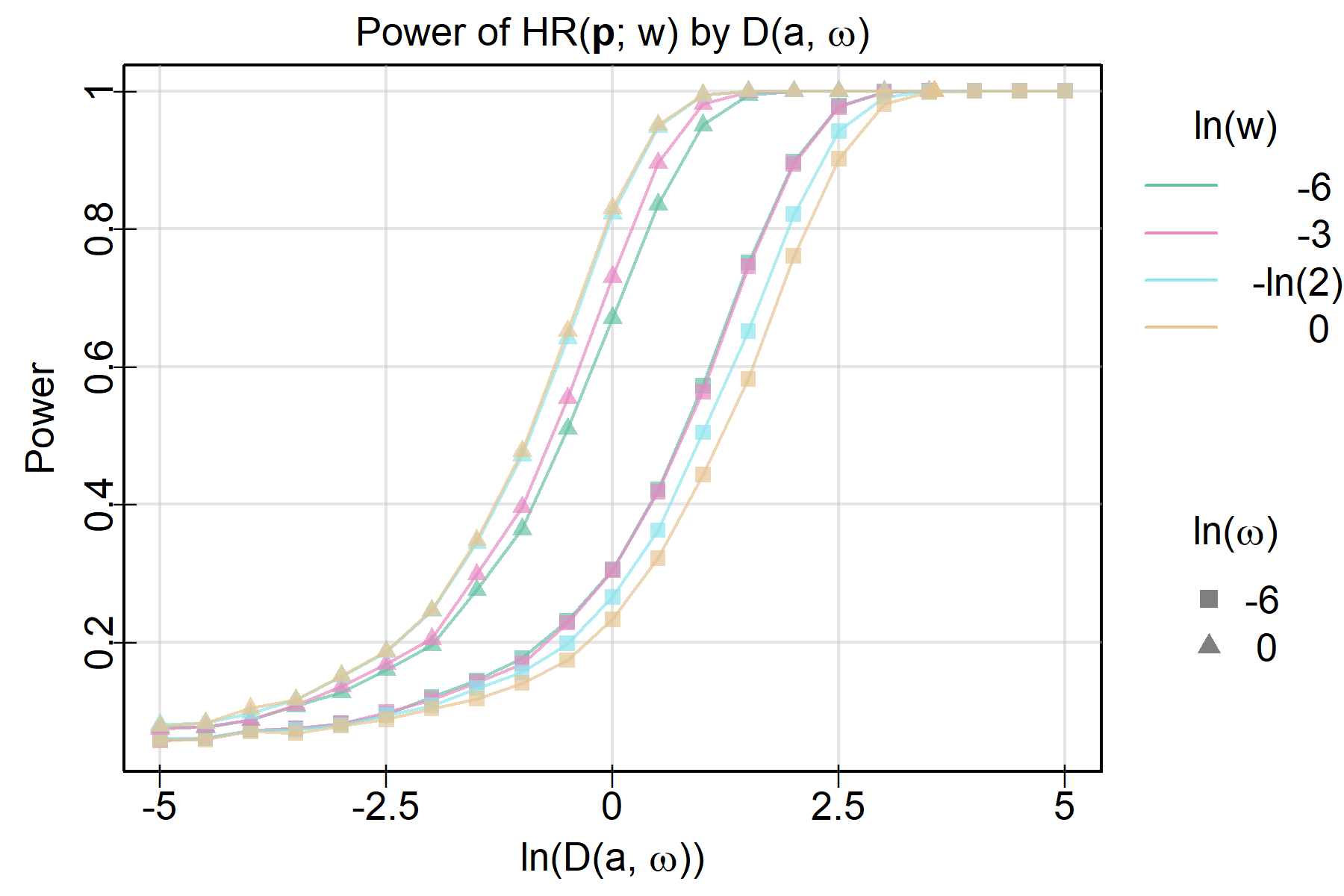}
    \caption{Power curves for $\hrpool{\ve{p}}{w}$ against $D(a, \omega)$ when $M = 2$ with colours giving the value of $w$ and points along the curves giving $\omega$. Mis-specification of $\omega$ has less impact on power than the non-null distribution $f$, but the greater the difference between $w$ and $\omega$, the greater the reduction in power.}
  \label{fig:MisPowers}
  \end{center}
\end{figure}

Two patterns stand out in this plot. First, it is clear that mis-specification impacts the power less than the distribution of $p$-values under $H_4$. Despite the incorrect value $w$ in $\hrpool{}{w}$, the mis-specified curves have the same shape in $D(a, \omega)$ as the UMP $\hrpool{}{\omega}$. In most cases, mis-specification results in only a slight decrease in power.

Second, larger mis-specifications lead to larger decreases in power. The curves for every $w$ are ordered by their distance from $\omega$ for both $\omega = 1$ and $e^{-6}$. When $\omega = 1$, for example, the lines are ordered so $w = 1$ has the greatest power followed closely by $w = 1/2$ and more distantly by $e^{-3}$ and $e^{-6}$. This pattern is reversed when $\omega = e^{-6}$. In both cases, mis-specification has the greatest impact on power for moderate $D(a, \omega)$ while mis-specification has scarcely any impact for large or small values of $D(a,w)$ where all powers converge to 1 or $\alpha$, respectively. It seems prudent, therefore, to choose a middling value such as $w = 1/2$ if $\omega$ is not known but $H_4$ is suspected to avoid the worst impacts of mis-specification when using $\hrpool{\ve{p}}{w}$

\subsection{Case 3: incorrect hypothesis, correctly specified $w$} \label{subsec:h3w}

Assuming all $p_i$ are non-null is not always appropriate, however. The investigations of the previous sections are a useful benchmark and exploration of the impact of mi-specification, but do not address the natural case of $H_3$ when a handful of significant variables are assumed to exist in a host of insignificant ones. We therefore repeat the benchmark experiment of Section \ref{subsec:h4w} under $H_3$ by generating 10,000 independent samples $\ve{p}_i$ of size $M = 10$ with the first $M\prevalence$ distributed according to $f = Beta(a, 1/w + a(1 - 1/w))$ and the latter $M(1 - \prevalence)$ distributed according to $U$ for $\prevalence \in \{0, 0.1, \dots, 1\}$ under each of the settings explored previously. As $\hrpool{\ve{p}}{w}$ is symmetric in all of its arguments, this gives no loss of generality. Figure \ref{fig:H3power} displays contours of the power surface, the proportion of correct rejections of $H_0$, as a function of $\prevalence$ and $D(a,w)$ facetted by $\ln(w)$.

\begin{figure}[!h]
  \begin{center}
    \includegraphics{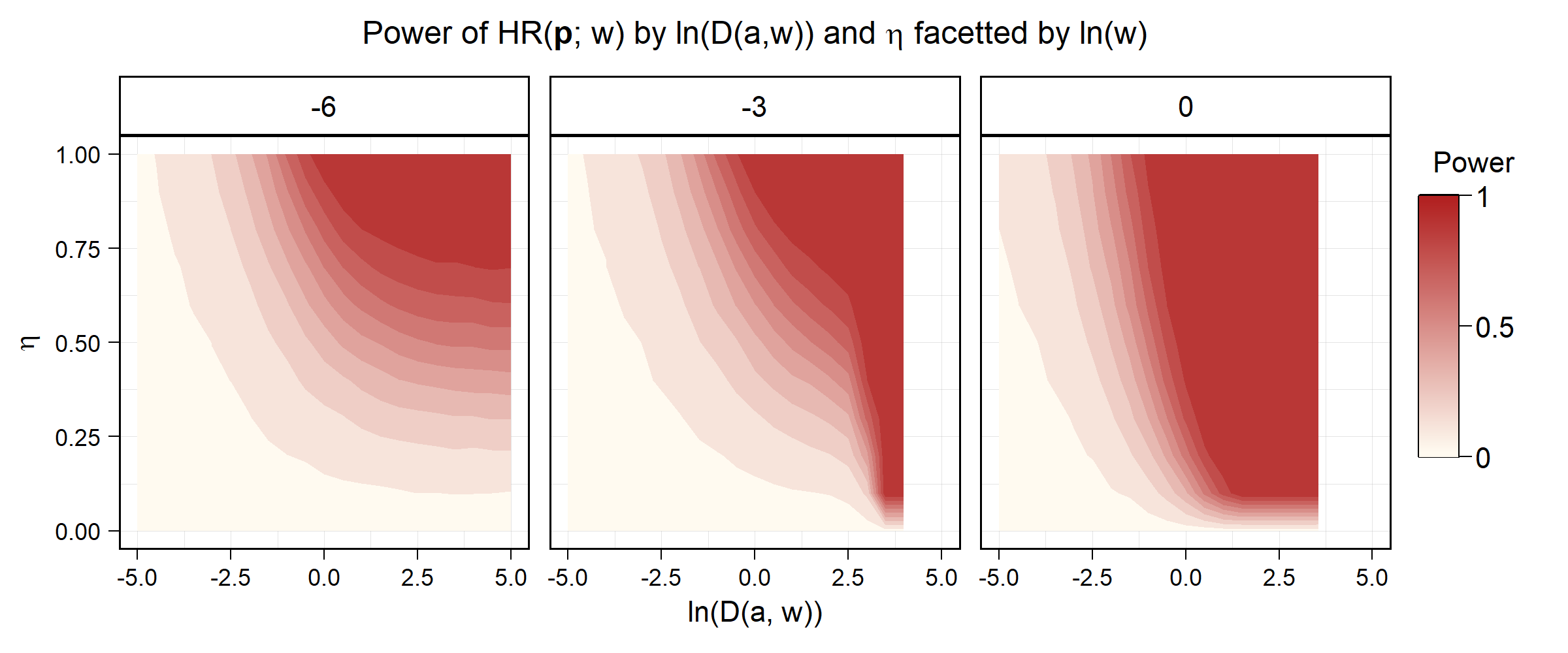}
  \caption{The power of $\hrpool{\ve{p}}{w}$ under $H_3$ by $D(a,w)$ and $\prevalence$ split by $\ln(w)$. Note how the contour for a power of one extends nearly to the bottom of the plot when $w = 1$, but stops near 0.75 when $w = e^{-6}$.}
  \label{fig:H3power}
  \end{center}
\end{figure}

Figure \ref{fig:H3power} places the measure of strength of evidence horizontally and the measure of prevalence vertically. Including $\prevalence = 0$ captures the behaviour of $\hrpool{\ve{p}}{w}$ under $H_0$ along the bottom edge of the power surface and including $\prevalence = 1$ captures its behaviour under $H_4$ along the top edge for a range of KL divergences. In particular, this means the top left part of each subplot corresponds to a generative process for $\ve{p}$ with relatively weak evidence in all $p$-values of $\ve{p}$ and the bottom right part corresponds to a generative process with strong evidence concentrated in a small number of $p$-values. Figures \ref{fig:powersByMW} and \ref{fig:MisPowers} show that rejection occurs at a rate of $\alpha$ once the evidence is weak enough, so the top left corner is less interesting than the top centre, which gives the power for tests of moderate strength spread throughout $\ve{p}$.

When $w$ is small and $l_{HR}(\ve{p}; w) \approx l_{Pea}(\ve{p})/2$, $\hrpool{\ve{p}}{w}$ is relatively weak when strong evidence is concentrated in a few tests. The contours for $w = e^{-6}$ are nearly horizontal for log KL divergences between 2.5 and 5 and the power is nearly identical in the bottom right and the top left corners. On the other hand, when $w \approx 1$ and $l_w(\ve{p}) \approx -l_{Fis}(\ve{p})/2$, $\hrpool{\ve{p}}{w}$ is relatively powerful when evidence is strong and concentrated in a few tests. This is starkly visible when $w = 1$, the power contours are nearly vertical, and the bottom right corner has a power of almost 1. Between these extremes, the power contours display a mix of these seemingly oppositional sensitivities to strength and prevalence.

\subsection{Case 4: when both the hypothesis and $w$ are incorrect} \label{subsec:h3omega}

Finally, consider the most pessimistic case. In the preceding section, $w$ at least matched the non-null distribution $f$ under $H_3$, but this may not always be so. Suppose, instead, everything is mis-specified.
That is, generate $M\prevalence$ $p$-values according to $f = Beta(a, 1/\omega + a(1 - 1/\omega))$ and $M(1 - \prevalence)$ from the uniform distribution and compute $\hrpool{\ve{p}}{w}$ for each of $w = e^{-6}, e^{-3}, 1/2,$ and $1$ for every setting from Section \ref{subsec:h3w} and repeat this generation 10,000 times. The power, given by proportion of rejections, follows an interesting pattern in the strength and prevalence of evidence, which is illustrated by the power contours of $\hrpool{\ve{p}}{1}$ in Figure \ref{fig:H3FisConts} and the difference in power contours in Figure \ref{fig:H3DiffConts}.

\begin{figure}[!ht]
  \begin{center}
    \includegraphics{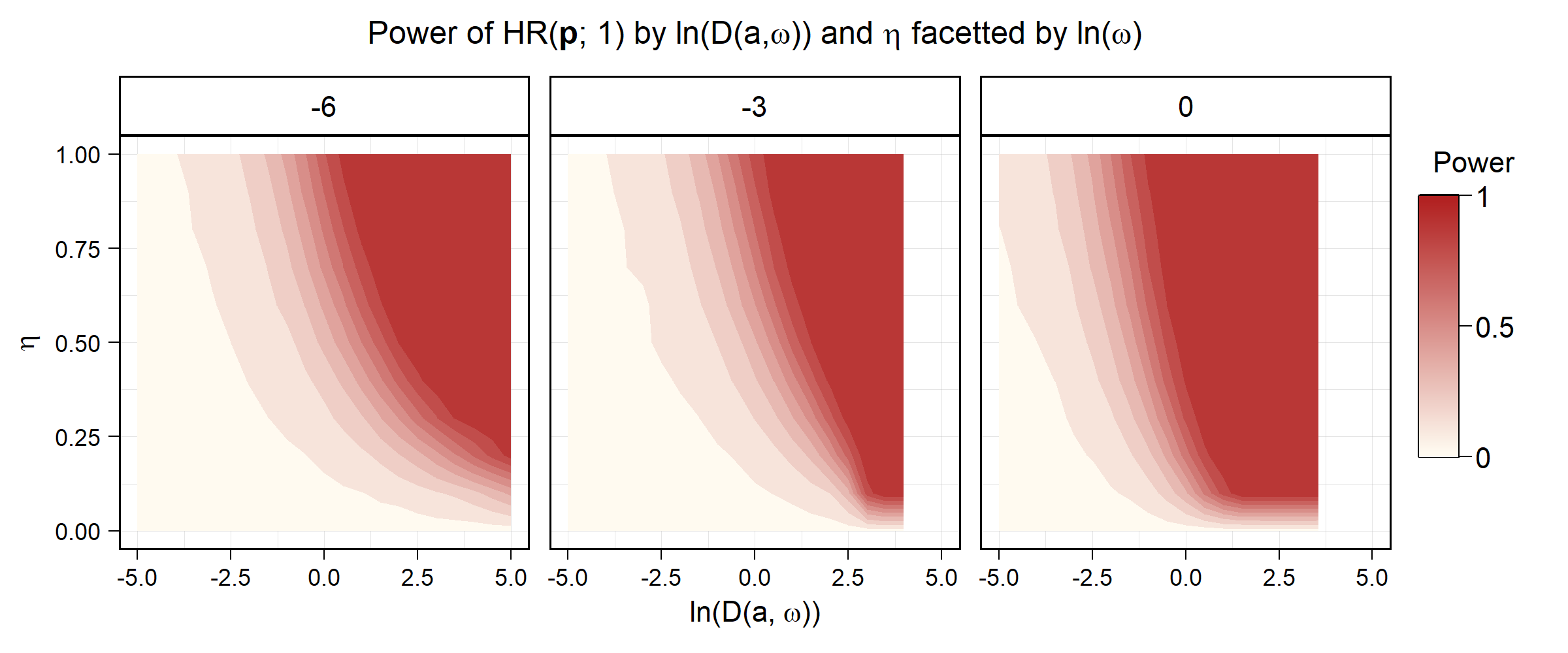}
        \caption{Power contours of $\hrpool{\ve{p}}{1}$ under $H_3$ by $D(a,\omega)$ and $\prevalence$ displayed using the saturation palette of Figure \ref{fig:H3power}.}
    \label{fig:H3FisConts}
  \end{center}
\end{figure}

\begin{figure}[!ht]
  \begin{center}
    \includegraphics{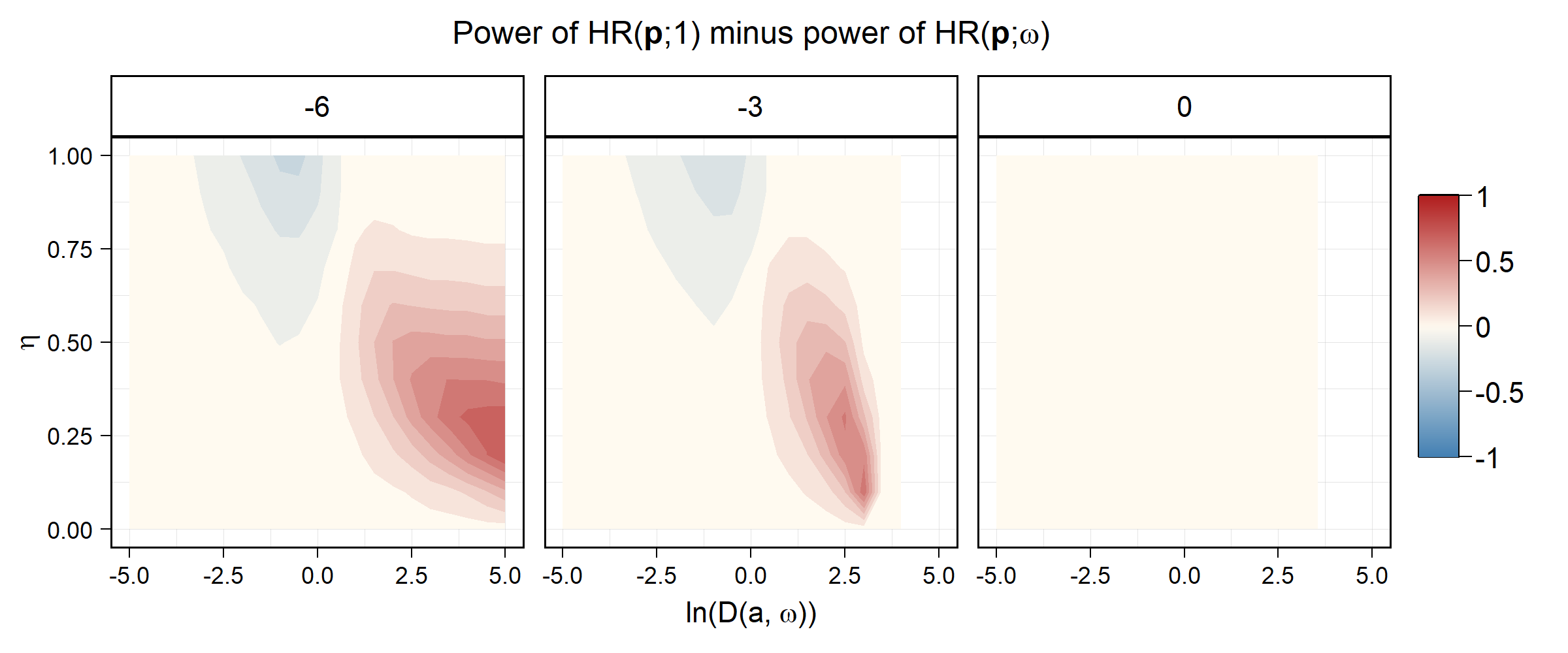}
    \caption{Contours of the difference in power of $\hrpool{\ve{p}}{1}$ and $\hrpool{\ve{p}}{\omega}$ under $H_3$ by $D(a,w)$ and $\prevalence$, displayed using a divergent palette. $\hrpool{\ve{p}}{1}$ is more powerful for strong evidence concentrated in a few tests and less powerful when weak evidence is spread among most tests.}
    \label{fig:H3DiffConts}
  \end{center}
\end{figure}

Compared to the power contours of $\hrpool{\ve{p}}{\omega}$ in Figure \ref{fig:H3power}, the contours of $\hrpool{\ve{p}}{1}$ in Figure \ref{fig:H3FisConts} show greater red saturation -- and thus greater power -- in the lower right corner of every panel. Thus, $\hrpool{\ve{p}}{1}$ has power greater than or equal to that of $\hrpool{\ve{p}}{\omega}$ in this region for every $\omega$. The magnitude of this improvement is unclear, however, due to the difficulty comparing the contours between plots. Figure \ref{fig:H3DiffConts} facilitates the comparison by plotting the difference between the power contours of $\hrpool{\ve{p}}{1}$ and $\hrpool{\ve{p}}{\omega}$ directly for all settings. This more precise plot demonstrates that the most powerful $\hrpool{\ve{p}}{\omega}$ to test $H_3$ against $H_0$ depends on the the strength and prevalence of evidence alone, \emph{not} $\omega$.

Specifically, Figure \ref{fig:H3DiffConts} shows that $\hrpool{\ve{p}}{1}$ is more powerful than the correctly-specified $\hrpool{\ve{p}}{\omega}$ in the lower right corner of the $D(a,\omega)$, $\prevalence$ space and does worse left of centre at the top for both $\omega = e^{-6}$ and $e^{-3}$. In the final panel where $\omega = 1$, $\hrpool{\ve{p}}{1} = \hrpool{\ve{p}}{\omega}$ and so the difference in their powers is zero everywhere. Recalling the interpretation of these regions for these facets, this indicates $\hrpool{\ve{p}}{1}$ is more powerful than $\hrpool{\ve{p}}{\omega}$ at testing $H_3$ against $H_0$ when strong evidence is concentrated in a few tests, but is less powerful when evidence is weaker and spread widely. This pattern holds for every $\omega$, albeit with differences in magnitude. Additionally, it is not symmetric: when $\hrpool{\ve{p}}{1}$ is more powerful, the magnitude of the difference tends to be greater than in regions where it is less powerful.

This parallels \cite{loughin2004systematic}, who found $\fispool(\ve{p})$ more powerful than other alternatives against $H_3$ when strong evidence was concentrated in a few tests. As $l_{Fis}(\ve{p}) \propto l_{HR}(\ve{p}; 1) \implies \fispool(\ve{p}) = \hrpool{\ve{p}}{1}$, this means that Fisher's method is once again relatively powerful for the same setting among the UMP family of tests $\hrpool{\ve{p}}{\omega}$. The consistency of this result in both simulation studies warrants further investigation. To aid in this, marginal and central rejection levels are introduced to capture the tendency of a test to reject weak evidence spread among all tests and strong evidence concentrated in a few, along with a meaningful quotient that combines the them.

%\begin{figure}[!h]
%  \begin{center}
%  \begin{tabular}{cc}
%    \includegraphics{proplwBest} & \includegraphics{propFishBest} \\
%    {\footnotesize (a)} & {\footnotesize (b)} \\
%  \end{tabular}
%  \caption{Heatmaps displaying the proportion of $a$ settings where the normal approximation to a binomial test of the difference in power estimates found no difference between the maximum power and (a) $l_{\omega}$ and (b) $l_1$ i.e. $-l_F/2$.}
%  \label{fig:H3MisWmaxmap}
%  \end{center}
%\end{figure}

\section{Central and marginal rejection levels in pooled $p$-values} \label{rejlev}

Recall that any pooled $p$-value $g(\ve{p})$ behaves like a univariate $p$-value, that is $g(\ve{p}) \in [0,1]$ and under $H_0$, $g(\ve{p}) \sim U$. It may be based on order statistics and use Equation (\ref{MC:eq:orderpool}),
$$\ordpool{\ve{p}}{k} = \sum_{l = k}^M {M \choose l}p_{(k)}^l (1 - p_{(k)})^{M-l},$$
or it may be based on the quantile transformations of Equation (\ref{MC:eq:quantilePval}),
$$g(\ve{p}) = 1 - F_M \left ( \sum_{i = 1}^M F^{-1} ( 1 - p_i ) \right ).$$
In either case, $g(\ve{p})$ must be non-decreasing in every argument if it is to create convex acceptance regions and therefore be admissible\footnote{See \cite{birnbaum1954combining} and \cite{owen2009karl} for a detailed discussion of admissibility and convexity.}, continuous in every $p_i$ if it is to have well-defined rejection boundaries, and symmetric in $p_i$ if no margin is to be favoured. Given these common properties, concepts of marginal and central rejection can be defined in order to describe rejection in the cases of evidence against $H_0$ concentrated in one test and evidence against $H_0$ spread among all tests, respectively.

These correspond to the largest $p$-values for a given FWER $\alpha$ which still result in rejection in two separate cases. The first of these, the marginal level of the pooled $p$-value, is the largest value of the minimum $p$-value which still leads to rejection at level $\alpha$. The second, the central level of the pooled $p$-value, is the largest value which all $p$-values can take simultaneously while still resulting in rejection.

\subsection{Characterizing central behaviour} \label{MC:pool:pc}

The simulations of Section \ref{sec:power} suggest a pooled $p$-value $g(\ve{p})$ which is powerful at rejecting weak evidence spread among all $p$-values will reject $H_0$ when all tests give relatively large $p$-values compared to $\alpha$. Under the rejection rule $g(\ve{p}) \leq \alpha$, this is captured by the largest $p_c$ such that $g(\ve{p}_c) = \alpha$ for  $\ve{p}_c = \tr{(p_c, \dots, p_c)}$. Explicitly:
\begin{definition}[The central rejection level]
  For a pooled $p$-value $g(\ve{p})$, the central rejection level, $p_c$, is the largest $p$-value for which $g(\ve{p}) \leq \alpha$ when $\ve{p} = \tr{(p_c, \dots, p_c)}$. That is
  \begin{equation} \label{MC:eq:centralRejectionDef}
    p_c(g) = \sup \big\{ p \in [0,1]: g(p, p, \dots, p) \leq \alpha \big\}
  \end{equation}
\end{definition}
$p_c$ quantifies the maximum $p$-value shared by all tests which still leads to rejection and therefore is directly related to the power of $g(\ve{p})$ along $\ve{p}_c$. As $g(\ve{p})$ is non-decreasing, rejection occurs for any $\ve{p}$ in the hypercube $[0, p_c]^M$ and so a larger $p_c$ implies rejection for a larger volume of $[0,1]^M$. It also suggests pooled $p$-values smaller than $p_c$ within this hypercube if $g(\ve{p})$ is continuous and monotonic.

This general definition can be applied to the pooled $p$-values of Equations (\ref{MC:eq:orderpool}) and (\ref{MC:eq:quantilePval}) to obtain simple expressions for $p_c$.
\begin{proposition}[The central rejection level of order statistics]
  $p_c(\ordpool{\ve{p}}{k})$ is given by the largest $p \in [0,1]$ which satisfies
  \begin{equation} \label{MC:eq:ordpc}
    \sum_{l = k}^M {M \choose l} p^l (1 - p)^{M-l} \leq \alpha.
  \end{equation}
\end{proposition}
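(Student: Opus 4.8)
The plan is to prove this by directly substituting the constant vector into the definition of the central rejection level in Equation (\ref{MC:eq:centralRejectionDef}), and then adding a short continuity argument to upgrade the supremum there to a maximum, so that the phrase ``the largest $p$'' is justified. There is no real machinery needed beyond Equation (\ref{MC:eq:orderpool}) and the observation that $\ordpool{\ve{p}}{k}$ is continuous in its argument.

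First I would evaluate $\ordpool{\ve{p}}{k}$ along the diagonal $\ve{p}_c = \tr{(p, \dots, p)}$. Because every coordinate equals $p$, the $k$th order statistic is $p_{(k)} = p$, so Equation (\ref{MC:eq:orderpool}) collapses to
\begin{equation*}
  \ordpool{\ve{p}_c}{k} = \sum_{l = k}^M {M \choose l} p^l (1 - p)^{M - l},
\end{equation*}
which is exactly $\Pr(X \geq k)$ for $X \sim Binomial(M, p)$. Plugging this into the definition $p_c(\ordpool{\ve{p}}{k}) = \sup\{p \in [0,1] : \ordpool{(p,\dots,p)}{k} \leq \alpha\}$ gives immediately that $p_c$ is the supremum of the set of $p \in [0,1]$ satisfying the inequality in Equation (\ref{MC:eq:ordpc}).

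It then remains to show this supremum is attained. Writing $S = \{p \in [0,1] : \sum_{l=k}^M {M \choose l} p^l (1-p)^{M-l} \leq \alpha\}$, the left-hand side is a polynomial in $p$, hence continuous on $[0,1]$, so $S$ is closed; being a closed subset of $[0,1]$ it is compact and contains $\sup S$. ($S$ is nonempty since $p = 0$ makes the sum vanish for $k \geq 1$.) Hence $p_c$ equals the largest element of $S$, as claimed. I would add the (optional) remark that $p \mapsto \sum_{l=k}^M {M \choose l} p^l (1-p)^{M-l}$ is non-decreasing, since it is $\Pr(X \geq k)$ for a $Binomial(M,p)$ family that is stochastically increasing in $p$; this makes $S = [0, p_c]$ an interval, consistent with the convex acceptance region of $\ordpool{\ve{p}}{k}$. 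The only thing needing care in the whole argument is the identification $p_{(k)} = p$ on the diagonal together with this routine closedness argument — there is no substantive obstacle, and if one wants the equality form $\sum_{l=k}^M {M \choose l} p_c^l (1-p_c)^{M-l} = \alpha$ for $p_c \in (0,1)$, it follows from strict monotonicity on the interior, though the proposition as stated requires only the inequality.
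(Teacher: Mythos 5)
Your proof is correct and follows essentially the same route as the paper's: substitute the constant vector so that $p_{(k)} = p$, expand Equation (\ref{MC:eq:orderpool}) into the binomial tail sum, and read off $p_c$ from the definition. The additional closedness/compactness argument showing the supremum is attained is a small bit of extra rigor the paper omits, but it does not change the substance of the argument.
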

\begin{proof}
  The definition of $p_c$ forces $p_{(1)} = p_{(2)} = \dots = p_{(M)} = p_c$. Therefore $p_{(k)} = p_c$ and so $p_c$ is the largest $p \in [0,1]$ which satisfies $\ordpool{(p, \dots, p)}{k} \leq \alpha$. Expanding $\ordpool{(p, \dots, p)}{k}$ gives Equation (\ref{MC:eq:ordpc}).
\end{proof}
While this cannot generally be solved for a closed-form $p_c$, the particular cases of $\ordpool{\ve{p}}{1} = \tippool(\ve{p})$ and $\ordpool{\ve{p}}{M}$ admit
\begin{equation} \label{MC:eq:tippettpc}
  p_c(\tippool) = 1 - (1 - \alpha)^{\frac{1}{M}}
\end{equation}
and
\begin{equation} \label{MC:eq:maxOrdpc}
  p_c(\ordpool{}{M}) = \alpha^{\frac{1}{M}}
\end{equation}
respectively. Also note that $p_c(\ordpool{}{k})$ defines a constant rejection boundary around the regions of $[0,1]^M$ where $k - 1$ $p$-values are less than or equal to $p_c$, $M-k$ elements are greater, and exactly one is equal to $p_c$. In particular, this means that $p_c(\tippool) = p_c(\ordpool{}{1})$ is constant along each margin, as $M-1$ points are greater than $p_c(\tippool)$ along each margin. Next, consider $p_c$ for the general quantile transformation of Equation \ref{MC:eq:quantilePval}.

\begin{proposition}[The central rejection level of quantile pooled $p$-values]
  Given a pooled $p$-value based on quantile transformations as in Equation (\ref{MC:eq:quantilePval}),
  $$g(\ve{p}) = 1 - F_M \left ( \sum_{i = 1}^M c_i F^{-1} ( 1 - p_i ) \right ),$$
  the central rejection level is given by
  \begin{equation} \label{MC:eq:quantileEqual}
    p_c(g(\ve{p})) = 1 - F \left ( \frac{1}{\sum_{i = 1}^M c_i} F_{M}^{-1}(1 - \alpha) \right )
  \end{equation}
  if $F$ and $F_M$ are continuous CDFs.
\end{proposition}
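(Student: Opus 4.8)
The plan is to reduce the multivariate condition defining $p_c$ to a one-dimensional inequality by exploiting the constant argument $\ve{p} = \tr{(p, \dots, p)}$, and then to invert the two continuous CDFs one at a time. First I would substitute $p_1 = \dots = p_M = p$ into Equation (\ref{MC:eq:quantilePval}); the sum collapses to $\sum_{i=1}^M c_i F^{-1}(1-p) = \left(\sum_{i=1}^M c_i\right) F^{-1}(1-p)$, so that $g(p, \dots, p) = 1 - F_M\!\left( \left(\sum_{i=1}^M c_i\right) F^{-1}(1-p) \right)$. Everything then hinges on solving $g(p,\dots,p) \leq \alpha$ for the largest admissible $p$.

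Next I would unwind the rejection condition through the two CDFs. Since $F_M$ is a continuous CDF, $1 - F_M(x) \leq \alpha$ is equivalent to $F_M(x) \geq 1 - \alpha$, which is equivalent to $x \geq F_M^{-1}(1-\alpha)$, where $F_M^{-1}$ is the usual quantile function; continuity of $F_M$ is exactly what upgrades this to an equivalence rather than a one-sided implication. Assuming $\sum_{i=1}^M c_i > 0$ — which holds in the canonical case $c_1 = \dots = c_M = 1$, where the constant is $M$ — dividing gives $F^{-1}(1-p) \geq \frac{1}{\sum_{i=1}^M c_i} F_M^{-1}(1-\alpha)$. Applying the non-decreasing $F$ to both sides and invoking continuity of $F$ once more, this is equivalent to $1 - p \geq F\!\left( \frac{1}{\sum_{i=1}^M c_i} F_M^{-1}(1-\alpha) \right)$, i.e. $p \leq 1 - F\!\left( \frac{1}{\sum_{i=1}^M c_i} F_M^{-1}(1-\alpha) \right)$.

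I would then conclude by observing that the set $\{ p \in [0,1] : g(p,\dots,p) \leq \alpha \}$ is precisely the interval from $0$ up to the right-hand side above — monotonicity of $g$ in each argument guarantees it is an interval containing $0$ — so its supremum, and hence $p_c(g)$, equals the claimed value. Continuity further guarantees the supremum is attained, so $g(\ve{p}_c) = \alpha$ exactly.

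The only real obstacle I anticipate is the careful handling of the generalized inverses: the equivalences $F_M(x) \geq y \Leftrightarrow x \geq F_M^{-1}(y)$ and $F(x) \geq y \Leftrightarrow x \geq F^{-1}(y)$ genuinely require $F$ and $F_M$ to be continuous (and well-behaved on the relevant range) to rule out the flat-spot ambiguities that would otherwise leave only a one-directional inclusion; I would flag that the continuity hypothesis is used precisely at these two steps, together with the mild side condition $\sum_{i=1}^M c_i > 0$ needed to divide through without reversing the inequality. Everything else is routine substitution and monotone bookkeeping.
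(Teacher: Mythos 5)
Your proof is correct and takes essentially the same route as the paper's: substitute the constant vector, drop the supremum by continuity, and invert $F_M$ and $F$ in turn to solve for $p_c$. You are in fact somewhat more careful than the paper, explicitly flagging the side condition $\sum_{i=1}^M c_i > 0$ and the exact points where continuity is needed to make the inversions two-sided equivalences.
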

\begin{proof}
As $F$ and $F_M$ are CDFs , they are monotonically non-decreasing real-valued functions over their ranges. If $F$ and $F_M$ are also continuous, then $F^{-1}$ and $F_M^{-1}$ are continuous. Therefore, we can drop the supremum from Equation \ref{MC:eq:centralRejectionDef} and consider the equality
$$\alpha = g(p_c, p_c, \dots, p_c).$$
Expanding $g$ and solving for $p_c$ implies
\begin{equation*} %\label{MC:eq:quantileEqual}
  p_c = 1 - F \left ( \frac{1}{\sum_{i = 1}^M c_i} F_{M}^{-1}(1 - \alpha) \right ).
\end{equation*}
\end{proof}
If the $p$-values are unweighted, then $c_1 = c_2 = \dots = c_M = 1$, $\sum_{i = 1}^M c_i = M$ and the behaviour of $p_c$ depends on the relative growth of $F_M^{-1}$ in $M$. If $F_{M}^{-1}(1 - \alpha)$ grows in $M$ such that $\frac{1}{M} F^{-1}_M (1 - \alpha)$ is unbounded, $p_c$ will go to zero. If, on the other hand, $\lim_{M \rightarrow \infty} \frac{1}{M} F^{-1}_M (1 - \alpha) = c < \infty$, $p_c$ will go to $1 - F(c)$. This provides a general expression for the soft truncation threshold of \cite{zaykinetal2007combining} and suggests interesting asymptotic behaviour for pooled $p$-values based on quantile functions along the line $p_1 = p_2 = \dots = p_M$.

This behaviour can be demonstrated concretely for several quantile transformations. \cite{stoufferetal1949american} takes $F = \Phi$ and $F_M = \sqrt{M} \Phi$\footnote{Though it uses scaling in computation to avoid this, the distribution is the same.} in Equation \ref{MC:eq:quantilePval} to give $\stopool(\ve{p})$ and so
$$\lim_{M \rightarrow \infty} \frac{1}{M} F^{-1}_M (1 - \alpha) = \lim_{M \rightarrow \infty} \frac{1}{\sqrt{M}} \Phi^{-1}(1 - \alpha) = 0$$
for all $\alpha > 0$. This implies
\begin{equation} \label{MC:eq:stoufferEqual}
  \lim_{M \rightarrow \infty} p_c(\stopool) = 1 - \Phi(0) = \frac{1}{2}.
\end{equation}
Indeed, taking $\stopool(\ve{p})$, substituting $p_1 = \dots = p_M = p_c(\stopool)$, and taking the limit gives
$$\lim_{M \rightarrow \infty} 1 - \Phi \left ( \frac{1}{\sqrt{M}} \sum_{i = 1}^M \Phi^{-1} (1 - p_c(\stopool)) \right ) = 1 - \Phi \left ( \lim_{M \rightarrow \infty} \sqrt{M} \Phi^{-1} (1 - p_c(\stopool)) \right ).$$
Evaluating further gives
\begin{equation*}
  \lim_{M \rightarrow \infty} \sqrt{M} \Phi^{-1} (1 - p_c) = \begin{cases}
    -\infty & \text{ when } p_c > \frac{1}{2} \\
    0 & \text{ when } p_c = \frac{1}{2} \\
    \infty & \text{ when } p_c < \frac{1}{2},
    \end{cases}
\end{equation*}
and so $\stopool(\ve{p})$ is either $0$, $\frac{1}{2}$, or $1$ for large $M$ when $p_1 \approx p_2 \approx \dots \approx p_M$. Furthermore, it will reject $H_0$ for \emph{any} FWER level $\alpha$ if $p_1, p_2, \dots, p_M$ are all less than $\frac{1}{2}$ when $M$ is large enough. %While such a result is indeed very unlikely under $H_0$, occurring with probability $\frac{1}{2^M}$, the sharp rejection boundary for any $\alpha$ in this central region may be undesirable.

$F_{\chi}(x; \kappa )$ admits similar analysis. By the central limit theorem
$$\lim_{\kappa \rightarrow \infty} \chi^2_{\kappa}  \rightarrow Z \sim N(\kappa, 2\kappa )$$
where $N(\mu, \sigma^2)$ is a normal distribution with mean $\mu$ and variance $\sigma^2$. Therefore, as $M \rightarrow \infty$
\begin{equation*}
  F_{\chi}(x; M\kappa ) \rightarrow \Phi \left ( \frac{x - M\kappa}{\sqrt{2M\kappa}} \right ).
\end{equation*}
This implies that the pooled $p$-value based on the $\chi^2$ quantile has the limiting value
$$\lim_{M \rightarrow \infty} 1 - F_{\chi} \Big ( M F_{\chi}^{-1} \big ( 1 - p_c; \kappa \big ); M\kappa \Big ) = 1 - \lim_{M \rightarrow \infty} \Phi \left ( \frac{M F_{\chi}^{-1} \big ( 1 - p_c; \kappa \big ) - M\kappa}{\sqrt{2M\kappa}} \right )$$
when $p_1 = \dots = p_M = p_c$. As $\Phi$ is absolutely continuous the limit can be taken inside the argument to give
$$1 - \Phi \left ( \lim_{M \rightarrow \infty} \sqrt{\frac{M}{2\kappa}} \left [ F_{\chi}^{-1}(1 - p_c; \kappa) - \kappa \right ] \right ).$$
Now,
\begin{equation*}
  \lim_{M \rightarrow \infty} \sqrt{\frac{M}{2\kappa}} \left [ F_{\chi}^{-1}(1 - p_c; \kappa) - \kappa \right ] = \begin{cases}
    -\infty & \text{ when } p_c > 1 - F_{\chi}(\kappa; \kappa) \\
    0 & \text{ when } p_c = 1 - F_{\chi}(\kappa; \kappa) \\
    \infty & \text{ when } p_c < 1 - F_{\chi}(\kappa; \kappa),
    \end{cases}
\end{equation*}
and so the pooled $p$-value based on the $\chi^2$ quantile transformation behaves similarly to $\stopool(\ve{p})$. It is asymptotically either 1, $\frac{1}{2}$, or 0 when $p_1 \approx p_2 \approx \dots \approx p_M$ depending on whether all are greater than, equal to, or less than $1 - F_{\chi}(\kappa; \kappa)$. Additionally, this implies that
\begin{equation}  \label{MC:eq:cinvieEqLim}
  p_c = 1 - F_{\chi}(\kappa; \kappa)
\end{equation}
for the $\chi^2$ quantile case.
So, although $p_c$ depends on $\kappa$, all $\chi^2$ quantile pooled $p$-values have identical behaviour about their respective $p_c$.

%\begin{equation} \label{MC:eq:asymptoticChi}
%  F_{\chi}^{-1}(1 - \alpha; M\kappa) \rightarrow M\kappa + \sqrt{2M\kappa} \Phi^{-1}(1 - \alpha).
%\end{equation}
%Plugging this into the central rejection level equation for the quantile pooled $p$-value gives the expression for large $M$
%\begin{equation}
%  p_c \approx 1 - F_{\chi}\left ( \kappa + \sqrt{\frac{2\kappa}{M}} \Phi^{-1}(1 - \alpha); \kappa \ri%ght ) \xrightarrow[M \rightarrow \infty]{} 1 - F_{\chi}(\kappa ; \kappa).
%\end{equation}

The form of Equation (\ref{MC:eq:quantilePval}) suggests that this result applies generally. Suppose the random variable with CDF $F$ has a mean $\mu$ and variance $\sigma^2$. Under $H_0$, $F^{-1}(1 - p_i)$ for $i = 1, \dots, M$ are independent and identically-distributed realizations of this random variable and so $\sum_{i = 1}^M F^{-1}(1 - p_i)$ is asymptotically normally distributed with mean $M\mu$ and variance $M\sigma^2$ by the central limit theorem. Therefore $F_M \xrightarrow[M \rightarrow \infty]{} \sqrt{M} \sigma \Phi + M \mu$ for the pooled $p$-value based on $F$ and the harsh asymptotic boundary at $p_c$ derived for $\stopool$ will occur for \emph{any} evidential statistic that uses quantile functions.

\subsection{Characterizing marginal behaviour} \label{MC:pool:pr}

Besides the central behaviour of a pooled $p$-value $g(\ve{p})$, the simulations of Section \ref{sec:power} indicate large differences in power occur for the rejection rule $g(\ve{p}) \leq \alpha$ when strong evidence exists in a single test. This is captured by the \emph{marginal rejection level at b}.

\begin{definition}[The marginal rejection level at $b$]
  For a symmetric pooled $p$-value $g(\ve{p})$, the marginal rejection level at $b$, $p_r(g; b)$, is the largest individual $p$-value in $[0, b]$ for which $g(\ve{p}) \leq \alpha$ when all other $p$-values are $b \in [0,1]$. Without loss of generality, define
  \begin{equation} \label{MC:eq:marginalRejectB}
    p_r(g; b) = \sup \big\{ p_1 \in [0,b] : g(p_1, b, \dots, b) \leq \alpha \big\}.
  \end{equation}
  In particular, the marginal value when $b = 1$ is of interest, that is when there is minimal evidence against all hypotheses other than $H_{01}$. Therefore, also define
  \begin{equation} \label{MC:eq:marginalReject}
    p_r(g) = \lim_{b \rightarrow 1} \sup \big\{ p_1 \in [0,b] : g(p_1, b, \dots, b) \leq \alpha \big\}.
  \end{equation}
\end{definition}
Note that symmetry is only necessary to avoid defining marginal rejection levels for each index $i \in \{ 1, \dots, M\}$ separately and that the term \emph{marginal rejection level} refers to Equation (\ref{MC:eq:marginalReject}). If $g$ is non-decreasing in all of its arguments, $p_r(g; b)$ gives the largest value of $p_{(1)}$ that still leads to rejection at $\alpha$ when the evidence in all other $p$-values is bounded at $b$. The most extreme version of this measure is given by $p_r(g)$. By taking $b = 1$, it measures the power of $g$ for evidence in a single test when all other tests provide no evidence against $H_0$, and so the sensitivity of $g$ to evidence in a single test. This leads to a key lemma for $\ordpool{\ve{p}}{1} = \tippool(\ve{p})$.
\begin{lemma}[The marginal rejection level for the minimum statistic] \label{lem:tipmar}
  The marginal rejection level for $g_{Tip}$ has two cases:
  \begin{equation*}
    p_r(\tippool; b) = \begin{cases} b & \text{ for } b < 1 - (1 - \alpha)^{\frac{1}{M}} \\
      1 - (1 - \alpha)^{\frac{1}{M}} & \text{ for } b \geq 1 - (1 - \alpha)^{\frac{1}{M}}.
    \end{cases}
  \end{equation*}
\end{lemma}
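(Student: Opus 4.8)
The plan is to compute $p_r(\tippool; b)$ directly from the definition in Equation (\ref{MC:eq:marginalRejectB}) by expanding $\tippool(p_1, b, \dots, b)$ explicitly. Recall $\tippool(\ve{p}) = 1 - (1 - p_{(1)})^M$, so the rejection condition $\tippool(\ve{p}) \leq \alpha$ is equivalent to $1 - (1 - p_{(1)})^M \leq \alpha$, i.e. $p_{(1)} \leq 1 - (1 - \alpha)^{1/M}$. By Equation (\ref{MC:eq:tippettpc}) this threshold is exactly $p_c(\tippool) = 1 - (1-\alpha)^{1/M}$; call it $p^{*}$ for brevity. So rejection happens if and only if the \emph{minimum} of the $p$-values is at most $p^{*}$.

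Next I would split into the two cases according to where $b$ sits relative to $p^{*}$. When $p_1 \in [0, b]$, the vector $(p_1, b, \dots, b)$ has minimum $p_{(1)} = \min\{p_1, b\} = p_1$ whenever $p_1 \le b$, so $p_{(1)} = p_1$ on the whole feasible range. Hence the set in Equation (\ref{MC:eq:marginalRejectB}) is $\{p_1 \in [0,b] : p_1 \le p^{*}\} = [0, \min\{b, p^{*}\}]$. Taking the supremum gives $p_r(\tippool; b) = \min\{b, p^{*}\}$, which is precisely $b$ when $b < p^{*}$ and $p^{*}$ when $b \ge p^{*}$ — the claimed two-case formula. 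The only subtlety is the boundary behaviour: when $b \ge p^{*}$ the supremum $p^{*}$ lies in $[0,b]$ and is actually attained (since $\tippool$ is continuous and $1 - (1-p^{*})^M = \alpha \le \alpha$), so the set is nonempty and the sup is well-defined; when $b < p^{*}$ the set is the full interval $[0,b]$ and the sup is $b$, again attained. I should also note that $b < p^{*} \le 1$ guarantees $[0,b]$ is a genuine subinterval, so no degenerate case arises.

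I do not anticipate a real obstacle here — the result is essentially a restatement of the fact that Tippett's statistic depends only on $p_{(1)}$ together with the elementary observation that fixing $M-1$ coordinates at $b$ and moving the last one in $[0,b]$ keeps that last coordinate as the minimum. The one place to be careful is phrasing the case split so that the boundary value $b = p^{*}$ is handled consistently (the statement assigns it to the second case, which matches $\min\{b,p^{*}\} = p^{*} = b$ there, so either case label is correct and the formula is unambiguous). I would write the proof in three short steps: (1) rewrite $\tippool(\ve{p}) \le \alpha$ as $p_{(1)} \le p^{*}$; (2) observe $p_{(1)} = p_1$ on the feasible set so the feasible set is $[0,\min\{b,p^{*}\}]$; (3) read off the supremum and match it to the two cases.
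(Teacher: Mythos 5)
Your proposal is correct and follows essentially the same route as the paper: reduce $\tippool(\ve{p}) \leq \alpha$ to the threshold condition $p_{(1)} \leq 1 - (1-\alpha)^{1/M}$, note that $p_{(1)} = p_1$ on the feasible set $[0,b]$, and read off the supremum in the two cases. Your extra care about attainment at the boundary and the unification via $\min\{b, p^{*}\}$ is a slight polish on the paper's argument but not a different proof.
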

\begin{proof}
  Recall that
  $$\tippool(\ve{p}) = 1 - (1 - p_{(1)})^M$$
  is a function of the minimum alone. Rejection occurs when $\tippool(\ve{p}) \leq \alpha$, or rather when $p_{(1)} \leq 1 - (1 - \alpha)^{\frac{1}{M}}$. When $b < 1 - (1 - \alpha)^{\frac{1}{M}}$ and $p_1 \leq b$, all values are below the rejection threshold and so $p_{(1)}$ attains its upper bound. Therefore
  $$p_r(\tippool; b) = \sup \big\{p_1 \in [0,b] : g(p_1, b, \dots, b) \leq \alpha\big\} = b.$$
  When $b \geq 1 - (1 - \alpha)^{\frac{1}{M}}$, rejection will only occur if $p_{(1)}$ is below the rejection threshold at $\alpha$, and so
  $$p_r(\tippool; b) = \sup \big\{p_1 \in [0,b] : g(p_1, b, \dots, b) \leq \alpha\big\} = 1 - (1 - \alpha)^{\frac{1}{M}}.$$
\end{proof}
A direct consequence of Lemma \ref{lem:tipmar} is that $p_r(g_{Tip}) = p_c(g_{Tip})$, which Theorem \ref{MC:thm:pc>pr} proves is uniquely true for $\tippool(\ve{p})$.

As with $p_c$, a larger $p_r$ indicates greater power in a particular region of the unit hypercube. While $p_c$ defines the rejection cube $[0, p_c]^M$, $p_r$ defines the rejection shell $\{\ve{p} \in [0,1]^M : p_{(1)} \leq p_r\}$ with a flat boundary at $p_r$ along each margin. A larger $p_r$ implies a larger shell and therefore a greater volume of $[0,1]^M$ where $H_0$ is rejected and smaller pooled $p$-values within this volume if $g(\ve{p})$ is monotonic. Again, general expressions are provided for $p_r$ for the order statistic and quantile transformation pooled $p$-values.

\begin{proposition}[The marginal rejection level for order statistics]
  For $k \geq 2$, $p_r(\ordpool{}{k}, b) = b$ when $\sum_{l = k}^M {M \choose l} b^l (1 - b)^{M-1} \leq \alpha$ and does not exist otherwise.
\end{proposition}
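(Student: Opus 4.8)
The plan is to exploit the same structural fact used for the central rejection level of order statistics: the pooled $p$-value $\ordpool{\ve{p}}{k}$ depends on $\ve{p}$ only through the single order statistic $p_{(k)}$. First I would substitute the marginal configuration $\ve{p} = \tr{(p_1, b, \dots, b)}$ with $p_1 \in [0,b]$ into the definition and identify the relevant order statistic. Since $M - 1$ of the coordinates are tied at $b$ and the remaining coordinate satisfies $p_1 \le b$, at most one coordinate can lie strictly below $b$; hence $p_{(1)} = p_1$ and $p_{(2)} = p_{(3)} = \dots = p_{(M)} = b$. In particular, for every $k \ge 2$ we obtain $p_{(k)} = b$, and this holds irrespective of the actual value of $p_1$.

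Next I would plug this into Equation (\ref{MC:eq:orderpool}) to get $\ordpool{(p_1, b, \dots, b)}{k} = \sum_{l = k}^M {M \choose l} b^l (1 - b)^{M - l}$, a quantity that does not involve $p_1$ at all. Consequently the set $\{ p_1 \in [0,b] : \ordpool{(p_1, b, \dots, b)}{k} \le \alpha \}$ is all-or-nothing: it equals $[0,b]$ when $\sum_{l = k}^M {M \choose l} b^l (1 - b)^{M - l} \le \alpha$, and it is empty otherwise. Reading off the supremum then finishes the argument: in the first case $\sup [0,b] = b$, so $p_r(\ordpool{}{k}; b) = b$; in the second case the supremum is over the empty set, so no marginal rejection level in $[0,b]$ exists, matching the statement.

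I would also note briefly why this degenerates compared to the Tippett case of Lemma \ref{lem:tipmar}: there $k = 1$ and $p_{(1)} = p_1$ genuinely depends on $p_1$, which is precisely what produces the nontrivial threshold behaviour, whereas for $k \ge 2$ the free coordinate $p_1$ sits below all the order statistics that matter and so has no influence.

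The only step requiring any care — and the closest thing to an obstacle — is the order-statistic bookkeeping at the endpoints and the tie: I must confirm $p_{(k)} = b$ for $k \ge 2$ including when $p_1 = b$ (trivial, all coordinates equal $b$) and when $p_1 = 0$ (still fine, since $M - 1 \ge k - 1 \ge 1$ coordinates equal $b$). Because $k \le M$ always holds, there is never an edge case with fewer than $k$ coordinates available at level $b$, so the identity, and hence the whole argument, goes through uniformly.
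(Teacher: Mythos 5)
Your proof is correct and follows essentially the same route as the paper's: both arguments rest on the observation that the configuration $(p_1, b, \dots, b)$ with $p_1 \le b$ forces $p_{(k)} = b$ for all $k \ge 2$, so $\ordpool{\ve{p}}{k}$ is constant in $p_1$ and the defining set is either all of $[0,b]$ or empty. Your version just spells out the order-statistic bookkeeping and the all-or-nothing structure more explicitly than the paper does (and, as a side note, correctly writes the exponent as $M-l$ where the statement has a typographical $M-1$).
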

\begin{proof}
  Recall that
  $$g_{Ord}(\ve{p}; k) = \sum_{l = k}^M {M \choose l}p_{(k)}^l (1 - p_{(k)})^{M-l}$$
  and note Equation \ref{MC:eq:marginalRejectB} forces $p_{(k)} = b$ for all $k > 1$. If $\sum_{l = k}^M {M \choose l} b^l (1 - b)^{M-1} \leq \alpha$, then the supremum of $p_1$ is $b$. On the other hand, if $\sum_{l = k}^M {M \choose l} b^l (1 - b)^{M-1} \geq \alpha$ there is no value of $p_1$ which leads to rejection and so $p_r(\ordpool{}{k}, b)$ does not exist.
\end{proof}

In particular, this implies that $p_r(\ordpool{}{k})$ does not exist for $k \geq 2$, in other words the pooled $p$-value based on $p_{(k)}$ has a value independent of $p_{(1)}$ for $k \geq 2$. So long as $k$ tests are less than a particular bound, $\ordpool{\ve{p}}{k}$ will reject. If fewer than $k$ are below that bound, the values of these small $p$-values are irrelevant.

\begin{proposition}[The marginal rejection level of quantile transformation statistics]
    Given an unweighted evidential statistic based on quantile transformations as in Equation \ref{MC:eq:quantilePval},
    $$g(\ve{p}) = 1 - F_M \left ( \sum_{i = 1}^M F^{-1} ( 1 - p_i ) \right ),$$
     if $F_M$ and $F$ are both continuous then
     \begin{equation*}
       p_r(g;b) = 1 -  F \Big ( F_M^{-1}( 1 - \alpha ) - [M - 1]F^{-1}( 1 - b ) \Big ).
\end{equation*}
 Further, if both are absolutely continuous
\begin{equation} \label{MC:eq:marginalRejAbsCont}
  p_r(g) = 1 -  F \left ( F_M^{-1}( 1 - \alpha ) - [M - 1] \lim_{x \rightarrow 0+} F^{-1}( x ) \right )
\end{equation}
\end{proposition}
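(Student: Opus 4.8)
The plan is to reuse the inversion strategy from the proposition on the central rejection level of quantile pooled $p$-values, now keeping one coordinate free. First I would substitute $p_2 = \dots = p_M = b$ into Equation (\ref{MC:eq:quantilePval}) with $c_1 = \dots = c_M = 1$, giving
\begin{equation*}
  g(p_1, b, \dots, b) = 1 - F_M\Big( F^{-1}(1 - p_1) + (M - 1) F^{-1}(1 - b) \Big).
\end{equation*}
Since $F^{-1}$ is non-decreasing and $F_M$ is continuous and non-decreasing, this is continuous and non-increasing in $p_1$, so $\{ p_1 \in [0,b] : g(p_1, b, \dots, b) \le \alpha \}$ is a sub-interval $[0, p^{*}]$ of $[0,b]$ (possibly empty). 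If $p^{*} < b$, continuity forces the equality $g(p^{*}, b, \dots, b) = \alpha$, exactly as in the central case, so the supremum can be dropped.

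The inversion is then mechanical: from $F_M\big( F^{-1}(1-p_1) + (M-1)F^{-1}(1-b) \big) = 1 - \alpha$, applying $F_M^{-1}$ gives $F^{-1}(1 - p_1) = F_M^{-1}(1 - \alpha) - (M-1)F^{-1}(1 - b)$, and applying $F$ and rearranging yields the stated formula for $p_r(g; b)$. For Equation (\ref{MC:eq:marginalRejAbsCont}) I would let $b \to 1$: since $F^{-1}$ is non-decreasing, $F^{-1}(1-b)$ decreases monotonically to $\lim_{x \to 0+} F^{-1}(x)$, the left endpoint of the support of $F$ (possibly $-\infty$), and continuity of $F$ lets the limit pass inside $F$; when that endpoint is $-\infty$ the argument becomes $+\infty$ and $p_r(g) = 0$. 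The absolute-continuity hypothesis is what makes the generalized inverses and this limit interchange unambiguous.

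The step I expect to be the real obstacle --- and the reason the statement needs a caveat --- is distinguishing when $p^{*}$ is the interior solution from when it is the boundary value $b$. A short computation shows $1 - F\big( F_M^{-1}(1-\alpha) - (M-1)F^{-1}(1-b) \big) \le b$ if and only if $g(b, b, \dots, b) \ge \alpha$, i.e.\ if and only if $b \ge p_c(g)$; for $b < p_c(g)$ the bound $p_1 \le b$ binds and $p_r(g;b) = b$, precisely the two-case behaviour of Lemma \ref{lem:tipmar}. So the displayed formula is the correct value exactly on $\{ b \ge p_c(g) \}$, which includes $b = 1$ (hence $p_r(g)$) whenever $p_c(g) < 1$. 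Finally, as with $\ordpool{}{k}$, if $F$ has a finite upper support endpoint $x_{\max}$ and $1 - F_M\big( x_{\max} + (M-1)F^{-1}(1-b) \big) > \alpha$, then no $p_1 \in [0,b]$ triggers rejection and $p_r(g;b)$ does not exist; otherwise the argument above goes through verbatim.
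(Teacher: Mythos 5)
Your proof is correct and follows the same route as the paper's: substitute $p_2 = \dots = p_M = b$, use monotonicity and continuity to replace the supremum by the level-set equality, invert, and pass the limit $b \to 1$ inside $F$. The extra care you take is in fact sharper than the paper's own one-line argument, which drops the supremum without checking the constraint $p_1 \le b$: your observation that the displayed formula is valid exactly when $b \ge p_c(g)$ (with $p_r(g;b) = b$ otherwise, mirroring Lemma \ref{lem:tipmar}) and your non-existence caveat for bounded-support $F$ are both correct --- just note that $g(p_1, b, \dots, b)$ is non-\emph{decreasing} in $p_1$ (you wrote non-increasing), which is what actually makes the rejection set an interval of the form $[0, p^{*}]$.
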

\begin{proof}
  Substituting Equation \ref{MC:eq:quantilePval} into Equation \ref{MC:eq:marginalRejectB} gives
       \begin{equation*}
         p_r(g;b)  = \sup \Bigg\{p : 1 - F_M \bigg ( F^{-1} ( 1 - p ) + [M-1] F^{-1} ( 1 - b ) \bigg ) \leq \alpha\Bigg\}.
       \end{equation*}
       As both $F_M$ and $F$ are CDFs, they are non-decreasing, if they are also continuous their inverses exist and the supremum can be dropped to give
     \begin{equation*}
       p_r(g;b) = 1 -  F \Big ( F_M^{-1}( 1 - \alpha ) - [M - 1]F^{-1}( 1 - b ) \Big ).
     \end{equation*}
     If they are both absolutely continuous, then the limit
     $$1 - \lim_{b \rightarrow 1} F \Big ( F_M^{-1}( 1 - \alpha ) - (M - 1)F^{-1}( 1 - b ) \Big ) $$
     can be taken into the argument of $F$ to give Equation \ref{MC:eq:marginalRejAbsCont}.
\end{proof}

Many proposals use absolutely continuous CDFs, so this can be readily applied. $\stopool(\ve{p})$, for example, has
$$p_r(\stopool) = 1 - \Phi \left ( \sqrt{M} \Phi^{-1}(1 - \alpha) - [M - 1] \lim_{x \rightarrow 0+} \Phi^{-1}(x) \right ) = 0$$
for any $\alpha$ and $M$ as $\lim_{x \rightarrow 0} \Phi(x) = - \infty$. Similarly, the proposal by \cite{mudholkar1977logit} has $p_r = 0$, as it uses the logistic distribution which also has $\lim_{x \rightarrow 0} F(x) = -\infty$. This suggests that, for a large enough $p$-value on all remaining tests, no level of evidence in a single test will cause the rejection of $H_0$ for either of these pooled $p$-values; their marginal rejection levels are always 0 for large enough $b$.

\subsection{The centrality quotient} \label{chipool:centquot}

Beyond providing definitions that clarify the power of a pooled $p$-value to detect evidence spread among all tests and evidence in a single test, $p_c$ and $p_r$ as defined in Equations (\ref{MC:eq:marginalReject}) and (\ref{MC:eq:centralRejectionDef}) can be combined into a single value summarizing the relative preference for diffuse or concentrated evidence. First, a key relationship between $p_c$ and $p_r$ is proven.

\begin{theorem}[Order of $p_c$ and $p_r$] \label{MC:thm:pc>pr}
  For a pooled $p$-value $g(\ve{p})$ that is continuous, symmetric, and monotonically non-decreasing in all arguments, $p_c \geq p_r$ if both exist. Furthermore, equality occurs iff $g(\ve{p})$ is constant in $p_k$ for $p_k \neq p_{(1)}$, that is if $g(\ve{p}) = f(p_{(1)})$ is a function of the minimum $p$-value alone.
\end{theorem}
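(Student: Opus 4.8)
The plan is to compare $g$ evaluated at the central configuration $\ve{p}_c = (p_c,\dots,p_c)$ with $g$ evaluated at the marginal configuration $\ve{p}_r = (p_r, 1, \dots, 1)$, and use monotonicity together with the definitions of $p_c$ and $p_r$ as suprema of level-$\alpha$ sets. The key observation is that both $g(\ve{p}_c) = \alpha$ and $g(\ve{p}_r) = \alpha$ hold (using continuity to turn the suprema into equalities, exactly as in the proofs of the propositions on $p_c$ and $p_r$ above), so the strategy is to derive a contradiction from assuming $p_c < p_r$.

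First I would suppose, for contradiction, that $p_r > p_c$. Consider the two points $\ve{p}_c = (p_c, p_c, \ldots, p_c)$ and $\ve{p}_r = (p_r, 1, 1, \ldots, 1)$. Since $p_r > p_c$ the first coordinate of $\ve{p}_r$ exceeds that of $\ve{p}_c$, and every remaining coordinate of $\ve{p}_r$ equals $1 \geq p_c$, so $\ve{p}_r \geq \ve{p}_c$ coordinatewise with at least the first coordinate strictly larger. By monotonicity, $g(\ve{p}_r) \geq g(\ve{p}_c) = \alpha$. But $p_r$ is defined via $b \to 1$, so strictly I would work at a fixed $b < 1$ close to $1$: take $\ve{p}_r^{(b)} = (p_r(g;b), b, \ldots, b)$ with $g(\ve{p}_r^{(b)}) = \alpha$, and choose $b$ large enough that $p_r(g;b) > p_c$ (possible since $p_r(g;b) \to p_r > p_c$ and $b \to 1 > p_c$). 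Then $\ve{p}_r^{(b)} \geq \ve{p}_c$ coordinatewise, so $\alpha = g(\ve{p}_r^{(b)}) \geq g(\ve{p}_c) = \alpha$, forcing equality $g(\ve{p}_r^{(b)}) = g(\ve{p}_c)$. Now I want to push this to a strict inequality: since $g(p, p, \ldots, p)$ is continuous and, at $p = p_c$, equals $\alpha$, while for $p$ slightly above $p_c$ it must exceed $\alpha$ (otherwise $p_c$ would not be the supremum) — so there is a point $\ve{q} = (q, q, \ldots, q)$ with $p_c < q \le \min(p_r(g;b), b)$ and $g(\ve{q}) > \alpha$; but $\ve{q} \le \ve{p}_r^{(b)}$ coordinatewise, contradicting monotonicity since $g(\ve{p}_r^{(b)}) = \alpha < g(\ve{q})$. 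This establishes $p_c \geq p_r$.

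For the equality case, the forward direction is immediate: if $g(\ve{p}) = f(p_{(1)})$ depends only on the minimum, then along $\ve{p}_c$ the minimum is $p_c$ and along the marginal configuration the minimum is $p_r(g;b) \le b$ (for $b$ in the relevant range), and both must solve $f(\,\cdot\,) = \alpha$; since $f$ is non-decreasing and the relevant sup is attained at the same threshold value, $p_c = p_r$ — this is exactly the content already observed for $\tippool$ after Lemma~\ref{lem:tipmar}, and I would cite that computation. For the converse, I would argue contrapositively: suppose $g$ is \emph{not} a function of $p_{(1)}$ alone, so there exist two configurations with the same minimum but different values of $g$; then some coordinate $p_k \neq p_{(1)}$ genuinely affects $g$, meaning $g$ is strictly increasing in that coordinate somewhere. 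The idea is to exploit this strictness in the chain of inequalities above: at the point $\ve{p}_r^{(b)} = (p_r(g;b), b, \ldots, b)$ versus $(p_r(g;b), p_r(g;b), \ldots, p_r(g;b))$, if $g$ responds strictly to raising coordinates $2, \ldots, M$ from $p_r(g;b)$ up to $b$, then $g(\ve{p}_r^{(b)}) > g(p_r(g;b),\dots,p_r(g;b))$, which combined with $g(\ve{p}_r^{(b)}) = \alpha$ gives $g(p_r(g;b),\dots,p_r(g;b)) < \alpha$, hence $p_r(g;b) < p_c$ strictly, and letting $b \to 1$ yields $p_r \le p_c$ with the strict gap preserved (with care).

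\textbf{The main obstacle} I anticipate is the converse direction of the equality characterization: "not a function of the minimum alone" is a global statement, and I need to convert it into a \emph{usable} strict-monotonicity fact at the specific configurations appearing in the argument. A clean way around this is to first show the intermediate claim that $p_c = p_r$ forces the stronger property that the entire rejection region $\{g \le \alpha\}$ coincides with $\{p_{(1)} \le p_c\}$ — because $p_c = p_r$ collapses the rejection cube and the rejection shell to the same boundary — and then use continuity and monotonicity to argue that $g$ is therefore constant on each level set $\{p_{(1)} = \text{const}\}$ within $[0,1]^M$, i.e. $g = f(p_{(1)})$. Handling the $b \to 1$ limit rigorously (some quantile-based $g$ have $p_r = 0$, as noted for $\stopool$ and the logit method, and the theorem's hypothesis "$p_r$ exists" must be invoked to exclude degenerate cases) is a secondary technical point I would flag but not belabor.
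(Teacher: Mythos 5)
Your proof of the inequality $p_c \geq p_r$ is correct and takes essentially the same route as the paper: compare the diagonal configuration $(p_c,\dots,p_c)$ with the marginal configuration $(p_r,1,\dots,1)$ (or $(p_r(g;b),b,\dots,b)$) and invoke monotonicity. Your version is in fact slightly more careful than the paper's, since you use the supremum property ($g(q,\dots,q)>\alpha$ for every $q>p_c$) to manufacture a strict violation of monotonicity, whereas the paper passes from $g(p_c,1,\dots,1)\geq g(p_r,1,\dots,1)$ directly to $p_c\geq p_r$, which tacitly inverts a merely non-decreasing function. The forward direction of the equality claim (if $g=f(p_{(1)})$ then $p_c=p_r$) is also fine and matches the paper.

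The genuine gap is the converse of the equality characterization, and you have correctly diagnosed it yourself without closing it. Your first strategy stalls on the word ``if'': from ``$g$ is not a function of $p_{(1)}$ alone'' you cannot conclude that $g$ responds strictly to raising coordinates $2,\dots,M$ \emph{at the particular configuration} $(p_r(g;b),b,\dots,b)$ --- the dependence on other coordinates could live entirely elsewhere in $[0,1]^M$. Your second strategy proves a correct intermediate fact (if $p_c=p_r$ then the level-$\alpha$ rejection region is exactly $\{p_{(1)}\leq p_c\}$), but that describes a \emph{single} sublevel set of $g$ and does not by itself yield that $g$ is constant on every level set of $p_{(1)}$, i.e.\ that $g=f(p_{(1)})$ globally; the final ``use continuity and monotonicity'' step is exactly the missing argument. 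The paper's own proof supplies the concrete step you lack: from $p_c=p_r$ it deduces $g(p_r,1,\dots,1)=g(p_r,p_r,\dots,p_r)=\alpha$, hence by monotonicity $g$ is constant (equal to $\alpha$) on the whole box $\{p_r\}\times[p_r,1]^{M-1}$, forcing zero slope in every coordinate $p_k\neq p_{(1)}$ there, and then extends by symmetry. (Even the paper's extension from this slice to the global statement $g(\ve{p})=f(p_{(1)})$ is terse, but it at least pins down where the flatness argument takes place; your proposal never gets that far.) As submitted, the ``only if'' direction is a sketch of two candidate approaches rather than a proof.
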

\begin{proof}
  Consider $p_c(g)$ as in Definition \ref{MC:eq:centralRejectionDef}. Then
  $$p_c(g) = \sup \big\{p \in [0,1]: g(p, \dots, p) \leq \alpha \big\}.$$
  Suppose
  $$p_r(g) = \lim_{b \rightarrow 1} \sup \big\{ p_1 \in [0,b]: g(p_1, b, \dots, b) \leq \alpha \big\}$$
  exists. If $g$ is symmetric, $p_r(g)$ captures the marginal rejection level of $g$ in all margins. If $g$ is continuous, then both $p_c(g)$ and $p_r(g)$ lie on the $\alpha$ level surface of $g$. Therefore
  $$g(p_c, \dots, p_c) = \alpha = g(p_r, 1, \dots, 1).$$
  But $g$ is non-decreasing in all of its arguments, so
  $$g(p_c, 1, \dots, 1) \geq g(p_c, p_c, \dots, p_c) = g(p_r, 1, \dots, 1)$$
  and therefore
  $$p_c \geq p_r.$$
  If $p_c = p_r$, then substitute
  $$\alpha = g(p_c, \dots, p_c) = g(p_r, \dots, p_r).$$
  As $g$ is non-decreasing
  $$g(p_r, \dots, p_r) \leq g(p_r, 1, \dots, 1) = \alpha$$
  and so
  $$g(p_r, 1, \dots, 1) = g(p_r, p_r, \dots, p_r).$$
  This implies that the average slope of $g$ over $[p_r, 1]$, equivalently $[p_c,1]$, is zero for all $p_k \neq p_1$. As $g$ is continuous and non-decreasing, this implies that the slope must be zero for every point in this interval for all $p_k \neq p_1$. As $p_k \geq p_1$ for all $p_k \in \ve{p}$ over this region, $p_1 = p_{(1)}$ by definition. By the symmetry of $g$, the same argument holds for every $p_k$. Therefore $g(\ve{p}) = f( p_{(1)})$ for some non-decreasing function $f$.

  To prove the reverse direction note that if $g(\ve{p}) = f( p_{(1)})$, then
  $$\alpha = g(p_c, p_c, \dots, p_c) = g(p_c, 1, \dots, 1)$$
  and so $p_c = p_r$ by the definition of $p_r$ and the continuity of $g$. By symmetry, this same argument holds for any margin.
\end{proof}

Two facts follow directly from this proof. First, Theorem \ref{MC:thm:pc>pr} implies that $p_c = p_r$ only for $\tippool(\ve{p}) = \ordpool{\ve{p}}{1}$ among symmetric, continuous, monotonically non-decreasing $p$-values as $\tippool(\ve{p})$ is the unique pooled $p$-value defined by $p_{(1)}$. A second corollary is the existence of a sensible \emph{centrality quotient} to quantify the balance between central and marginal rejection levels in pooled $p$-values.

\begin{definition}[The centrality quotient]
  Suppose $g$ is a continuous, symmetric, and monotonically non-decreasing pooled $p$-value for which $p_r(g)$ and $p_c(g)$ defined as in Equations (\ref{MC:eq:marginalReject}) and (\ref{MC:eq:centralRejectionDef}) exist, define the centrality quotient
  \begin{equation} \label{MC:eq:centralityQuot}
    \centquot(g) = \frac{p_c(g) - p_r(g)}{p_c(g)}.
  \end{equation}
\end{definition}

Theorem \ref{MC:thm:pc>pr} implies that $\centquot(g) \in [0,1]$ with meaningful bounds. If $\centquot(g) = 0$, $g(\ve{p})$ will reject based on the smallest $p$-values alone, increasing the marginal rejection level as large as possible while staying non-decreasing. Moreover, $\centquot(g) = 0$ implies $g(\ve{p})$ is the pooled $p$-value based on $p_{(1)}$ alone, $\tippool(\ve{p})$. In contrast, when $\centquot(g) = 1$, $g$ cannot reject based on the evidence contained in a single test, instead it requires evidence in many or all tests, for example $\stopool(\ve{p})$. Between these extremes, pooled $p$-values with larger centrality quotients will reject $H_0$ for a larger range of $p_c$ values and a smaller range of $p_r$ values, and so will be more powerful at detecting evidence spread broadly at the cost of power when evidence is concentrated in a small number of $p$-values.

Indeed, increasing $w$ decreases the centrality quotient of $\hrpool{\ve{p}}{w}$. This matches the empirical results obtained in Section \ref{subsec:h3omega} and in particular Figure \ref{subsec:h3omega}, where larger $w$ values provided greater power when the prevalence of evidence was small but the strength of evidence was large and smaller $w$ values gave greater power in the case of weak evidence with high prevalence. For $\alpha = 0.05$, the centrality quotients of a range of $w$ values in $\hrpool{\ve{p}}{w}$ are compared to those of several quantile transformation proposals in Table \ref{MC:tab:centrality} over a range of $M$ values. As predicted by the asymptotic argument at the end of Section \ref{MC:pool:pc}, every method tends towards a centrality of 1 as $M$ increases and each $F_M$ converges to a corresponding normal CDF.

\begin{table}[!ht]
  \begin{center}
    \begin{tabular}{|c|cccc|} \hline
      & \multicolumn{4}{c|}{$M$} \\
     Pooled $p$-value & 2 & 5 & 10 & 20 \\ \hline
     \cite{tippett1931methods} & 0 & 0 & 0 & 0 \\
     \cite{cinarviechtbauer2022poolr} & 0.83 & 0.99 & 1.00 & 1.00 \\
     \cite{stoufferetal1949american} & 1 & 1 & 1 & 1 \\
     \cite{fisher1932statistical} & 0.91 & 1.00 & 1.00 & 1.00 \\
     \cite{mudholkar1977logit} & 1 & 1 & 1 & 1 \\
     \cite{wilson2019harmonic} & 0.49 & 0.79 & 0.90 & 0.95 \\
     $\hrpool{\ve{p}}{e^{-6}}$ & 1.00 & 1.00 & 1.00 & 1.00 \\
     $\hrpool{\ve{p}}{e^{-3}}$ & 1.00 & 1.00 & 1.00 & 1.00 \\
     $\hrpool{\ve{p}}{1}$ & 0.91 & 1.00 & 1.00 & 1.00 \\ 
      \hline
  \end{tabular}
  \caption{Centrality quotients for certain pooled $p$-values.}
  \label{MC:tab:centrality}
  \end{center}
\end{table}

Beyond $\hrpool{\ve{p}}{w}$, other methods show the same relationship between the centrality quotient and regions of relative power in the empirical explorations in \cite{westberg1985combining}, \cite{loughin2004systematic}, and \cite{kocak2017meta}. Pooled $p$-values with larger centrality quotients are more powerful for weak evidence spread among all tests than those with smaller centrality quotients, but are relatively weak against strong evidence concentrated in a few tests. This is suggestive of an inverse relationship between $p_r$ and $p_c$ over different pooled $p$-values, but this is not the case generally. Consider, as a counter-example, $\stopool(\ve{p})$ and the proposal of \cite{mudholkar1977logit}: both have $p_r = 0$ but different values of $p_c$.

%There are still some limitations to these definitions. One drawback of $p_r$ as defined in Equation \ref{MC:eq:marginalReject} is that it does not exist if rejection is impossible for $g(p_1, 1, 1, \dots, 1)$. This means, for example, that we cannot define $c(g_{Ord}(;k))$ in general. For those methods where $c$ is defined,  provides centrality quotients for a series of $M$ at level $\alpha = 0.05$. As $M$ increases, all methods tend towards favouring the centre with the exception of \cite{tippett1931methods} which has a constant centrality quotient. The harmonic mean $p$-value of \cite{wilson2019harmonic}, while only approximately uniform under $H_0$, shows the greatest resistance to this tendency.

\section{Controlling the centrality quotient} \label{sec:chi}

Table \ref{MC:tab:centrality}, and others which could be constructed like it, provide only a limited ability to select a centrality quotient. Most of the proposals have centrality near 1, and all proposals approach 1 as $M$ increases. Rather than choose among these other limited proposals when power is desired in a particular region, this work proposes a family of quantile pooled $p$-values based on $\chi^2_{\kappa}$ which precisely controls the centrality for any $M$.
Following Equation (\ref{MC:eq:quantilePval}), define the $\chi^2_{\kappa}$ quantile pooled $p$-value
\begin{equation} \label{MC:eq:chikpool}
  \chipool{\ve{p}}{\kappa} = 1 - F_{\chi} \left ( \sum_{i = 1}^M F_{\chi}^{-1}(1 - p_i; \kappa); M\kappa \right )
\end{equation}
where $\kappa \in [0, \infty)$ is the the degrees of freedom of the quantile transformation applied to the $p_i$ and doubles as a centrality parameter that sets $\centquot(\chipool{}{\kappa})$ arbitrarily.\footnote{This is similar to the gamma method of \cite{zaykinetal2007combining}, but with a different parameter choice. It is possible the same control of $c$ may be obtained with a general gamma CDF, but sticking to the $\chi^2_{\kappa}$ simplifies the number of parameters from two to one.} This family of pooled $p$-values includes several widely-used previous proposals. Setting $\kappa = 2$ gives $\fispool(\ve{p})$, $\kappa = 1$ gives the proposal from \cite{cinarviechtbauer2022poolr}, taking $\lim_{\kappa \rightarrow \infty} \chipool{\ve{p}}{\kappa}$ gives $\stopool(\ve{p})$, and taking $\lim_{\kappa \rightarrow 0} \chipool{\ve{p}}{\kappa}$ gives $\tippool(\ve{p})$. While the former two are by definition, the latter must be proven. First, prove $\lim_{\kappa \rightarrow \infty} \chipool{\ve{p}}{\kappa} = \stopool(\ve{p})$ by applying the central limit theorem.

\begin{theorem}[Limiting value of $\chipool{\ve{p}}{\kappa})$ as $\kappa \rightarrow \infty$] \label{thm:gchiLarge}
  $$\lim_{\kappa \rightarrow \infty} \chipool{\ve{p}}{\kappa} = \stopool(\ve{p})$$
\end{theorem}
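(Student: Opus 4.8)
The plan is to show that as $\kappa \to \infty$, the $\chi^2_\kappa$ quantile transformation applied to each $p_i$, once appropriately standardized, behaves like the corresponding normal quantile transformation, so that $\chipool{\ve{p}}{\kappa}$ converges to $\stopool(\ve{p})$ argument by argument. The key fact to exploit is the central limit theorem statement already invoked in Section \ref{MC:pool:pc}: a $\chi^2_\kappa$ random variable has mean $\kappa$ and variance $2\kappa$, and $(\chi^2_\kappa - \kappa)/\sqrt{2\kappa} \xrightarrow{d} N(0,1)$ as $\kappa \to \infty$. Consequently $F_{\chi}(x; \kappa) \to \Phi\!\big((x-\kappa)/\sqrt{2\kappa}\big)$ pointwise, and by inverting (using that $\Phi$ and $\Phi^{-1}$ are continuous and strictly increasing) one gets, for each fixed $q \in (0,1)$,
\begin{equation*}
  \frac{F_{\chi}^{-1}(q; \kappa) - \kappa}{\sqrt{2\kappa}} \longrightarrow \Phi^{-1}(q) \qquad \text{as } \kappa \to \infty.
\end{equation*}

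Next I would apply this to both the inner transformations and the outer CDF. Write $\chipool{\ve{p}}{\kappa} = 1 - F_{\chi}\big(S_\kappa; M\kappa\big)$ with $S_\kappa = \sum_{i=1}^M F_{\chi}^{-1}(1 - p_i; \kappa)$. Standardizing the outer $\chi^2_{M\kappa}$ CDF gives
\begin{equation*}
  F_{\chi}(S_\kappa; M\kappa) = \Phi\!\left( \frac{S_\kappa - M\kappa}{\sqrt{2M\kappa}} \right) + o(1),
\end{equation*}
and the argument satisfies
\begin{equation*}
  \frac{S_\kappa - M\kappa}{\sqrt{2M\kappa}} = \frac{1}{\sqrt{M}} \sum_{i=1}^M \frac{F_{\chi}^{-1}(1 - p_i; \kappa) - \kappa}{\sqrt{2\kappa}} \longrightarrow \frac{1}{\sqrt{M}} \sum_{i=1}^M \Phi^{-1}(1 - p_i),
\end{equation*}
using the displayed limit above for each of the finitely many terms $q = 1 - p_i$. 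Since $\Phi$ is continuous, we may pass the limit inside to obtain
\begin{equation*}
  \lim_{\kappa \to \infty} \chipool{\ve{p}}{\kappa} = 1 - \Phi\!\left( \frac{1}{\sqrt{M}} \sum_{i=1}^M \Phi^{-1}(1 - p_i) \right) = \stopool(\ve{p}),
\end{equation*}
which is exactly Equation (\ref{MC:eq:stoufferpool}).

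The main obstacle is making the two convergences uniform enough to compose them legitimately: I am replacing the outer CDF $F_{\chi}(\,\cdot\,; M\kappa)$ by its normal approximation \emph{at} the moving point $S_\kappa$, which is itself only converging, not fixed. To handle this cleanly I would invoke uniform convergence of $F_{\chi}((\cdot - M\kappa)/\sqrt{2M\kappa} \cdot \sqrt{2M\kappa} + M\kappa ; M\kappa)$ to $\Phi$ — i.e. Pólya's theorem, which upgrades the pointwise CLT convergence of a sequence of CDFs to a continuous limit into uniform convergence — so that evaluating at $S_\kappa$ incurs error $o(1)$ uniformly. The inner convergence $F_{\chi}^{-1}(q;\kappa) \approx \kappa + \sqrt{2\kappa}\,\Phi^{-1}(q)$ needs only to hold at the $M$ fixed values $q = 1-p_i$, so it follows directly from the inverse of the pointwise CLT (continuous mapping / convergence of quantiles of a weakly convergent sequence with continuous strictly increasing limit). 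Edge cases $p_i \in \{0,1\}$ would be noted separately, where both sides are $0$ or $1$ by the usual conventions. Assembling these pieces gives the result.
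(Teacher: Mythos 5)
Your proof is correct and follows essentially the same route as the paper's: both rest on the central limit theorem for the $\chi^2_\kappa$ family, concluding that in the limit the pooled $p$-value becomes the one based on standard normal quantile transformations, i.e.\ $\stopool(\ve{p})$. The paper's version is a one-line appeal to ``$\lim_{\kappa \rightarrow \infty} F_{\chi}(x; \kappa) = \Phi(x)$'' (which only makes sense after the centering and scaling you make explicit, together with the invariance of a pooled $p$-value under monotone transformations of its statistic), so your added care with the standardization, the quantile convergence at each fixed $1-p_i$, and P\'olya's theorem for evaluating the outer CDF at the moving argument supplies details the paper leaves implicit rather than a genuinely different argument.
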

\begin{proof}
Note that Equation (\ref{MC:eq:chikpool}) is always a pooled $p$-value, i.e. has a uniform distribution for any choice of $\kappa$. By the CLT, $\lim_{\kappa \rightarrow \infty} F_{\chi}(x; \kappa) = \Phi(x)$, and so in the limit $\chipool{\ve{p}}{\kappa}$ becomes the pooled $p$-value derived from the sum of standard normal quantile transformations, $\stopool(\ve{p})$.
\end{proof}

The proof for $\chipool{\ve{p}}{0}$ is slightly more involved, and relies on Theorem \ref{MC:thm:pc>pr}.

\begin{theorem}[Limiting value of $\chipool{\ve{p}}{\kappa}$ for $\kappa = 0$] \label{thm:gchiSmall}
  $$\lim_{\kappa \rightarrow 0} \chipool{\ve{p}}{\kappa} = \tippool(\ve{p}) = \ordpool{\ve{p}}{1}$$
\end{theorem}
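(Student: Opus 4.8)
The plan is to show that $\lim_{\kappa \to 0} \chipool{\ve{p}}{\kappa} = \tippool(\ve{p})$ by analyzing the behaviour of the $\chi^2_\kappa$ quantile transformation as $\kappa \to 0^+$. The key observation is that for small $\kappa$, the $\chi^2_\kappa$ density places almost all its mass near $0$, with a thin, extremely long tail; consequently the quantile function $F_\chi^{-1}(1-p;\kappa)$ is negligible for all but the very smallest $p_i$, so that the sum $\sum_{i=1}^M F_\chi^{-1}(1-p_i;\kappa)$ is asymptotically dominated by the single term corresponding to $p_{(1)}$. If this domination is sharp enough, then $F_\chi\bigl(\sum_i F_\chi^{-1}(1-p_i;\kappa); M\kappa\bigr)$ behaves like $F_\chi\bigl(F_\chi^{-1}(1-p_{(1)};\kappa); M\kappa\bigr)$, and one computes that the latter tends to $1-(1-p_{(1)})^M$, i.e. $1 - \tippool(\ve{p})$, giving the claim.

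Concretely I would proceed as follows. First, establish the scaling of the $\chi^2_\kappa$ tail: using $F_\chi(x;\kappa) = 1 - \frac{\Gamma(\kappa/2, x/2)}{\Gamma(\kappa/2)}$ and the fact that $\Gamma(\kappa/2) \sim 2/\kappa$ as $\kappa \to 0^+$, show that $1 - F_\chi(x;\kappa) \approx \frac{\kappa}{2}\,E_1(x/2)$ for fixed $x>0$, where $E_1$ is the exponential integral, so $1 - F_\chi(x;\kappa) \to 0$ linearly in $\kappa$. Inverting, for a fixed upper-tail probability $p$ the quantile $F_\chi^{-1}(1-p;\kappa)$ is of order $\log(1/\kappa)$ — it grows, but only logarithmically — and crucially, the quantile corresponding to the \emph{smallest} $p$-value dominates: more precisely I would show $F_\chi^{-1}(1-p_i;\kappa) / F_\chi^{-1}(1-p_{(1)};\kappa) \to 1$ if $p_i$ achieves the minimum and the ratio stays bounded otherwise, so that $\sum_{i=1}^M F_\chi^{-1}(1-p_i;\kappa) = F_\chi^{-1}(1-p_{(1)};\kappa)\bigl(1 + o(1)\bigr)$, with the $o(1)$ uniform enough to control the next step. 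Second, evaluate $\chipool{\ve{p}}{\kappa}$ on this leading term: write $y_\kappa = F_\chi^{-1}(1-p_{(1)};\kappa)$, so $1 - F_\chi(y_\kappa;\kappa) = p_{(1)}$, and using the tail asymptotics again compute $1 - F_\chi\bigl(M y_\kappa(1+o(1)); M\kappa\bigr)$. Since $1 - F_\chi(x;\nu) \approx \frac{\nu}{2} E_1(x/2)$ and $E_1$ decays like $e^{-x/2}/(x/2)$, one gets $1 - F_\chi(M y_\kappa; M\kappa) \approx \frac{M\kappa}{2} E_1(M y_\kappa/2) \approx M \cdot \frac{\kappa}{2} E_1(y_\kappa/2) \cdot (\text{correction}) \approx M p_{(1)}$ to leading order — the factor $M$ coming from $M\kappa$ versus $\kappa$, and the dominant exponential $e^{-M y_\kappa/2} = (e^{-y_\kappa/2})^M$ reproducing exactly the power $M$. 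Carrying the expansion one more order should yield $1 - F_\chi(M y_\kappa; M\kappa) \to 1 - (1-p_{(1)})^M$ rather than just $M p_{(1)}$, matching $1 - \tippool(\ve{p})$ exactly.

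An alternative, cleaner route — and the one I would actually prefer to present — avoids the delicate uniform asymptotics entirely by leaning on Theorem \ref{MC:thm:pc>pr} and a monotonicity/squeezing argument. One shows directly that (i) $\centquot(\chipool{}{\kappa}) \to 0$ as $\kappa \to 0$, by computing $p_c(\chipool{}{\kappa}) = 1 - F_\chi\bigl(\tfrac{1}{M}F_\chi^{-1}(1-\alpha;M\kappa);\kappa\bigr)$ from Equation (\ref{MC:eq:quantileEqual}) and $p_r(\chipool{}{\kappa}) = 1 - F_\chi\bigl(F_\chi^{-1}(1-\alpha;M\kappa) - (M-1)\lim_{x\to 0^+}F_\chi^{-1}(x;\kappa);M\kappa\bigr) = 1 - F_\chi\bigl(F_\chi^{-1}(1-\alpha;M\kappa);\kappa\bigr)$ from Equation (\ref{MC:eq:marginalRejAbsCont}) — note $\lim_{x\to 0^+}F_\chi^{-1}(x;\kappa)=0$ — and showing both converge to the common value $1-(1-\alpha)^{1/M} = p_c(\tippool)$, so $\centquot \to 0$. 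By Theorem \ref{MC:thm:pc>pr}, $\centquot = 0$ forces the pooled $p$-value to be a function of $p_{(1)}$ alone, and since it is a valid pooled $p$-value it must equal $\tippool$. Then (ii) to upgrade this limiting-functional conclusion to genuine pointwise convergence of $\chipool{\ve{p}}{\kappa}$ to $\tippool(\ve{p})$ at every fixed $\ve{p}$, combine the convexity/monotonicity of the acceptance regions with the fact that the $\alpha$-level surface converges for every $\alpha$, which pins down the limit of $\chipool{\ve{p}}{\kappa}$ everywhere. The main obstacle in either approach is the same in spirit: controlling $F_\chi^{-1}$ and $F_\chi$ simultaneously as \emph{both} the degrees-of-freedom parameter $\to 0$ and its argument $\to \infty$, a joint limit where standard fixed-$\nu$ asymptotics do not directly apply; the squeezing argument is more robust because it only requires the one-dimensional limits of $p_c$ and $p_r$, which are comparatively routine Gamma-function estimates, rather than a uniform-in-$i$ control of the full sum.
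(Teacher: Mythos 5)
Your preferred (second) route is essentially the paper's own proof: the paper likewise reduces the claim to showing $\lim_{\kappa \rightarrow 0} p_c\big(\chipool{}{\kappa}\big) = \lim_{\kappa \rightarrow 0} p_r\big(\chipool{}{\kappa}\big)$ and then invokes the iff characterization in Theorem \ref{MC:thm:pc>pr} to conclude the limit is $\tippool(\ve{p})$. The only substantive differences are that the paper establishes the coincidence by bounding $F_{\chi}^{-1}(1-\alpha; M\kappa) \leq M\kappa/\alpha$ via Markov's inequality so that both quantile arguments vanish, whereas you compute the common limiting value $1-(1-\alpha)^{1/M}$ explicitly from incomplete-gamma asymptotics (a sharper and arguably safer statement, since two arguments tending to $0$ need not give equal limiting tail probabilities when the degrees of freedom also tend to $0$), and your step (ii) upgrading the level-surface convergence to pointwise convergence addresses a gap the paper leaves implicit.
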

\begin{proof}
  Theorem \ref{MC:thm:pc>pr} proves that $p_r = p_c$ for a pooled $p$-value if and only if that pooled $p$-value is $\tippool = \ordpool{}{1}$. Therefore, the limit is proven if
  $$\lim_{\kappa \rightarrow 0} p_r \big ( \chipool{}{\kappa} \big ) = \lim_{\kappa \rightarrow 0} p_c \big ( \chipool{}{\kappa} \big )$$
  Expressing these quantities as probability statements gives
  $$p_c\big ( \chipool{}{\kappa} \big ) = P \left ( \chi^2_{\kappa} \geq \frac{1}{M} F_{\chi}^{-1} ( 1 - \alpha; M\kappa ) \right ),$$
  and
  $$p_r\big ( \chipool{}{\kappa} \big ) = P \left ( \chi^2_{\kappa} \geq F_{\chi}^{-1} ( 1 - \alpha; M\kappa ) \right ).$$
  The case of $\chi^2_0$ is a degenerate distribution at 0. That is
  $$F_{\chi^2}(x; 0) = \begin{cases} 0 & x < 0 \\ 1 & x \geq 0. \end{cases}$$
  This can also be seen from the limit of Markov's inequality for the $\chi^2_{\kappa}$ distribution,
  $$P(\chi^2_{\kappa} \geq a) \leq \frac{\kappa}{a},$$
  which goes to zero for any $a > 0$ as $\kappa \rightarrow 0$. As $F_{\chi}(x; \kappa)$ is continuous and monotonically increasing for all $\kappa$, this also implies
  $$F \left ( \frac{\kappa}{\alpha}; \kappa \right ) \geq 1 - \alpha$$
  $$\implies 1 - F \left ( \frac{\kappa}{\alpha}; \kappa \right ) \leq \alpha$$
  $$\implies P \left ( \chi^2_{\kappa} \geq \frac{\kappa}{\alpha} \right ) \leq \alpha$$
  $$\implies F^{-1}_{\chi}(1 - \alpha; \kappa) \leq \frac{\kappa}{\alpha}.$$
  This bound is not particularly tight, for $\alpha = 0.05$ it only restricts the $0.95$ quantile to be less than 20 times the mean. However, it suffices to evaluate
  \begin{align}
    & \lim_{\kappa \rightarrow 0} \left | \frac{1}{M} F^{-1}_{\chi} (1 - \alpha; M\kappa) - F^{-1}_{\chi} (1 - \alpha; M\kappa) \right | \nonumber \\
    & \nonumber \\
    = & \frac{M - 1}{M} \lim_{\kappa \rightarrow 0} F^{-1}_{\chi} (1 - \alpha; M\kappa) \nonumber \\
    & \nonumber \\
    \leq & \frac{M - 1}{M} \lim_{\kappa \rightarrow 0} \frac{M\kappa}{\alpha} = 0 \nonumber
  \end{align}
  and therefore
  $$\lim_{\kappa \rightarrow 0} \frac{1}{M} F^{-1}_{\chi} (1 - \alpha; M\kappa) = \lim_{\kappa \rightarrow 0} F^{-1}_{\chi} (1 - \alpha; M\kappa)$$
  for any $\alpha > 0$. This implies that $p_c \big ( \chipool{}{\kappa} \big ) = p_r \big ( \chipool{}{\kappa} \big )$ in the limit $\kappa \rightarrow 0$.
\end{proof}

The result of Theorem \ref{thm:gchiSmall} can be understood intuitively using the non-central $\chi^2_0$ of \cite{siegel1979noncentral} with a non-centrality parameter $\lambda$, call it $\chi^2_0(\lambda)$. When $\lambda \rightarrow 0$, $\chi^2_0(\lambda) \rightarrow \chi^2_0$ in distribution, so taking $\chi^2_0(\lambda)$ with small $\lambda$ should provide some sense of how $\chi^2_0$ behaves. Unlike $\chi^2_0$, however, $\chi^2_0(\lambda)$ has a discrete probability mass at 0 for all $\lambda > 0$.
As a result, the quantile function of $\chi^2_0(\lambda)$, $F^{-1}_{\lambda}$, returns zero for any input less than $e^{-\frac{\lambda}{2}}$ and so the terms in the sum
$$\sum_{i = 1}^M F_{\lambda}^{-1}(1 - p_i)$$
are non-zero only for those $i$ where $p_i \leq 1 - e^{-\frac{\lambda}{2}}$. As $\lambda \rightarrow 0$, this sum becomes arbitrarily close to $\chipool{}{0}$ but only the smallest $p$-values contribute. Eventually, only the minimum contributes to the sum, and so $\chipool{}{\kappa} \approx f(p_{(1)})$ for very small $\kappa$ values.

The limits $\lim_{\kappa \rightarrow 0} \chipool{\ve{p}}{\kappa} = \tippool(\ve{p})$ and $\lim_{\kappa \rightarrow \infty} \chipool{\ve{p}}{\kappa} = \stopool(\ve{p})$ are also demonstrated empirically by generating $n_{sim}$ independent realizations of $\ve{p}$ assuming $H_0$ is true. For each vector $\ve{p}_i$, compute $\chipool{\ve{p}_i}{\kappa}$ for a range of $\kappa$, $\stopool(\ve{p}_i)$, and $\tippool(\ve{p}_i)$ and compare $\chipool{\ve{p}_i}{\kappa}$ to the other two pooled $p$-values. Figure \ref{MC:fig:chikTipNorm} shows this pattern for a few $\kappa$ when $M = 5$.

\begin{figure}[htp] 
  \begin{center}
    \begin{tabular}{c}
      \includegraphics[scale = 0.9]{./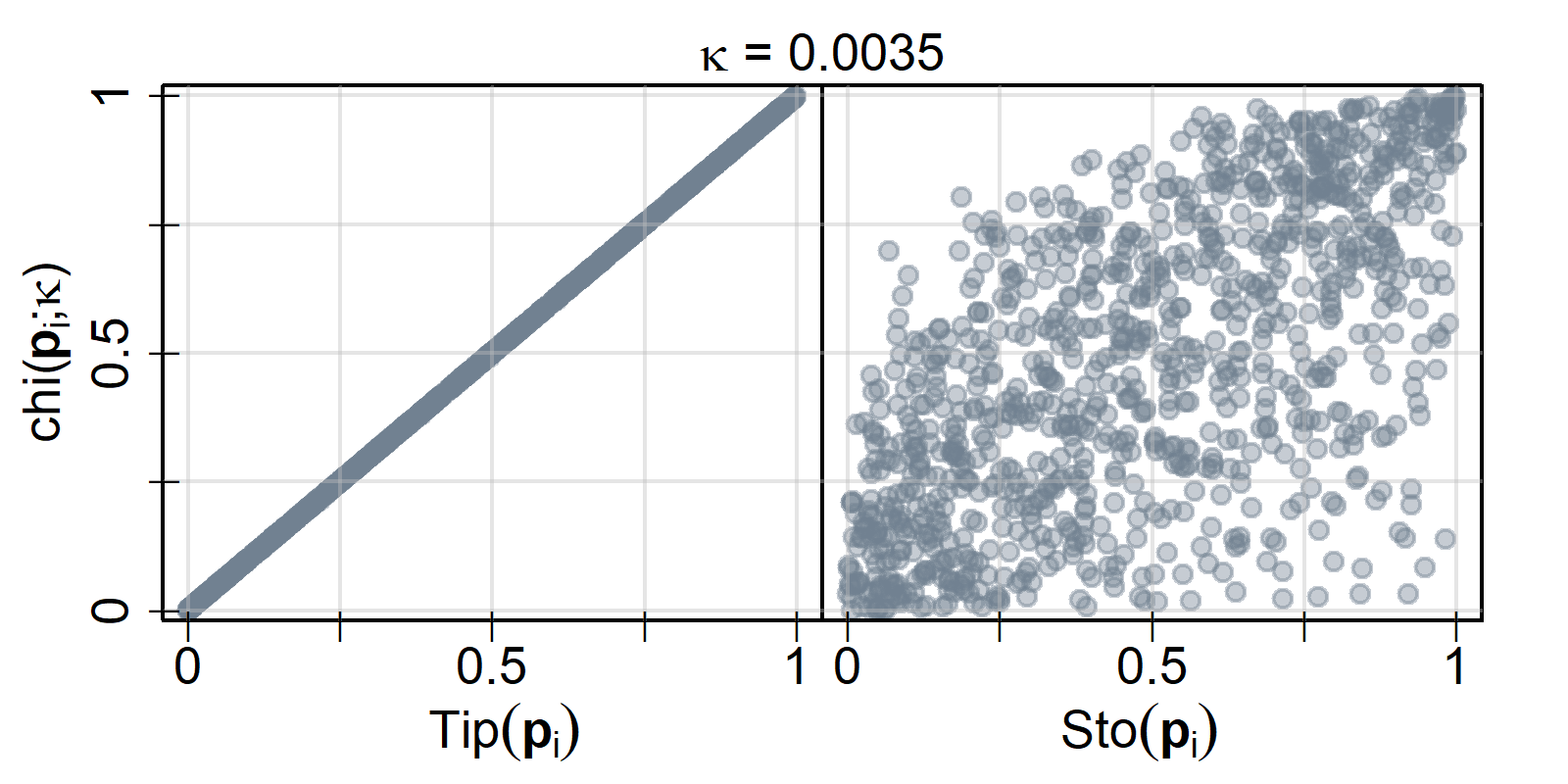} \\
      {\footnotesize (a)} \\
      \includegraphics[scale = 0.9]{./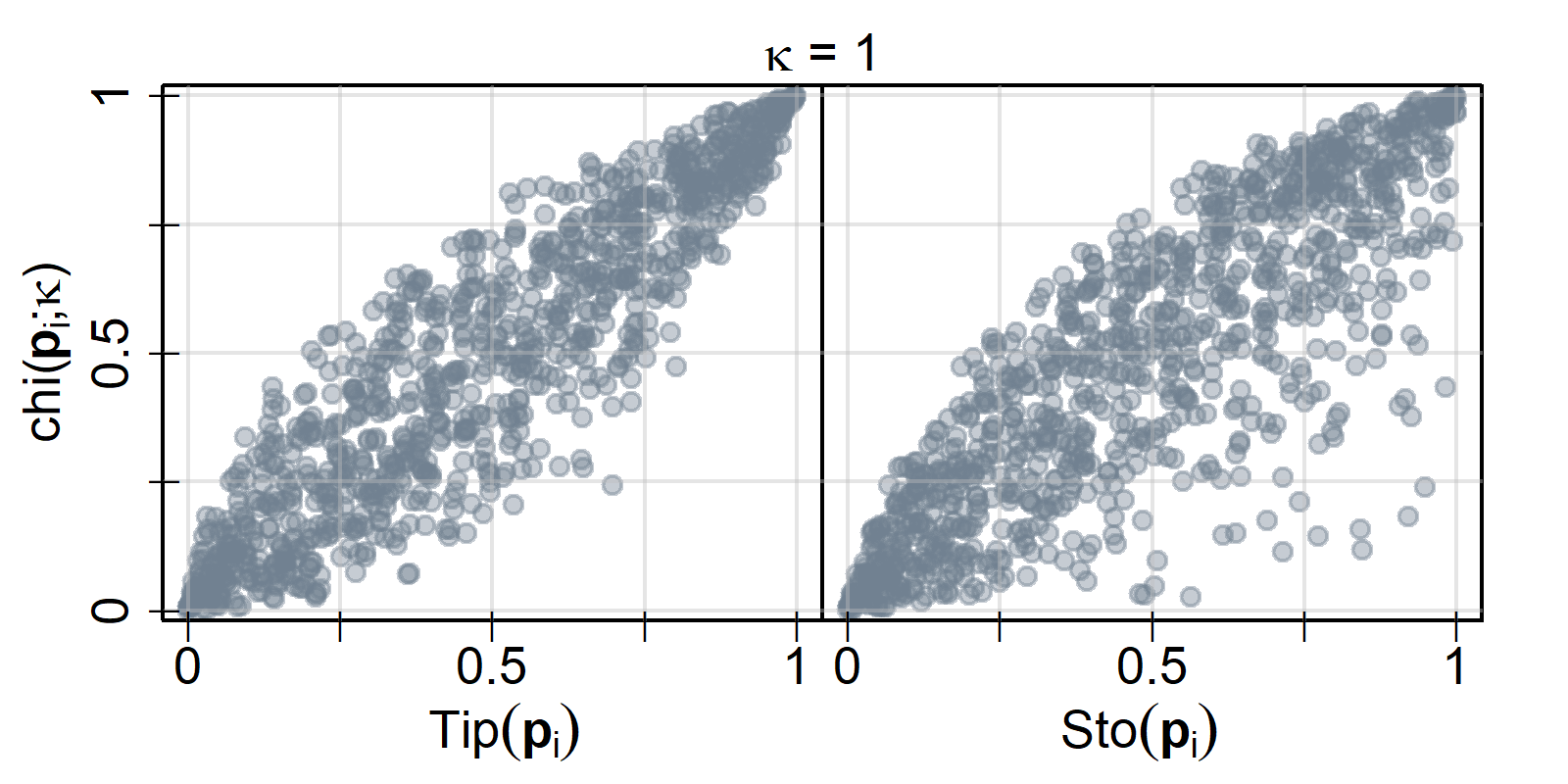} \\
      {\footnotesize (b)} \\
      \includegraphics[scale = 0.9]{./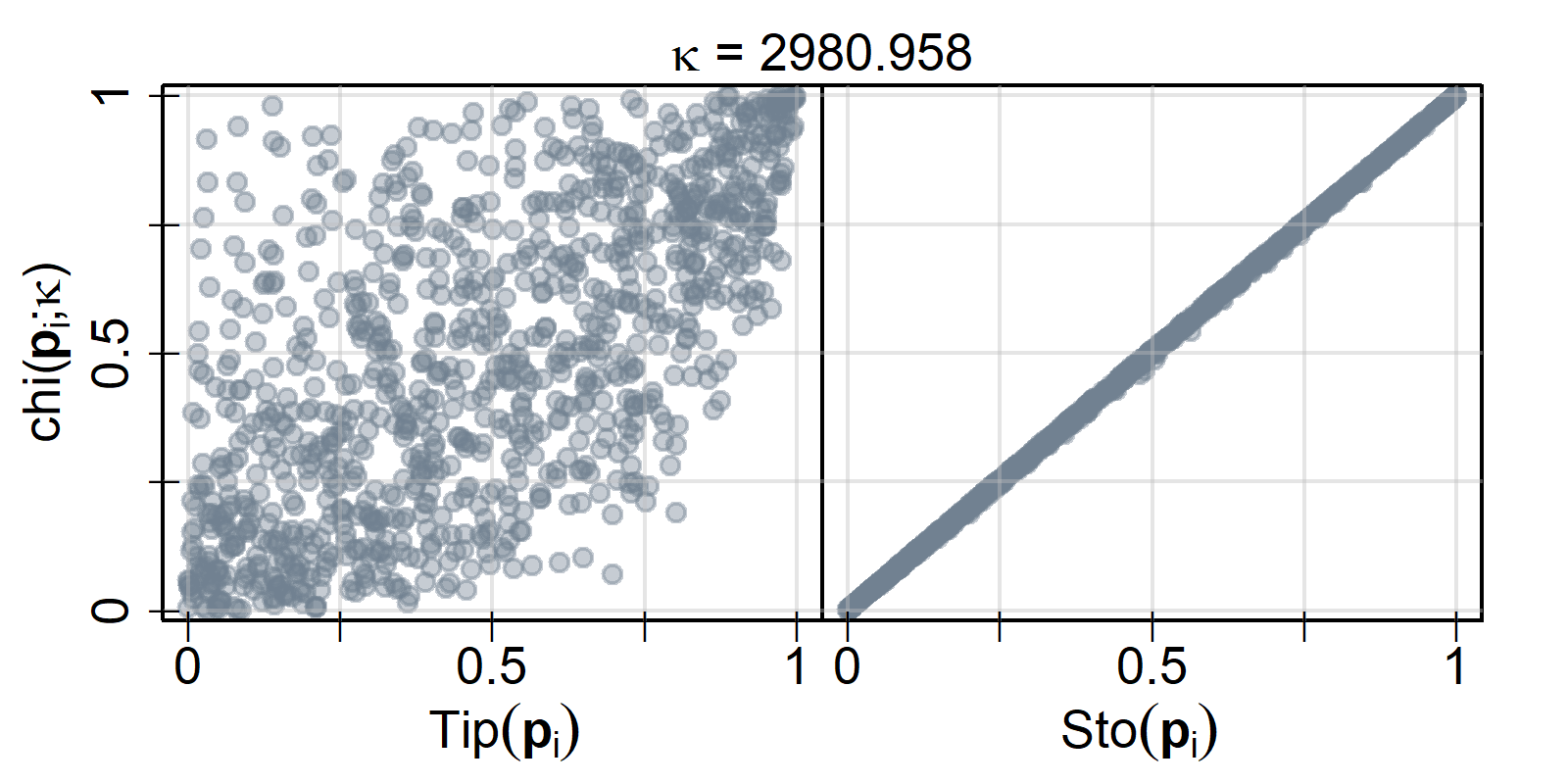} \\ 
      {\footnotesize (c)} \\
    \end{tabular}
    \caption{A comparison of $\chipool{\ve{p}_i}{\kappa}$, $\stopool(\ve{p}_i)$, and $\tippool(\ve{p}_i)$ values for 1000 independently generated $\ve{p}_i \sim Unif([0,1]^5)$ in the case of (a) small $\kappa$, (b) moderate $\kappa$, and (c) large $\kappa$.}
    \label{MC:fig:chikTipNorm}
  \end{center}
\end{figure}

As expected, the agreement between $\chipool{\ve{p}}{\kappa}$ and $\tippool(\ve{p})$ is perfect for small enough $\kappa$, the two functions have identical outputs for $\kappa = 0.0035$ in Figure \ref{MC:fig:chikTipNorm}(a). Similarly, $\chipool{\ve{p}}{\kappa}$ and $\stopool(\ve{p})$ match for large $\kappa$, as when $\kappa \approx 3000$ in Figure \ref{MC:fig:chikTipNorm}(c). Note that the particular values of $\kappa$ where this close agreement occurs will depend on $M$.

Perhaps more interesting is the curved boundary of the points along the top of the plot of $\chipool{\ve{p}}{0.0035}$ against $\stopool(\ve{p})$ in Figure \ref{MC:fig:chikTipNorm}(a), many points populate the lower right corner of this plot but there are none in the upper left. This pattern is mirrored in Figure \ref{MC:fig:chikTipNorm}(c) for the plot of $\chipool{\ve{p}}{2981}$ against $\tippool(\ve{p})$. As $\chipool{\ve{p}}{\kappa}$ is essentially identical to one of $\tippool(\ve{p})$ or $\stopool(\ve{p})$ in these cases, this pattern reflects the relationship between $\tippool(\ve{p})$ and $\stopool(\ve{p})$. By definition, $\tippool(\ve{p})$ considers only $p_{(1)}$, but any of the $p_i$ can impact $\stopool(\ve{p})$. As a result there will be many cases where a small $\tippool(\ve{p})$ occurs despite a large $\stopool(\ve{p})$ because a very small $p_{(1)}$ happens by chance. The reverse is impossible, if $\tippool(\ve{p})$ is large then $p_{(1)}$ is large and therefore all values in $\ve{p}$ are large, suggesting a large $\stopool(\ve{p})$.

\subsection{Choosing a parameter}

In addition to these meaningful limits, there seems to be a monotonically increasing relationship between $\kappa$ and $\centquot(\chipool{}{\kappa})$. Let $\chi^*_{\kappa}(\alpha)$ be the $1 - \alpha$ quantile of the $\chi^2$ distribution with $\kappa$ degrees of freedom, then $\chipool{\ve{p}}{\kappa}$ has the central rejection level
\begin{equation} \label{MC:eq:chiMethPc}
p_c \big ( \chipool{}{\kappa} \big )= 1 - F_{\chi} \left ( \frac{1}{M} F_{\chi}^{-1}(1 - \alpha; M\kappa ); \kappa \right ) = P \left (\chi^2_{\kappa} \geq \frac{1}{M} \chi^*_{M\kappa}(\alpha) \right )
\end{equation}
and the marginal rejection level
\begin{equation} \label{MC:eq:chiMethPr}
p_r \big ( \chipool{}{\kappa} \big ) = 1 - F_{\chi} \bigg ( F_{\chi}^{-1}(1 - \alpha; M\kappa ); \kappa \bigg ) = P \bigg (\chi^2_{\kappa} \geq \chi^*_{M\kappa}(\alpha) \bigg ),
\end{equation}
implying
\begin{eqnarray} 
  \centquot(\chipool{}{\kappa}) & = & \frac{p_c \big ( \chipool{}{\kappa} \big ) - p_r \big ( \chipool{}{\kappa} \big )}{p_c \big ( \chipool{}{\kappa} \big )} \nonumber \\
  & & \nonumber \\
  & = & P \left ( \chi^2_{\kappa} \leq \chi^*_{M\kappa}(\alpha) \biggiven \chi^2_{\kappa} \geq \frac{1}{M} \chi^*_{M\kappa}(\alpha) \right ). \label{MC:eq:chiMethC}
\end{eqnarray}
That is, the centrality quotient of $\chipool{\ve{p}}{\kappa}$ is the conditional probability that a $\chi^2_{\kappa}$ random variable is less than $\chi^*_{M\kappa}(\alpha)$ given that it is greater than $\frac{1}{M} \chi^*_{M\kappa}(\alpha)$.

A better sense of the region corresponding to this conditional probability for $\alpha < 0.5$ is garnered by writing $\chi^*_{M\kappa}$ in terms of the mean of the $\chi^2_{M \kappa}$ distribution, $M\kappa$. Taking an arbitrary remainder function $R_{M\kappa}(\alpha) > 0$ such that $\chi^*_{M\kappa}(\alpha) := M\kappa + R_{M\kappa}(\alpha)$, subsitution gives
$$\centquot(\chipool{}{\kappa}) = P \left ( \chi^2_{\kappa} \leq M\kappa + R_{M\kappa}(\alpha) \biggiven \chi^2_{\kappa} \geq \kappa + \frac{1}{M}R_{M\kappa}(\alpha) \right ),$$
clarifying that $\centquot(\chipool{}{\kappa})$ is a conditional probability on the right tail of the $\chi^2_{\kappa}$ distribution when $\alpha < 0.5$. Making more precise statements about $R_{M\kappa}(\alpha)$ is challenging due to the small values of $\kappa$ which may be chosen for $\chipool{}{\kappa}$. Most approximations of $\chi^2$ tail probabilities and quantiles either break down when the degrees of freedom is less than one or explicitly assume more than one degrees of freedom (\citealp{hawkinswixley1986chitrans, canal2005normal, inglot2010inequalities}).
Nonetheless, the above probability can be computed numerically, as was done for the curves of $\centquot(\chipool{}{\kappa})$ by $\log_{10} ( \kappa )$ for $M$ ranging from 2 to 10,000 in Figure \ref{MC:fig:cgChi}.

\begin{figure}[h!]
  \begin{center}
    \includegraphics{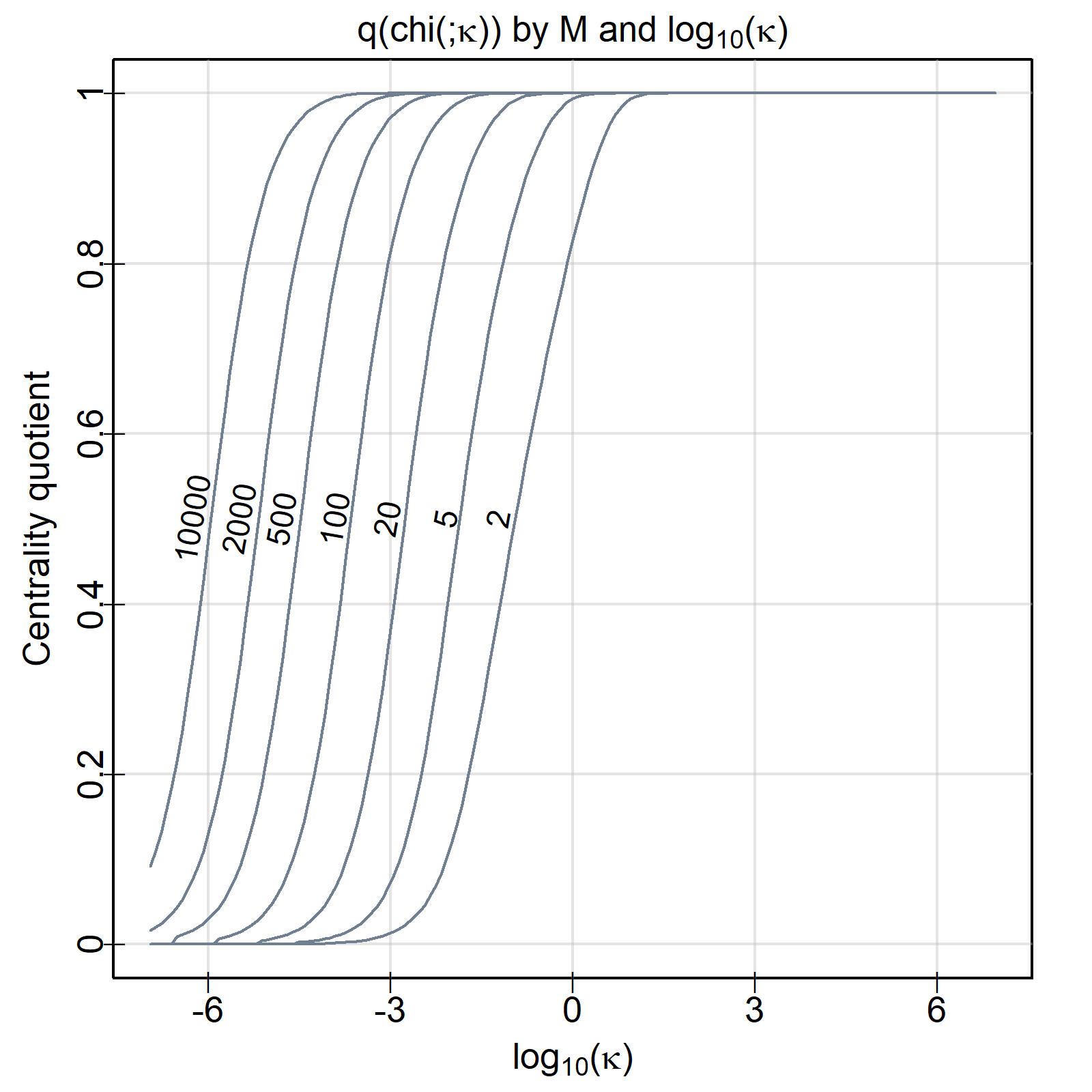}
  \end{center}
  \caption{The centrality quotient of $\chipool{}{\kappa}$ by $\log_{10}(\kappa)$}
  \label{MC:fig:cgChi}
\end{figure}

The curves of $\centquot(\chipool{}{\kappa})$ by $\kappa$ have a consistent sigmoid shape for all $M$. Most of the change in the centrality quotient occurs for values in a three unit range in $\log_{10}( \kappa )$ for any $M$, though the centre of this range decreases as $M$ grows. When $\kappa = 10^{-3}$, for example, the centrality quotient when $M = 100$ is greater than 0.8 while the same $\kappa$ value corresponds with a centrality quotient of less than 0.05 when $M = 2$. Just as with any other pooled $p$-value, increasing $M$ increases the centrality of $\chipool{}{\kappa}$ for a given $\kappa$ as the sum of independent $p$-values becomes more normally distributed by the central limit theorem.\footnote{This can also be understood geometrically. For a pooled $p$-value in $M$ dimensions, the volume of the marginal shell of width $p_r$ is $1 - (1 - p_r)^M$, which approaches 1 for any $p_r > 0$ as $M \rightarrow \infty$. As the total volume of the rejection region is $\alpha$ for the rejection rule $g(\ve{p}) \leq \alpha$, $p_r$ must decrease in $M$ to hold the volume constant.}

In practice, the inverse of the above curves may be of greater interest to control the centrality quotient under $\chipool{\ve{p}}{\kappa}$ rather than simply report it. Figure \ref{MC:fig:cgChi} does allow the selection of $\kappa$ for a given centrality quotient by estimating the $\kappa$ value where the intersection between a curve and a vertical line at $\kappa$ is at the desired quotient, but a table displaying the numerically estimated inverse for evenly-spaced $\kappa$ as in Table \ref{MC:tab:kappaByM} is more precise and straightforward to use. Determining the desired $\log_{10} ( \kappa )$ for a given centrality quotient and $M$ proceeds as for a table of critical values. The user searches down the columns for the $M$ most closely corresponding to the setting at hand, and then searches through that row for the desired column. If $\centquot(\chipool{}{\kappa}) = 1$ or $0$ is desired, the table is unnecessary as $\stopool(\ve{p})$ or $\tippool(\ve{p})$ can be used directly. Unlike with critical value tables, there is no need to be conservative: linear interpolation between the provided $\log_{10}(\kappa)$ values is a reasonable approach to choosing $\kappa$.

\begin{table}[!ht]
  \begin{center}
    \begin{tabular}{|c|rrrrrrrrr|}
      \hline  & \multicolumn{9}{c|}{Centrality quotient} \\
      $M$ & \multicolumn{1}{l}{0.1} &  \multicolumn{1}{l}{0.2} &  \multicolumn{1}{l}{0.3} &  \multicolumn{1}{l}{0.4} &  \multicolumn{1}{l}{0.5} &  \multicolumn{1}{l}{0.6} &  \multicolumn{1}{l}{0.7} &  \multicolumn{1}{l}{0.8} &  \multicolumn{1}{l|}{0.9} \\ \hline
      2 & -2.1 & -1.7 & -1.4 & -1.2 & -0.9 & -0.7 & -0.4 & -0.1 & 0.3 \\
      5 & -2.8 & -2.5 & -2.3 & -2.1 & -1.9 & -1.7 & -1.4 & -1.2 & -0.8 \\
      20 & -3.7 & -3.4 & -3.1 & -2.9 & -2.8 & -2.6 & -2.4 & -2.1 & -1.8 \\
      100 & -4.6 & -4.3 & -4 & -3.8 & -3.7 & -3.5 & -3.3 & -3 & -2.7 \\
      500 & -5.4 & -5.1 & -4.8 & -4.7 & -4.5 & -4.3 & -4.1 & -3.9 & -3.5 \\
      2000 & -6.1 & -5.8 & -5.5 & -5.3 & -5.2 & -5 & -4.8 & -4.6 & -4.2 \\
      10000 & -6.9 & -6.6 & -6.3 & -6.1 & -6 & -5.8 & -5.6 & -5.4 & -5 \\ \hline
    \end{tabular}
  \end{center}
  \caption{$\log_{10} (\kappa)$ values by $\centquot(\chipool{}{\kappa})$ and $M$ to aid in parameter selection for the desired balance of central and marginal rejection.}
  \label{MC:tab:kappaByM}
\end{table}

Using the parameter $\kappa$ of $\chipool{\ve{p}}{\kappa}$, the relative preference of $\chipool{\ve{p}}{\kappa}$ to rejection along the margins or in the centre can be directly controlled. Large $\kappa$ produce a pooled $p$-value which is powerful at detecting evidence spread among all tests, while small $\kappa$ favour the detection of concentrated evidence in a single test with extremes giving the widely-used $\tippool(\ve{p})$ and $\stopool(\ve{p})$. The parameter $\kappa$ orders pooled $p$-values of the $\chipool{\ve{p}}{\kappa}$ family by relative centrality, simplifying the choice of pooled $p$-value and communication of results. Finally, as it is based on Equation (\ref{MC:eq:quantilePval}), it is an exact quantile-based method which does not rely on asymptotic behaviour and which could, hypothetically, be computed by hand with the aid of $\chi^2$ quantile tables.

\subsection{Comparing the chi-squared pooled $p$-value to the UMP benchmark} \label{chipool:chipoolvsHR}

Recall the simulation studies that motivated the exploration of central and marginal rejection levels. After a benchmark power computation, the power of $\hrpool{\ve{p}}{w}$ for $\alpha = 0.05$ was evaluated under $H_4$ for a range of beta alternatives ($f = Beta(a, 1/\omega + a(1 - 1/\omega))$) with KL divergences from uniform ($D(a,\omega)$) spanning $e^{-5}$ to $e^5$. Correct specification of $w$ was important: the larger the magnitude of $w - \omega$, the larger the decrease in power of $\hrpool{\ve{p}}{w}$ from $\hrpool{\ve{p}}{\omega}$. Under $H_3$, mis-specification did not matter at all, the power of $\hrpool{\ve{p}}{w}$ was dictated by the the proportion of false null hypotheses ($\prevalence$) and the strength of evidence against $H_0$ in each non-null hypothesis ($D(a,\omega)$). The parameter $w$ tunes $\hrpool{\ve{p}}{w}$ to favour either weak evidence spread among all tests, or strong evidence in only a few.

Finer selection of this tradeoff is achieved with the parameter $\kappa$ using the $\chipool{\ve{p}}{\kappa}$ family of pooled $p$-values, but controlling centrality is of little use if $\chipool{\ve{p}}{\kappa}$ is not powerful under the settings that motivated their definition. The power of $\chipool{\ve{p}}{\kappa}$ at level $\alpha = 0.05$ for each $\kappa \in \{e^{-8}, e^{-4}, 1, 2, e^4, e^8\}$ was therefore determined under every setting from Section \ref{sec:power} using the same simulated samples generated under $H_4$. Prior to the simulation, it is expected is that large $\kappa$ will be uniformly more powerful than small $\kappa$, as under $H_4$ evidence is spread among all tests. The results confirmed this expectation: the most powerful $\chipool{\ve{p}}{\kappa}$ for all settings under $H_4$ was $\chipool{\ve{p}}{e^{8}} \approx \chipool{\ve{p}}{2981}$. It is compared to the both the UMP and mis-specified $\hrpool{\ve{p}}{w}$ in Figure \ref{fig:chiComparedMis} by adding a dark grey line to Figure \ref{fig:MisPowers}.

\begin{figure}[!ht] 
  \begin{center}
      \includegraphics{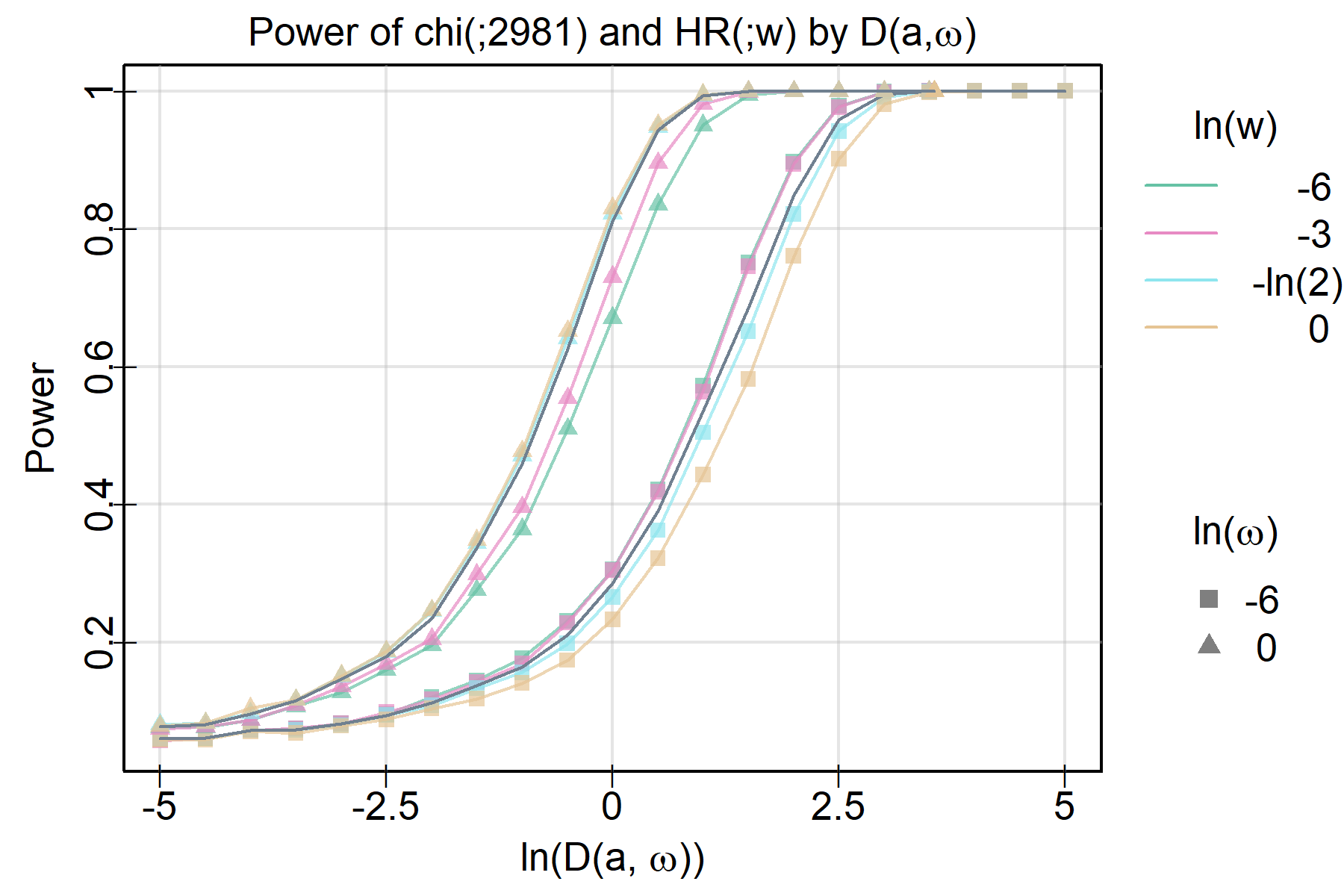}
    \caption{A comparison of $\chipool{\ve{p}}{2981}$ to the UMP and mis-specified $\hrpool{\ve{p}}{w}$ under $H_4$. $\chipool{\ve{p}}{2981}$ nearly UMP power more consistently than any $\hrpool{\ve{p}}{w}$, and so is more robust to $f$.}
    \label{fig:chiComparedMis}
  \end{center}
\end{figure}

$\chipool{\ve{p}}{2981}$ has higher power than most mis-specified $\hrpool{\ve{p}}{w}$ for all settings in this case and is close to the UMP more consistently than any $\hrpool{\ve{p}}{w}$. Only when $w \approx \omega$ does $\hrpool{\ve{p}}{w}$ beat $\chipool{\ve{p}}{2981}$, and so it is less robust to mis-specification of $f$ than $\chipool{\ve{p}}{2981}$. It may therefore be advisable to use $\chipool{\ve{p}}{\kappa}$ with a large $\kappa$ (or simply $\stopool(\ve{p})$) when testing $H_4$ with beta alternatives in the case where $\omega$ is not known, rather than risk the penalty of choosing $w$ wrong when using $\hrpool{\ve{p}}{w}$. This is despite the fact that $\hrpool{\ve{p}}{\omega}$ is UMP for this setting.

For the case where $\ve{p}$ was generated under $H_3$, $\chipool{\ve{p}}{\kappa}$ was again computed for each $\kappa \in \{e^{-8}, e^{-4}, 1, 2, e^4, e^8\}$ over the 10,000 independent samples for each setting of $D(a,\omega)$, $\omega$, and $\prevalence$ with $M = 10$ from Section \ref{sec:power}. Contour plots analogous to Figure \ref{fig:H3DiffConts} showing the differences in power between $\chipool{\ve{p}}{\kappa}$ and $\hrpool{\ve{p}}{1} = \chipool{\ve{p}}{2}$ were generated. The reference $\hrpool{\ve{p}}{1}$ was chosen because it is a test shared by both the $\chipool{\ve{p}}{\kappa}$ and $\hrpool{\ve{p}}{w}$ families.

\begin{figure}[!ht]
  \begin{center}
    \begin{tabular}{c}
      \begin{tabular}{c} \includegraphics{./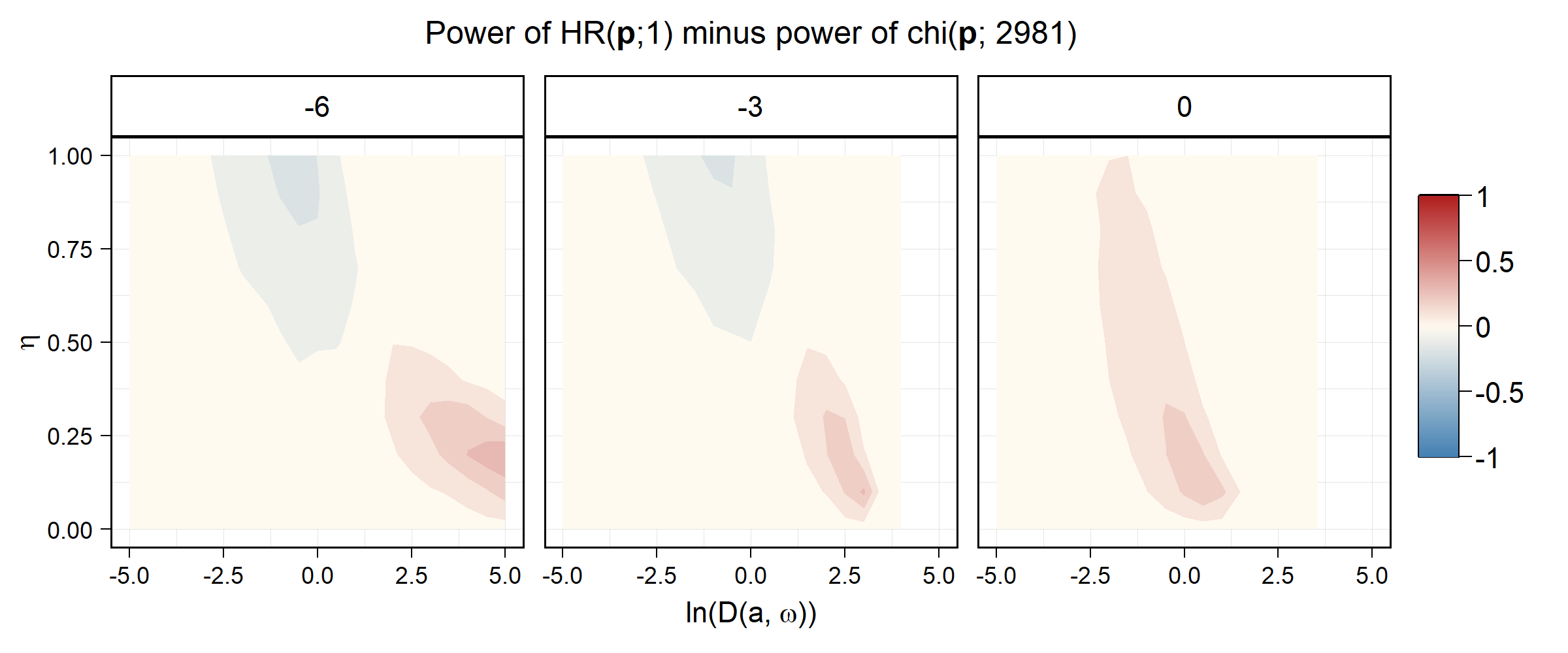} \end{tabular} \\
      {\footnotesize (a)} \\
      \begin{tabular}{c} \includegraphics{./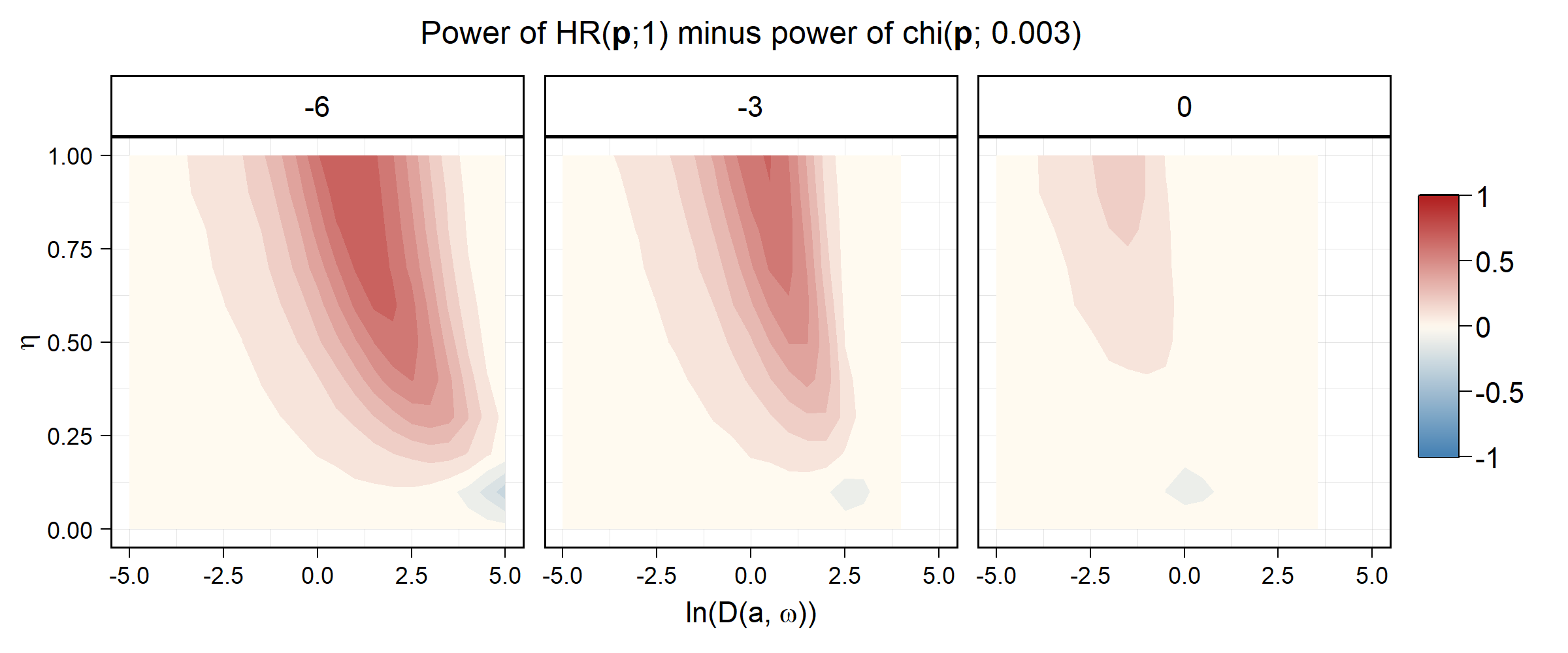} \end{tabular} \\
      {\footnotesize (b)} \\
    \end{tabular}
    \caption{Contours for the power of $\hrpool{\ve{p}}{1} = \chipool{\ve{p}}{2}$ minus (a) $\chipool{\ve{p}}{e^8} \approx \chipool{\ve{p}}{2981}$ and (b) $\chipool{\ve{p}}{e^{-8}} \approx \chipool{\ve{p}}{0.003}$ by $\prevalence$ and $D(a, \omega)$ facetted by $\omega$. Compared to Figure \ref{fig:H3DiffConts}, (a) displays less of a penalty for the case of concentrated evidence while still outperforming $\hrpool{\ve{p}}{1}$ for evidence spread among all tests.}
  \label{fig:BothChiDiffs}
  \end{center}
\end{figure}

The patterns of power for $\chipool{\ve{p}}{\kappa}$ mimic those of $\hrpool{\ve{p}}{w}$: large $\kappa$ favour evidence spread among all tests as do small $w$ in $\hrpool{\ve{p}}{w}$. Despite this similarity, $\chipool{\ve{p}}{2981}$ has higher power when applied to the case of concentrated evidence and so is more robust under $H_3$. This is seen clearly in a comparison of the bottom right corner of Figure \ref{fig:H3DiffConts} to Figure \ref{fig:BothChiDiffs}(a), the former shows a much larger and darker red region than the latter.

The $\chipool{\ve{p}}{\kappa}$ family also extends the range of possible centrality parameters compared to $\hrpool{\ve{p}}{w}$. As $\hrpool{\ve{p}}{1} = \chipool{\ve{p}}{2}$ is one of the boundaries of the $w$ parameter range, no comparable pooled $p$-values to $\chipool{\ve{p}}{\kappa}$ for $\kappa < 2$ exist in the $\hrpool{\ve{p}}{w}$ family. Using $\chipool{\ve{p}}{\kappa}$ therefore gives greater control over the balance of central and marginal rejection than $\hrpool{\ve{p}}{w}$, though it seems exceptionally small $\kappa$ in $\chipool{\ve{p}}{\kappa}$, or equivalently $\tippool(\ve{p})$, should only be used sparingly. Figure \ref{fig:BothChiDiffs}(b) shows that $\chipool{\ve{p}}{0.003}$ loses power almost everywhere compared to $\chipool{\ve{p}}{2}$ in exchange for higher power only in the case of extreme evidence in a single test. Under $H_3$, very small values of $\kappa$ should probably only be used if such a pattern of evidence is strongly suspected.

The $\chipool{\ve{p}}{\kappa}$ family is therefore of interest both practically and theoretically. It provides control over central and marginal rejection under $H_3$ and robustly gives nearly UMP power for large values of $\kappa$ under $H_4$. It has interpretable endpoints which cover a greater range of centrality quotients than $\hrpool{\ve{p}}{w}$ and gives a means of controlling the bias towards central rejection present in all quantile pooled $p$-values as $M$ increases. $\chipool{\ve{p}}{\kappa}$ is a pooled $p$-value with great potential as a practical tool for controlling the FWER when testing $H_0$.

\section{Identifying plausible alternative hypotheses and selecting tests} \label{sec:chiIdentifying}

The link between $\kappa$, the centrality quotient, and relative power in regions of the $D(a, w), \prevalence$ plane under $H_3$ can be exploited to identify alternatives to $H_0$ that could have plausibly generated $\ve{p}$. Rather than selecting a particular $\kappa$ value, can consider all possible $\kappa$ values simultaneously, compute $\chipool{\ve{p}}{\kappa}$ for each, and record
\begin{equation} \label{eq:kappaMinStat}
  \kappa_{\min} = \argmin_{\kappa \in [0, \infty)} \chipool{\ve{p}}{\kappa}
\end{equation}
As each $\kappa$ value is associated with a particular centrality quotient, each $\kappa$ identifies a particular region of relative power against others in the $D(a,w), \prevalence$ plane under $H_3$. At the same time, $\kappa_{\min}$ reports the value of $\kappa$ which produces the smallest pooled $p$-value for $\ve{p}$ and therefore suggests the $\kappa$ value where evidence against $H_0$ is the strongest relative to other $\kappa$ values. As stronger evidence leads to more frequent rejection and higher power when $H_0$ is false, $\kappa_{\min}$ therefore links the evidence present in $\ve{p}$ to a region in $D(a,w), \prevalence$ if we assume $H_3$ is truly used to generate the data with $f = Beta(a, 1/w + a(1 - 1/w))$.

\subsection{Non-increasing beta densities} \label{chipool:noninc}

\begin{figure}[!h] 
  \begin{center}
    \begin{tabular}{cc}
      \includegraphics[scale = 1]{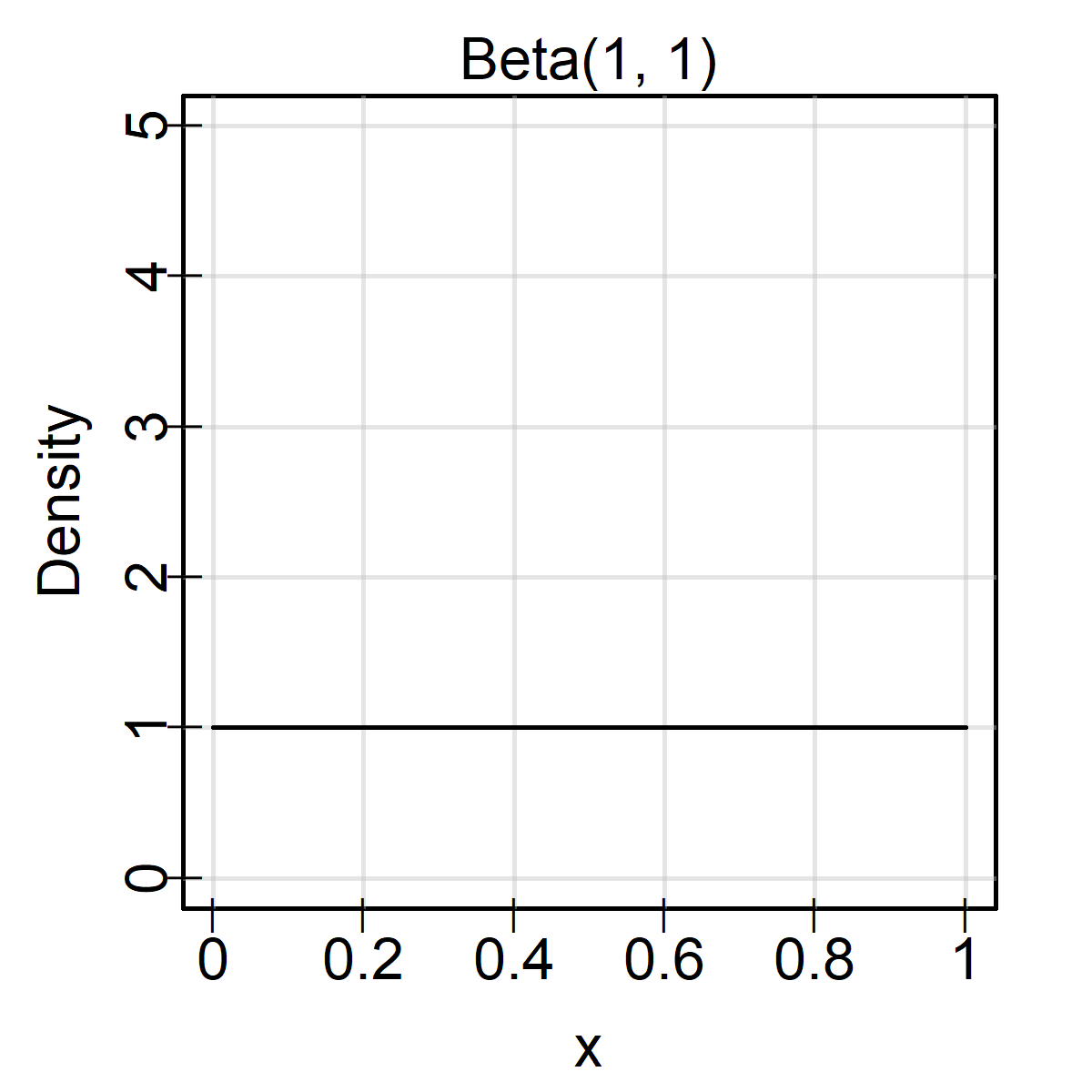} & \includegraphics[scale = 1]{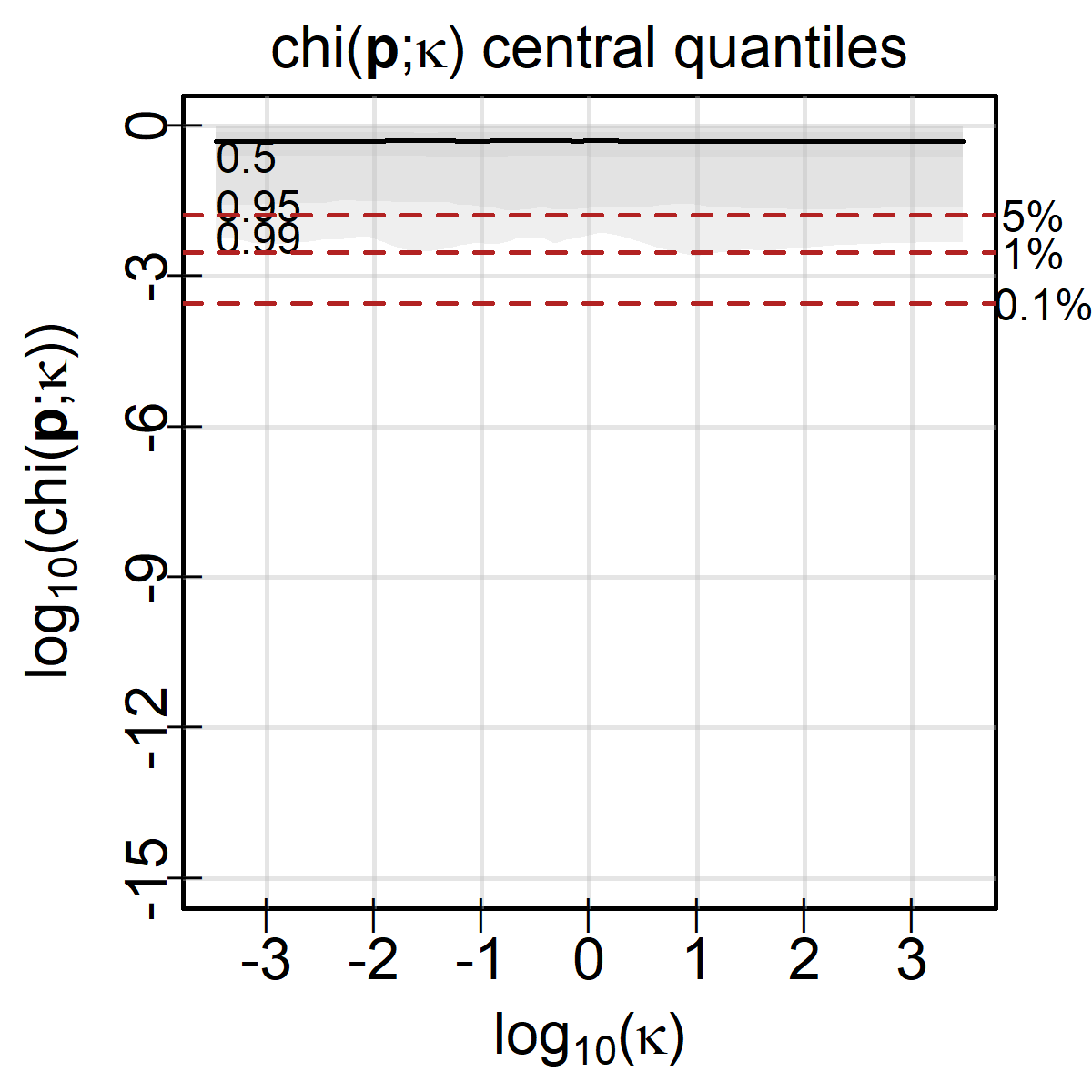} \\
      {\footnotesize (a)} & {\footnotesize (b)}
    \end{tabular}
    \caption{The (a) density and (b) central quantiles and median by $\kappa$ for the null case. All lines are flat and above the null quantiles from the larger simulation as expected.}
    \label{MC:fig:nullBetaQ}
  \end{center}
\end{figure}

Begin with a demonstration of the sweep of $\kappa$ values for previously explored cases by generating curves of $\chipool{}{\kappa}$ by $\kappa$ for different densities under $H_4$. In each of the following, samples of 100 i.i.d. $p$-values from different beta distributions are generated independently 1,000 times and $\chipool{}{\kappa}$ is computed for a sequence of $\kappa$ values chosen uniformly on the log scale. Let the $i^{\text{th}}$ sample be $\ve{p}_i$ and the pooled $p$-value computed using parameter $\kappa_j$ for $\ve{p}_i$ be $\chi_{ij} = \chipool{\ve{p}_i}{\kappa_j}$. To provide context to $\chi_{ij}$, a larger simulation of 100,000 samples was generated under $H_0$ and the minimum of $\chipool{\ve{p}_i}{\kappa_j}$ for the same sequence of $\kappa_j$ values was recorded. Figure \ref{MC:fig:nullBetaQ}(b) displays the median and 0.5, 0.95, and 0.99 central quantiles for $f = Beta(1, 1)$ (equivalent to the null case) alongside the $Beta(1,1)$ density in \ref{MC:fig:nullBetaQ}(a). For reference three dashed red lines at the observed 0.05, 0.01, and 0.001 quantiles of the minimum pooled $p$-value over the 100,000 simulated null cases have been added.

This case shows a flat median curve and flat central quantiles which are all slightly above the corresponding minimum quantiles. The null case performs as expected, $\kappa_{\min}$ would is distributed uniformly over the range of $\kappa$ values and would produce values below the null quantiles at the expected proportions. A contrasting case in shown in Figure \ref{MC:fig:decrBetaQ}, which uses the same layout for an identical simulation carried out when $f = Beta(0.5, 1)$.

\begin{figure}[!h] 
  \begin{center}
    \begin{tabular}{cc}
      \includegraphics[scale = 1]{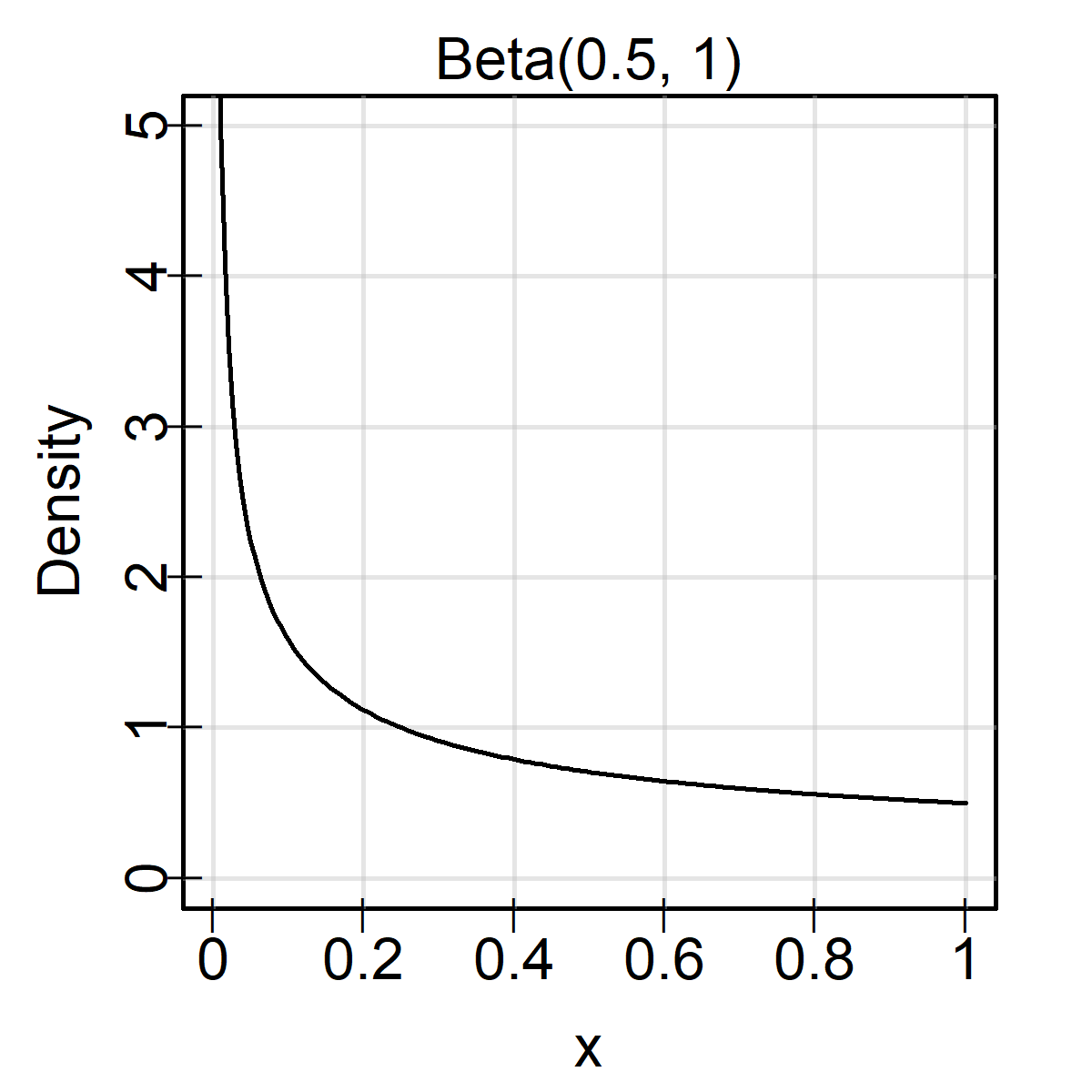} & \includegraphics[scale = 1]{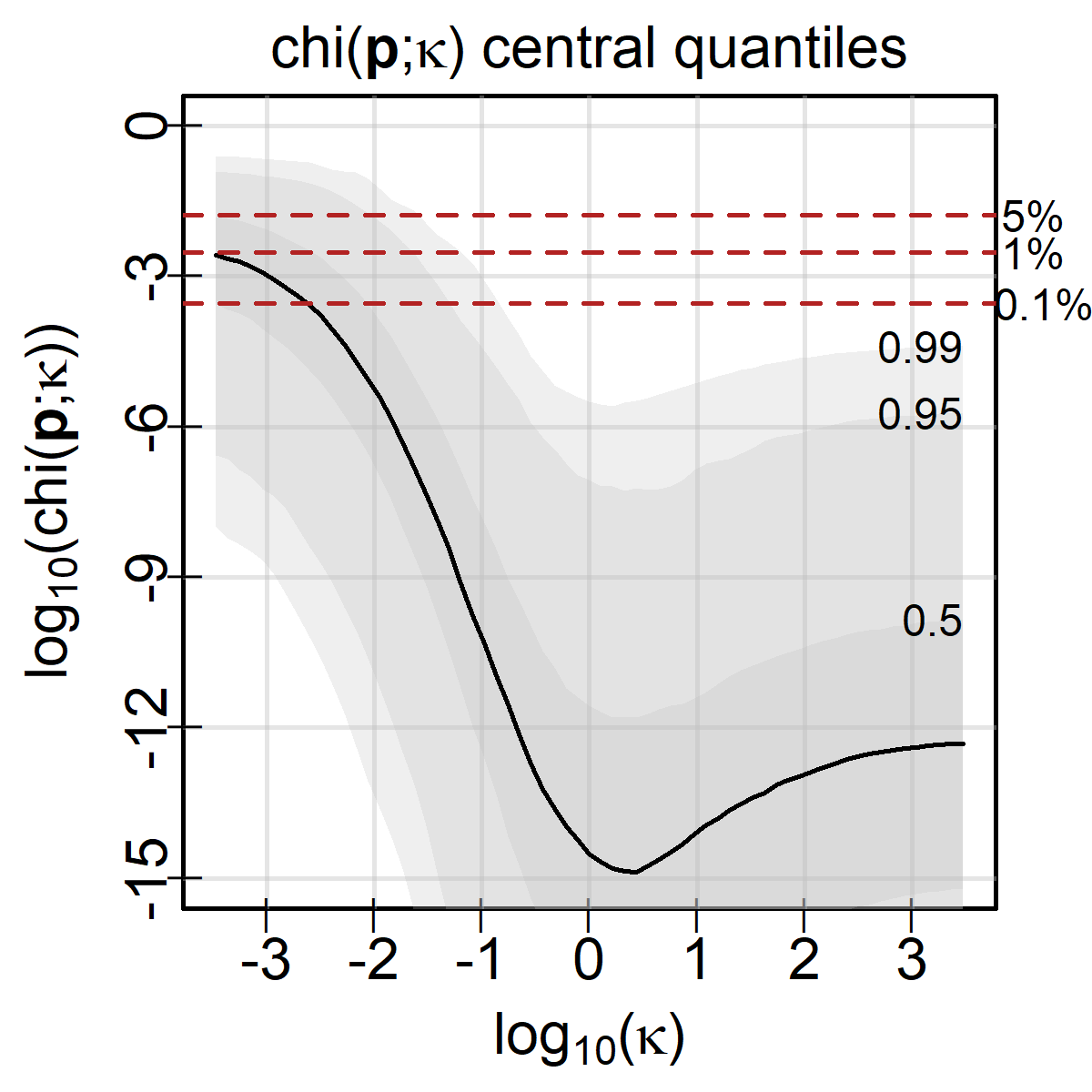} \\
      {\footnotesize a)} & {\footnotesize b)}
    \end{tabular}
    \caption{The (a) density and (b) central quantiles and median by $\kappa$ for $p$-values generated identically and independently from a Beta(0.5, 1) distribution. The minimum around $\kappa = 2$ corresponds to the UMP.}
    \label{MC:fig:decrBetaQ}
  \end{center}
\end{figure}

Displaying the median and the same central quantiles as before, there is a unique minimum at $\kappa = 2$, a lower right end to the curve than the left end, and a generally lower value across its entire length. If one of these curves was observed in practice, $\kappa_{\min} \approx 2$ would be chosen and larger $\kappa$ values may not be fully ruled out. This conclusion would be correct: under $H_4$ with $f = Beta(0.5, 1)$ the UMP pooled $p$-value is $\hrpool{\ve{p}}{w}$ with $w = (1 - a)/(b - a) = 1$ and $\hrpool{\ve{p}}{1} = \chipool{\ve{p}}{2} = \fispool(\ve{p})$. Furthermore, the power investigations in Section \ref{chipool:chipoolvsHR} demonstrate that $\chipool{\ve{p}}{2981} \approx \stopool(\ve{p})$ is nearly as powerful as the UMP for all $w$ under $H_4$. This confirms empirically that the level of the curve of $\chipool{\ve{p}}{\kappa}$ over $\kappa$ corresponds to the relative power of pooled $p$-values in $\chipool{\ve{p}}{\kappa}$ for this beta distribution.

Of course, this conclusion should be expanded to $H_3$ and so mixture of $Beta(0.1, 1)$ and $Beta(1, 1)$ distributions is considered. The first distribution provides strong evidence against the null hypothesis while the second corresponds to the uniform distribution, and so contains no evidence against the null. Mixing these such that the probability of drawing from $Beta(0.1,1)$ is 0.05 and the probability of drawing from the null is 0.95 we are placed in the $D(a,w), \prevalence$ space at $0.3, 0.05$. Simulating this as for the null and $Beta(0.5, 1)$ cases and displaying the central quantiles of $\chi_{ij}$ alongside the mixture density gives Figure \ref{MC:fig:mixBetaQ}.

\begin{figure}[!h] 
  \begin{center}
    \begin{tabular}{cc}
      \includegraphics[scale = 1]{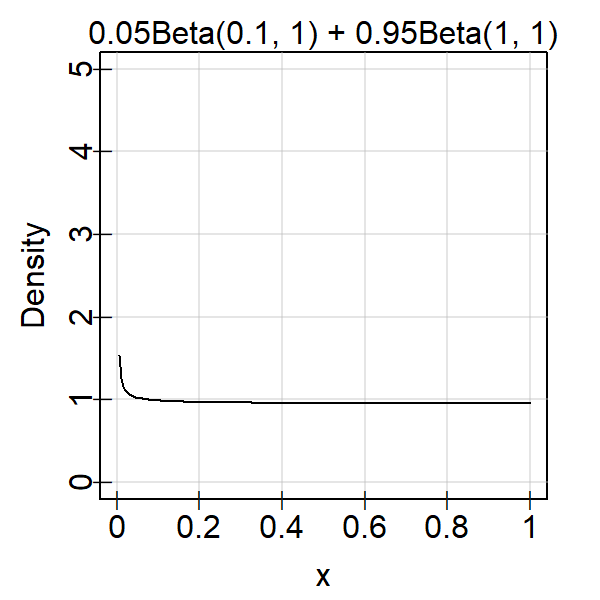} & \includegraphics[scale = 1]{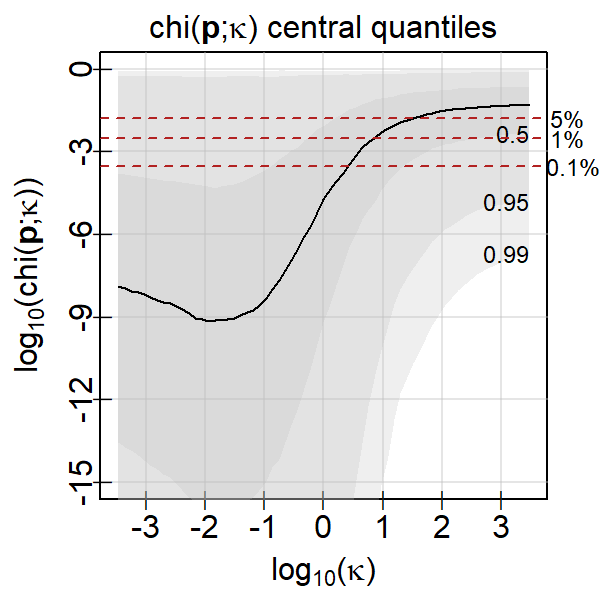} \\
      {\footnotesize (a)} & {\footnotesize (b)}
    \end{tabular}
    \caption{The (a) density and (b) central quantiles and median by $\kappa$ for the mixture 0.05Beta(0.1, 1) + 0.95Beta(1, 1). Small $\kappa$ values provide the smallest pooled $p$-values, and hence power at detecting this alternative.}
    \label{MC:fig:mixBetaQ}
  \end{center}
\end{figure}

The central quantiles are more variable for this case than the unmixed densities because the probability of generating any $p$-values from $Beta(0.1, 1)$ is small and so many samples would have included only null $p$-values. Nonetheless, the median has a unique minimum near $\kappa = 0.01$, and is generally lower for small $\kappa$ than large $\kappa$. Considering the coordinates of this case in the $D(a,w), \prevalence$ plane, this is completely consistent with the earlier investigations of power where small $\kappa$ values were most powerful for strong evidence concentrated in a few tests. In practice, seeing the median curve would cause us to suspect this case correctly.

\subsection{Identifying a region of alternative hypotheses} \label{chipool:regions}

While these one-to-one comparisons between densities and $\kappa$ curves help to demonstrate the link between $\kappa_{\min}$ and the alternative densities used to generate $\ve{p}$, they are not incredibly informative. Given $\ve{p}$ and supposing we generate such a curve of $\chipool{\ve{p}}{\kappa}$ by $\kappa$, we would need to sort through an incredible number of density-curve pairs to identify plausible alternatives corresponding to the curve obtained.

%A particular pattern in the comparison of the curves in Figures \ref{MC:fig:decrBetaQ} and \ref{MC:fig:mixBetaQ} is the impact of the strength and prevalence of non-null $p$-values. Figure \ref{MC:fig:decrBetaQ} depicts the curve of the parameter sweep for iid $p$-vaues from a beta distribution with a log KL divergence of -1.6 while Figure \ref{MC:fig:mixBetaQ} depicts the curve for $p$-values which are null with probability 0.95 and otherwise generated from a beta distribution with log KL divergence 0.3. Of the two, the setting with weaker evidence spread among all $p$-values shows a clear minimum for large $\kappa$ values, while the opposite pattern is seen for the relatively strong evidence concentrated in a few $p$-values.

Instead, consider a more automated approach. Given a collection of $p$-values, this generates a curve by sweeping parameter values of $\kappa$, identifies the minimum $\kappa$ values (or any below a particular threshold), and maps these back onto the plane of strength and prevalence depicted in, for example, Figure \ref{fig:BothChiDiffs}. This requires a detailed guide of where each $\kappa$ value is most powerful in the $\prevalence, D(a,w)$ plane so that $\kappa_{\min}$ or the range of significant $\kappa$ values can be placed accurately. Therefore, a simulation was carried out over 20 $\ln(w)$ values evenly spaced from $-6$ to $0$, $\prevalence$ values from 0 to 1 in increments of 1/80, and 80 $\ln D(a,w)$ values evenly spaced between $-5$ and $5$. For each combination, 10,000 samples of 80 $p$-values were generated with $80\prevalence$ following the beta distribution specified by $\ln (w)$ and $\ln D(a,w)$ and $80(1 - \prevalence)$ following the uniform distribution.\footnote{This resolution was not the only one tried, similar experiments were carried out for $M = 10$, $20,$ and $40$ and the only impact of increasing $M$, the number of steps in $\prevalence$, and the number of steps in $\ln D(a,w)$ was increasing resolution of the same patterns. This suggests that these patterns do not depend on the sample size.}

Each of the 10,000 samples then had pooled $p$-values computed over a sweep of 65 $\ln \kappa_i$ values evenly spaced from $-8$ to $8$ and the power at level $\alpha = 0.05$ was computed for the rejection rule $\chipool{\ve{p}}{\kappa_i} \leq 0.05$. The $\kappa_i$ with the greatest power for each combination corresponds to $\kappa_{\min}$ for that combination because rejection is determined by thresholding the pooled $p$-value and so a higher power implies a lower distribution of the pooled $p$-value at a given point. This distribution with a lower location will result in an equal or lower quantile curve to all others. Bivariate discretized Gaussian smoothing is applied to each $\chipool{\ve{p}}{\kappa_i}$ power surface in $\prevalence, D(a,w)$ for each $w$ value in order to obtain a smoothed estimate of the power surface minimally impacted by random binomial noise. This was completed only because none of the investigations carried out indicated discontinuities in the power or distribution of $p$-values by $\kappa_i$.

For each $w$, the power surfaces of every $\chipool{\ve{p}}{\kappa_i}$ in $\prevalence, D(a,w)$ were then compared to the maximum among them point-wise. This is motivated by the simpler case shown in Figure \ref{MC:fig:decrBetaQ}, as several $\kappa_i$ values are often equally powerful for a given setting. Specifically, the comparison was a binomial test of the difference in proportions using a normal approximation at 95\% confidence. A surface was deemed equal to the maximum power at that point if the test failed to reject the null hypothesis of equal proportions.\footnote{For powers $p_1$ and $p_2$ computed over the same number of trials $n$, this computes
  $$z = \frac{\sqrt{n}(p_1 - p_2)}{\sqrt{2p(1-p)}}$$
  where $p = \frac{p_1 + p_2}{2}$ and then compares $z$ to normal critical values. By the CLT, $z \sim N(0,1)$ approximately for large $n$, and as 10,000 simulations are performed for each power estimate, this approximation should be quite accurate.} For each $\kappa$ and $w$, all of this pre-processing gave a matrix in $\prevalence$ and $\ln D(a,w)$ indicating whether $\chipool{\ve{p}}{\kappa_i}$ achieved the maximum power for that combination for every $\kappa_i$. To produce a final summary in $\prevalence$ and $\ln D(a,w)$ alone, these indicators were summed over $w$ for each $\kappa_i$.
Finally, masks were added in the top right and bottom left corners where all methods are equally powerful with powers 1 and 0.05 respectively to make the meaningful patterns more visible.
Figures \ref{MC:fig:kappaAlternatives}(a) - (d) display these sums (counts of cases in $w$ where $\kappa_i$ achieved maximum power) for several $\kappa_i$ in a given $\prevalence, \ln D(a,w)$ region, with guide histograms on each row and column added to quickly indicate the relative marginal frequencies. Each plot is therefore rich with both marginal and joint information on the regions where a particular $\kappa_i$ is most powerful.

\begin{figure}[!h]
  \begin{center}
    \begin{tabular}{cc}
      \includegraphics[scale = 0.9]{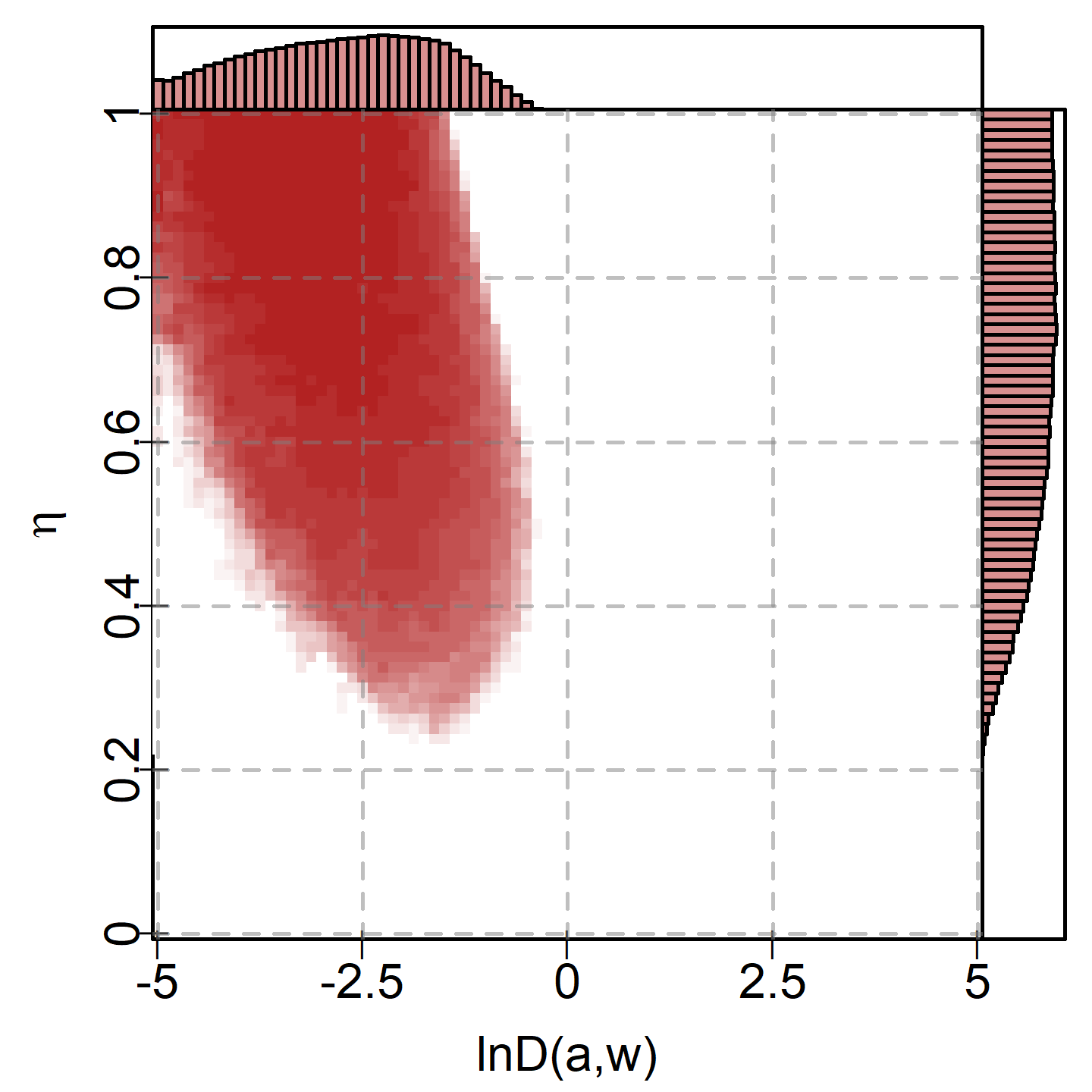} & \includegraphics[scale = 0.9]{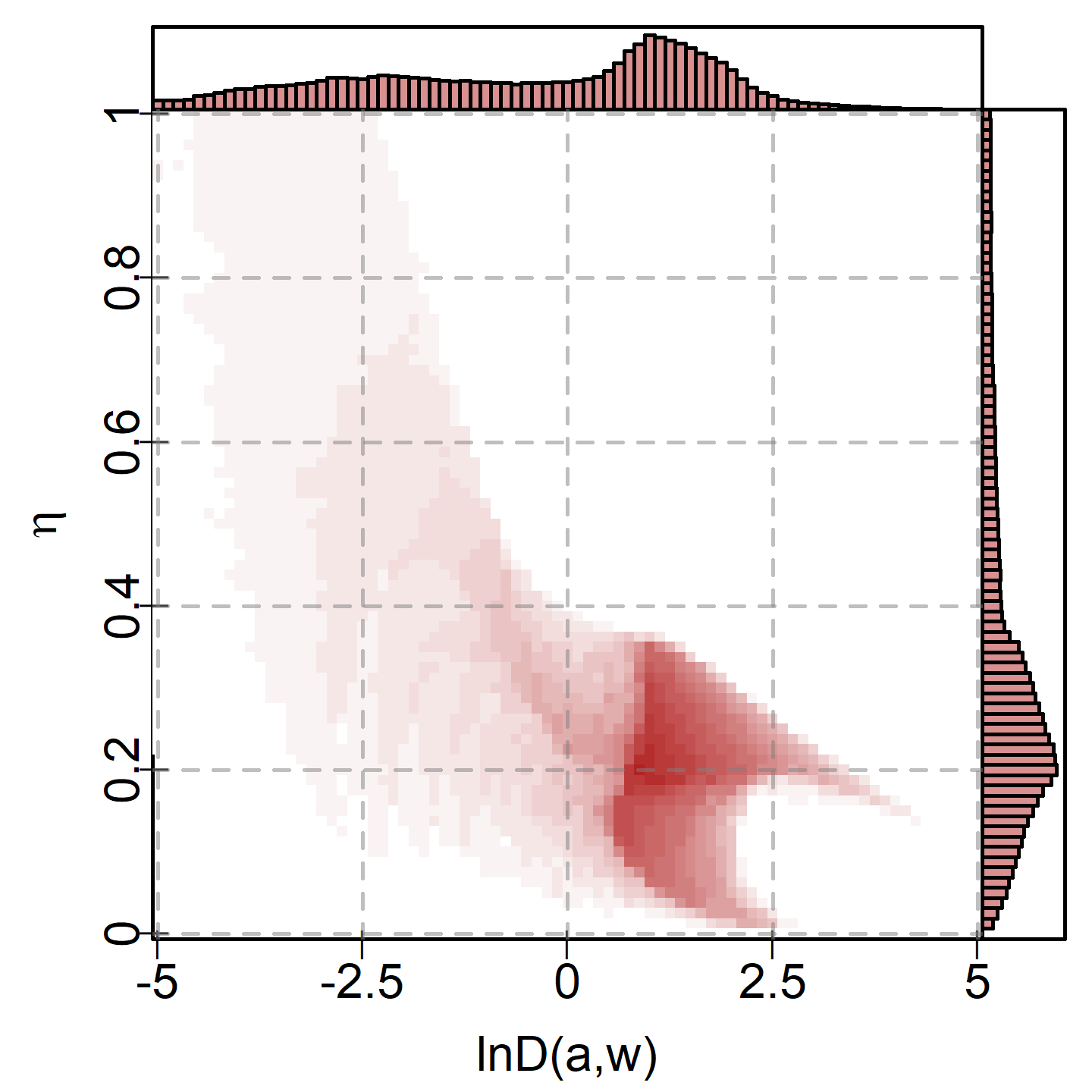} \\
      {\footnotesize a) $\ln \kappa = 8$} & {\footnotesize b) $\kappa = 2$} \\
      \includegraphics[scale = 0.9]{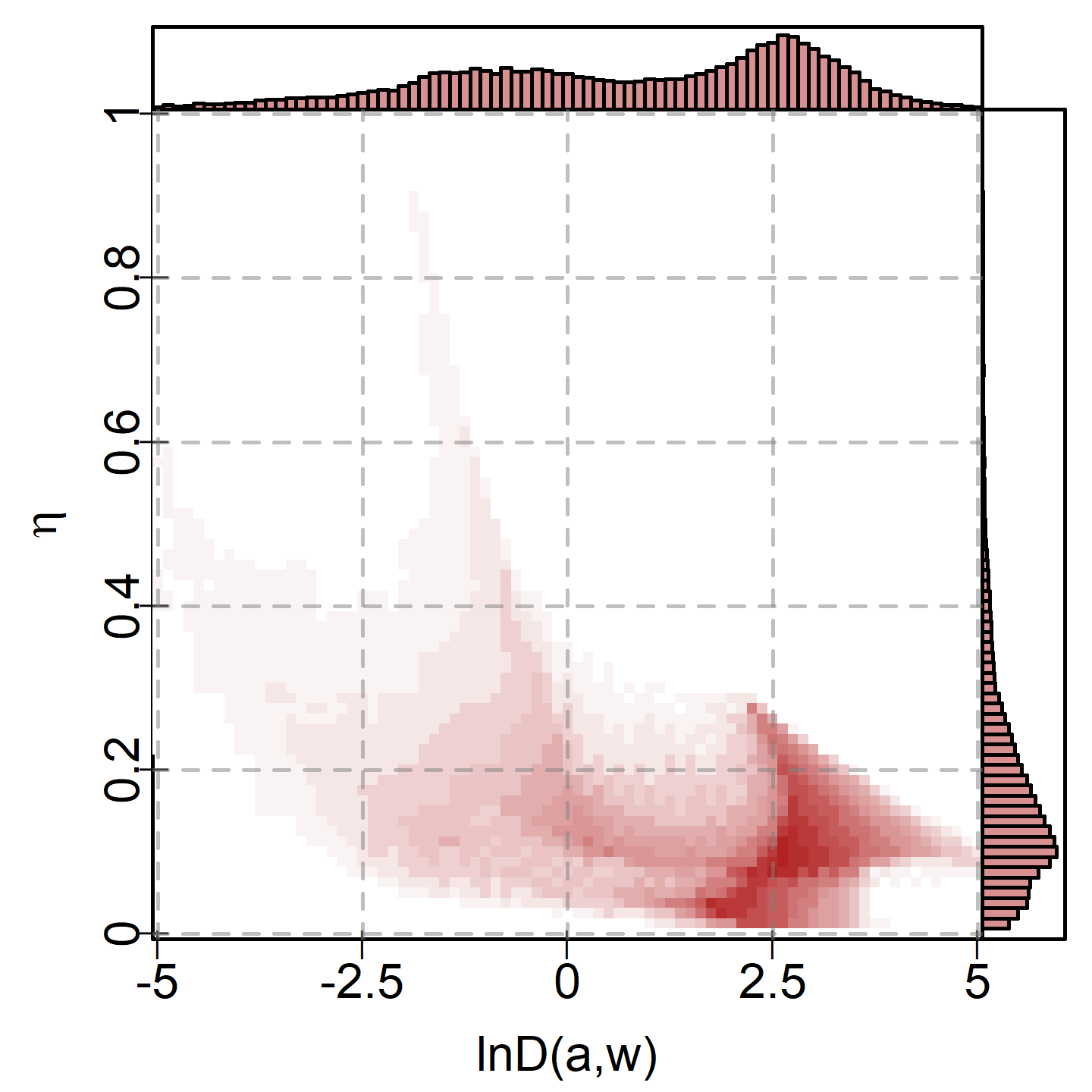} & \includegraphics[scale = 0.9]{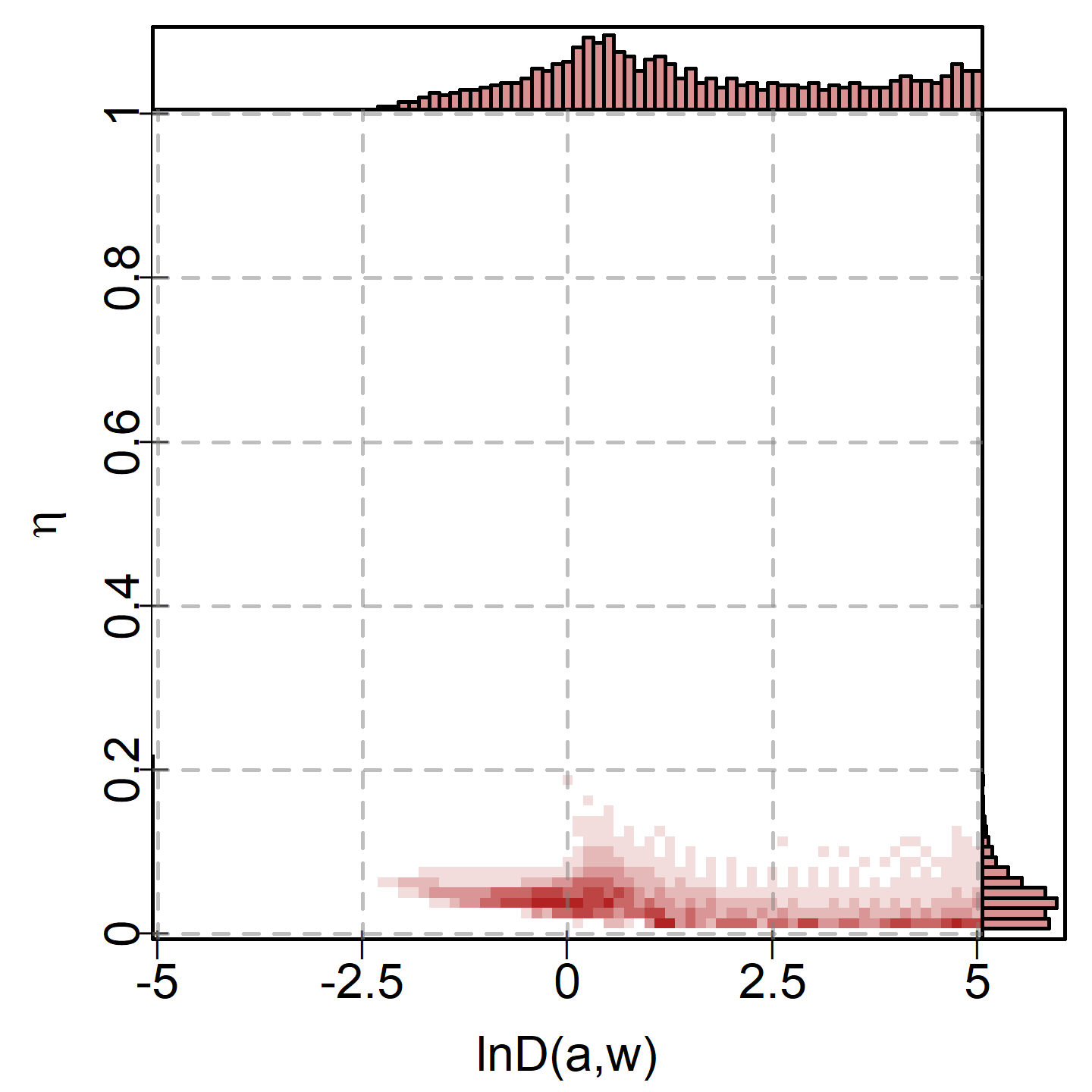} \\
      {\footnotesize c) $\ln \kappa = -1$} & {\footnotesize d) $\ln \kappa = -4$}
    \end{tabular}
    \caption{Likely alternatives for a range of $\kappa$ values. For full coverage of $D(a,w)$, $w$ was chosen uniformly on a log scale.}\label{MC:fig:kappaAlternatives}
  \end{center}
\end{figure}

Consistent with previous investigations, this map shows that the regions where the small $\kappa$ values are most powerful correspond to small $\prevalence$ values. The mode of the histogram of $\prevalence$ values increases steadily in $\kappa$ until it is near one when $\kappa = e^8$. For $\kappa = e^{-4}$, the pooled $p$-value is only most powerful for settings with $\prevalence < 0.1$, such that a minimum of the parameter curve below $e^{-4} \approx 0.02$ indicates a small minority of tests are significant.

Given that these plots display counts of cases where a particular $\chipool{\ve{p}}{\kappa}$ is most powerful, choosing to select $w$ evenly-spaced on the log scale inadvertently places greater weight on small values of $w$ and under-samples large values in order to achieve more complete coverage of $D(a,w)$. Even for moderate $w$ floating point representation limits prevent the computation of $a$ and $b$ for the beta distributions with large KL divergences. This complete coverage of the strength of evidence is one of two possible perspectives, with the other focused on even exploration of the parameter space. For this parameter-based perspective, the exact same procedure was performed but with 20 $w$ values selected at even increments from 0.05 to 1. Figures \ref{MC:fig:kappaAlternativesUnifW}(a) - (d) present the same heatmaps as Figure \ref{MC:fig:kappaAlternatives} from this perspective.

\begin{figure}[!h]
  \begin{center}
    \begin{tabular}{cc}
      \includegraphics[scale = 0.9]{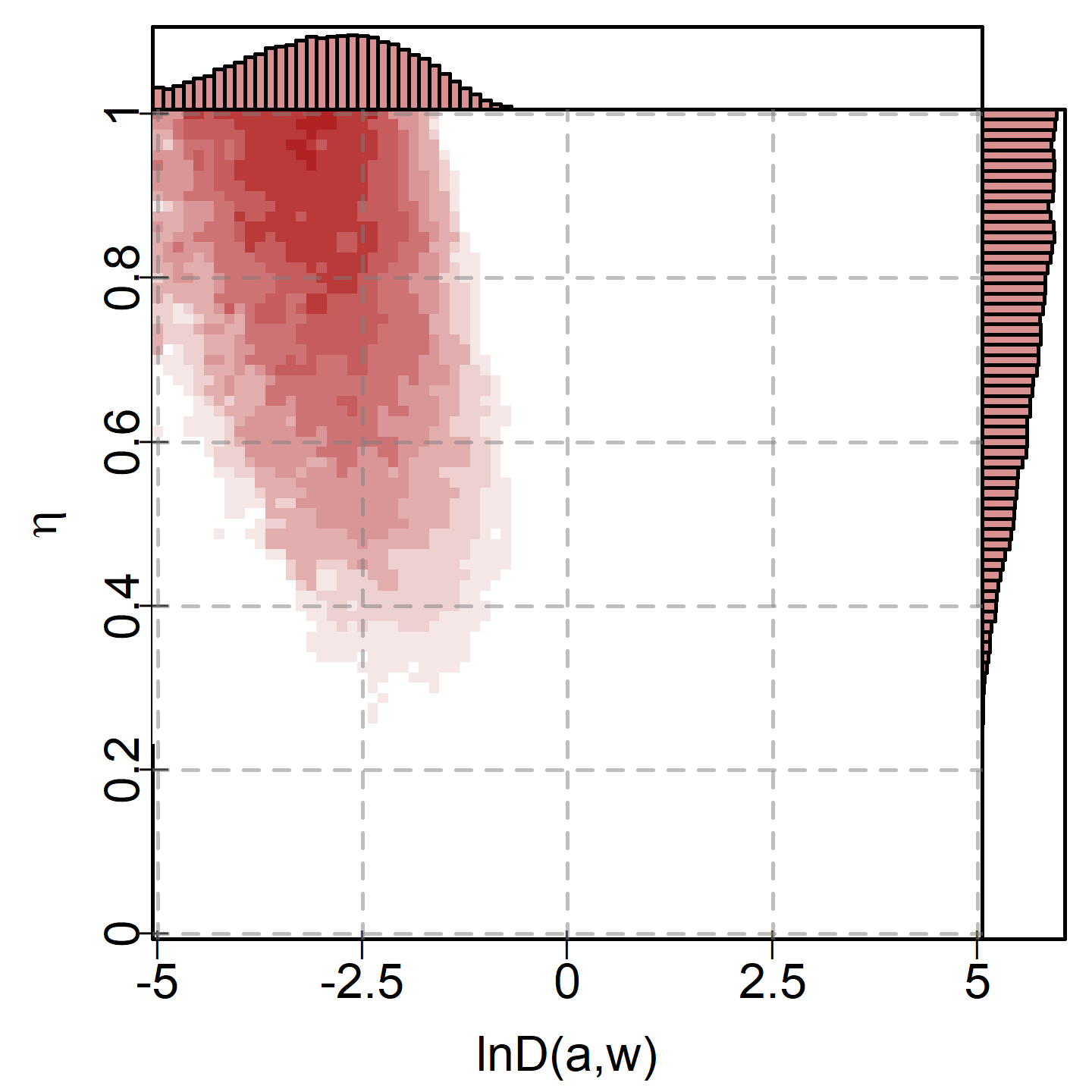} & \includegraphics[scale = 0.9]{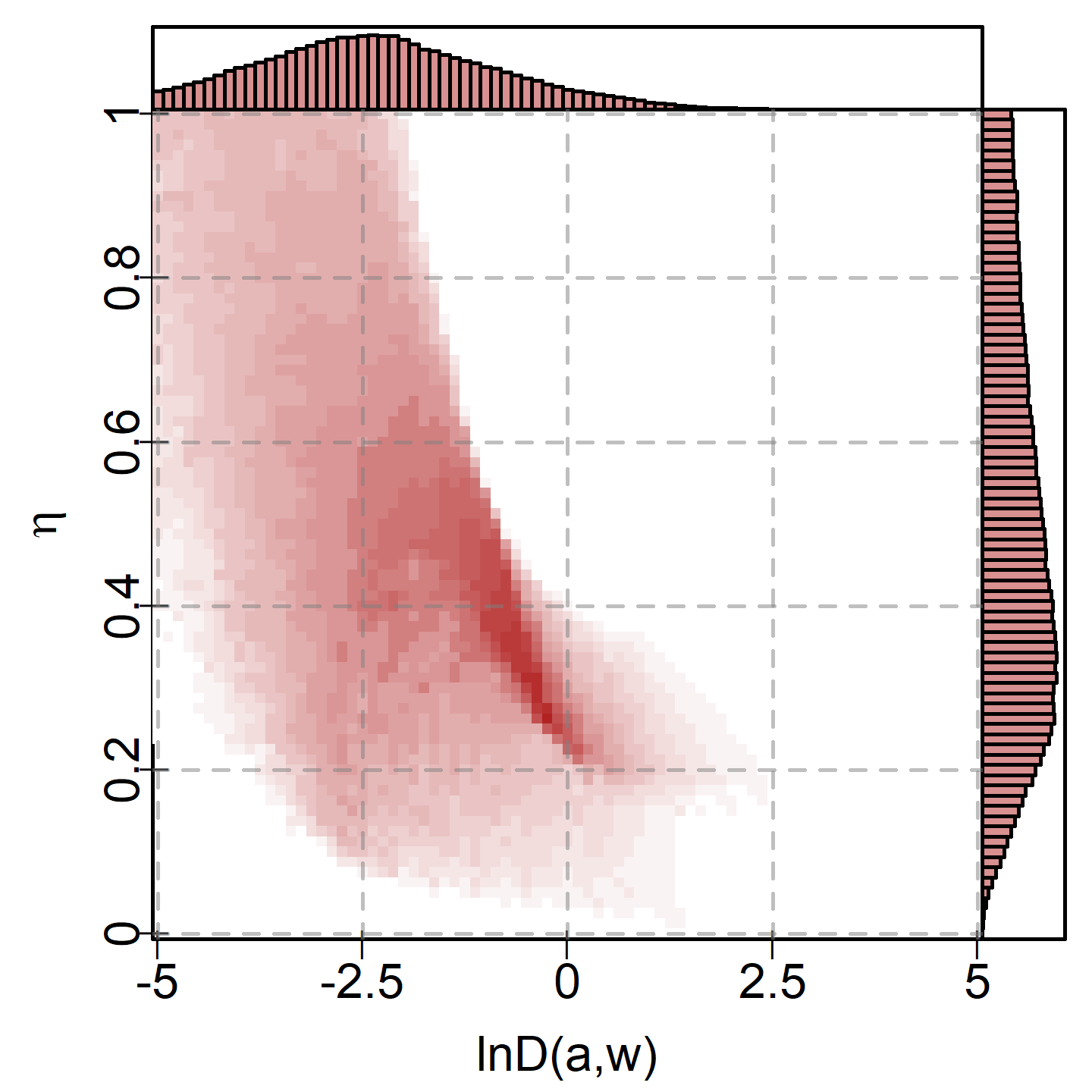} \\
      {\footnotesize a) $\ln \kappa = 8$} & {\footnotesize b) $\kappa = 2$} \\
      \includegraphics[scale = 0.9]{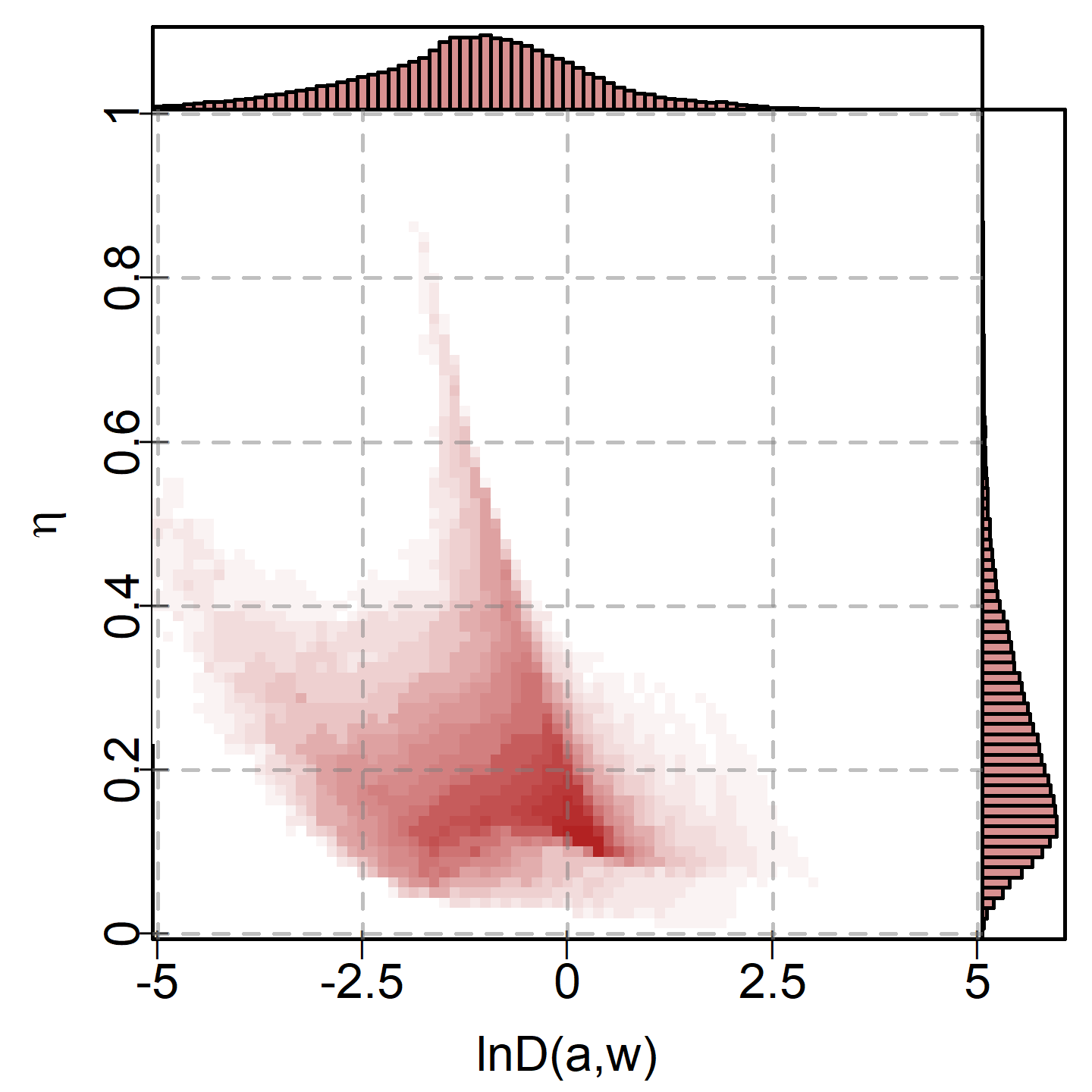} & \includegraphics[scale = 0.9]{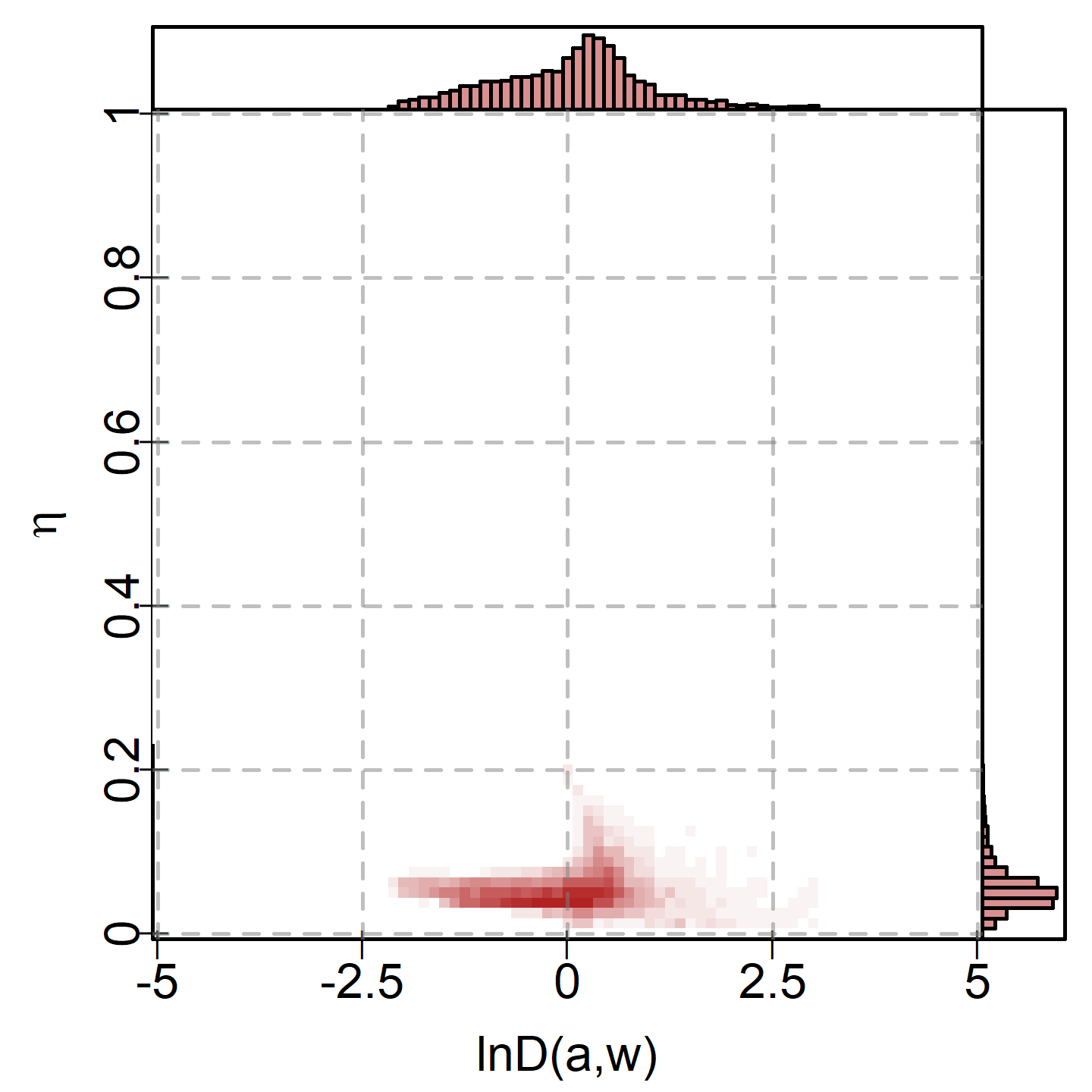} \\
      {\footnotesize c) $\ln \kappa = -1$} & {\footnotesize d) $\ln \kappa = -4$}
    \end{tabular}
    \caption{Likely alternatives for a range of $\kappa$ values. $w$ was chosen uniformly for these images, leading to worse coverage of $D(a,w)$ but more appropriate coverage of $w$.}\label{MC:fig:kappaAlternativesUnifW}
  \end{center}
\end{figure}

There are some noteworthy differences between this and the first set of heatmaps. The bias towards smaller proportions and stronger evidence in the first is quite clear when it is compared to the second, which generally shows similar shapes but more evenly distributed saturation across this shape. This leads to changes in the regions suggested for a particular $\kappa_{\min}$, but these are typically minor. The biggest difference occurs for large $\kappa$, where the bias towards small values in Figure \ref{MC:fig:kappaAlternatives} obscures all of the internal variation in the middle top that can be seen in Figure \ref{MC:fig:kappaAlternativesUnifW}.

Without these plots, an analyst would be left trying to identify alternatives from a density estimate. Besides showing comparable information about the prevalence of evidence to a density estimate in the histogram along the right side of the plot, these plots of plausible alternatives give information about likely strengths of evidence and regions for the combination of both. By leveraging the links between the centrality quotient, $\kappa$, and the distribution and strength of evidence in $\ve{p}$, these maps provide richer and clearer information.

\clearpage

\subsection{Selecting a subset of tests}

Perhaps the most important part of the alternative heatmaps presented in Figures \ref{MC:fig:kappaAlternatives} and \ref{MC:fig:kappaAlternativesUnifW} are the histograms along the right margin that indicate the likely prevalence of evidence in the data. Once $\kappa_{\min}$ has been determined using a sweep of $\kappa$ values, and a plausible set of alternatives has been identified using these alternative heatmaps, the corresponding range of proportions can be used to identify a subset of tests of interest. If false positives are less problematic to analysis than false negatives, the upper bound of this range might be taken, with the other bound taken if the opposite is true. In either case, suppose the chosen proportion is $\prevalence^*$, then the $M\prevalence^*$ largest values of $F^{-1}(1 - p_i; \kappa_{\min})$ are the tests most contributing to the small value of $\chipool{\ve{p}}{\kappa_{\min}}$ and so are the tests of greatest interest that can be selected for further investigation.

\subsection{Centrality in other beta densities}

Until now, it was always assumed that $p$-values follow a non-increasing beta density when the null hypothesis is false. This is a reasonable assumption, many statistical tests have this property for the rejection rule thresholding the $p$-value at $\alpha$. Relaxing this assumption, however, allows an exploration into how $\chipool{\ve{p}}{\kappa}$ behaves for a broader variety of densities and whether centrality is still a useful concept under these other distributions of non-null $p$-values.

First, consider the case of a strictly increasing density under $H_4$. Whether or not this case is interesting is a matter of opinion as under the convention that small $p$-values are evidence against $H_0$ such a density produces even less evidence against $H_0$ than the null distribution itself. It would be reasonable to expect, then, that this case produces only very large $\chipool{\ve{p}}{\kappa}$ for all $\kappa$ values. Following the same procedure as Section \ref{chipool:noninc}, this expectation is tested for $Beta(1, 0.5)$, resulting in the curve and density displayed in Figure \ref{MC:fig:incrBetaQ}.

\begin{figure}[!h] 
  \begin{center}
    \begin{tabular}{cc}
      \includegraphics[scale = 1]{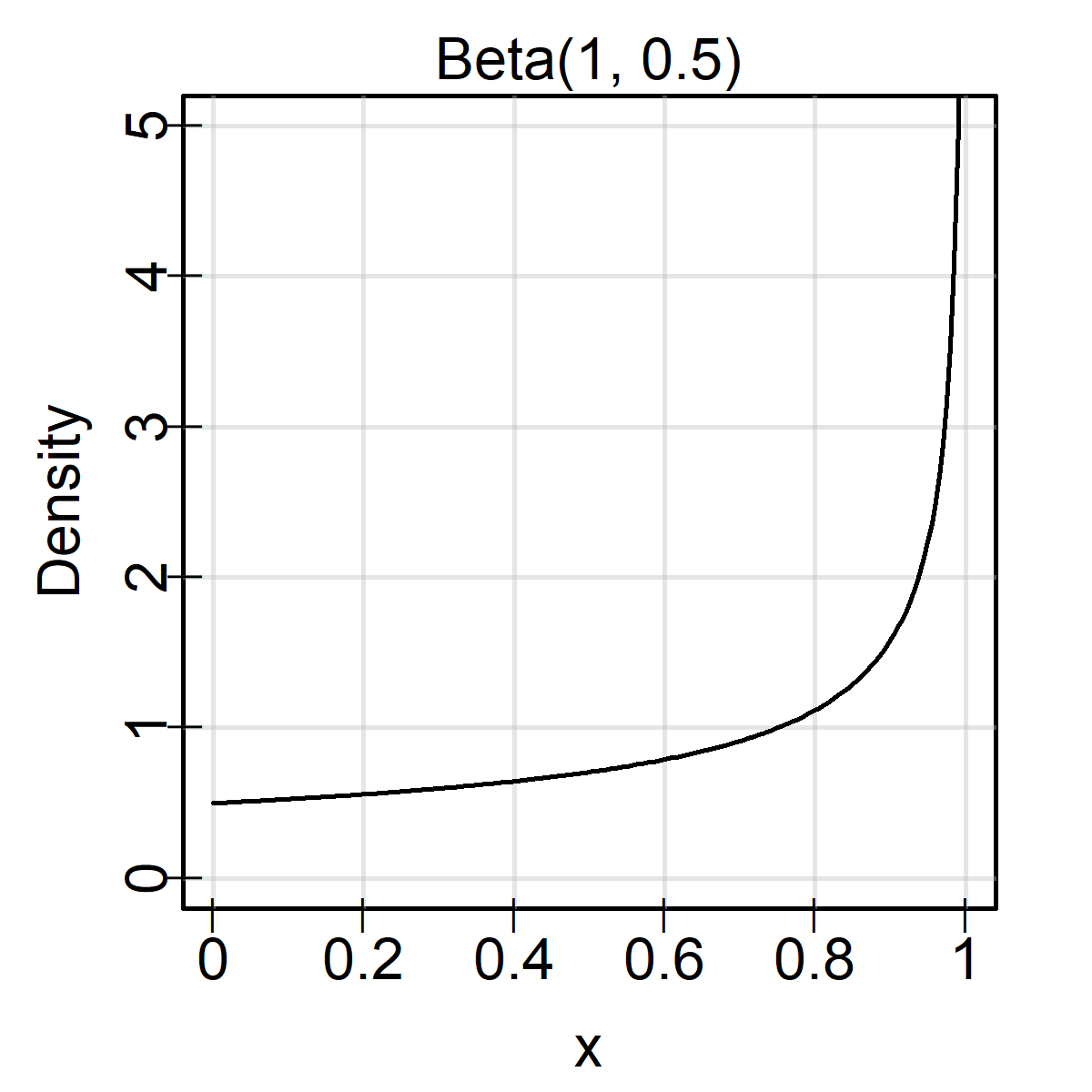} & \includegraphics[scale = 1]{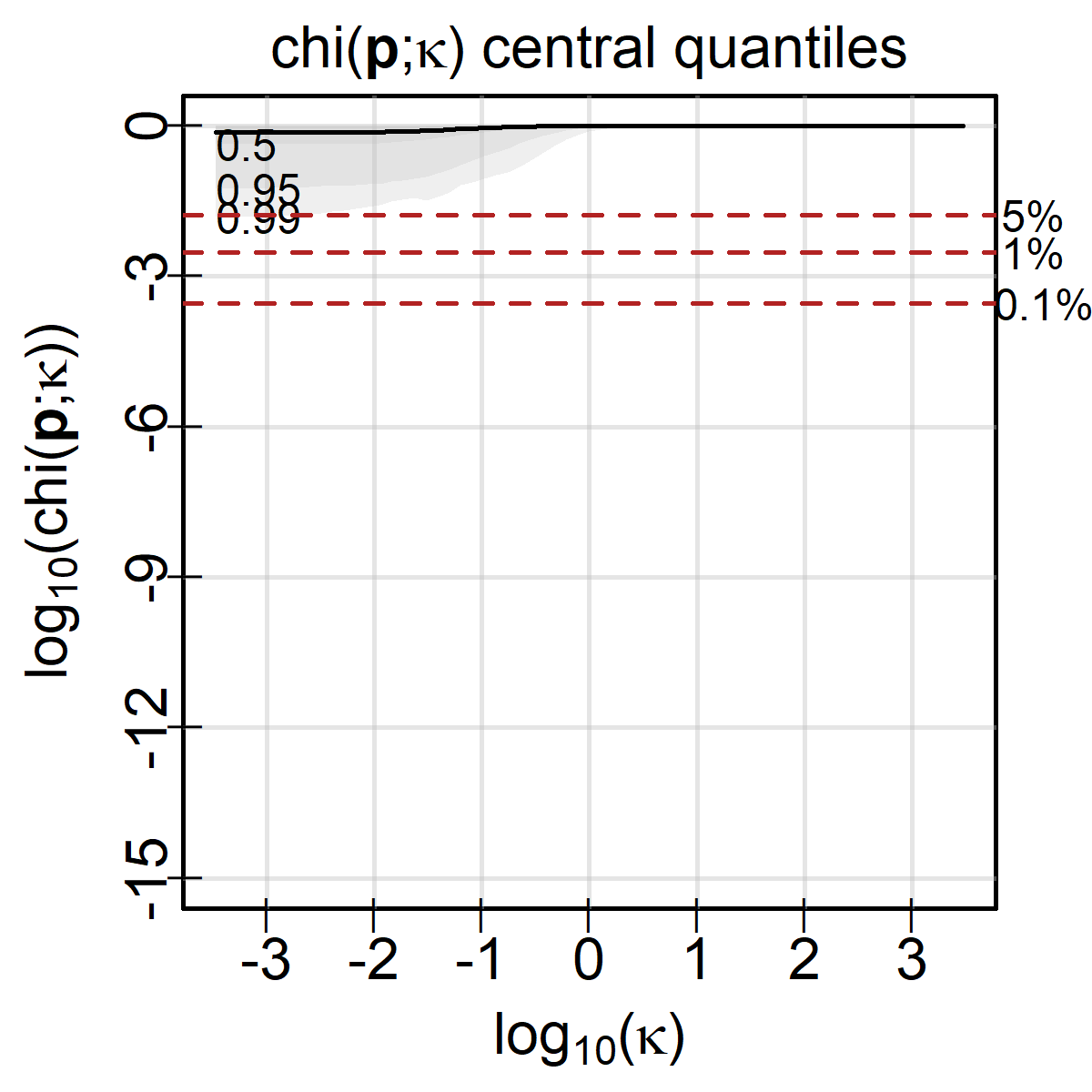} \\
      {\footnotesize (a)} & {\footnotesize (b)}
    \end{tabular}
    \caption{The (a) density and (b) central quantiles and median by $\kappa$ for $p$-values from a Beta(1, 0.5) distribution. The pooled $p$-value is generally large compared to the empirical curve minimum quantiles.}
    \label{MC:fig:incrBetaQ}
  \end{center}
\end{figure}

As expected, this setting gives only large $\chipool{\ve{p}}{\kappa}$ values for every $\kappa$. There is a slight dip to small $p$-values for small $\kappa$, likely a result of the small $p$-values that still occur for this density occasionally, but it barely crosses the null reference lines. Perhaps a more realistic case is a density that rarely, if ever, produces minimum $p$-values small enough to warrant rejection alone, but does tend to produce far more $p$-values less than 0.5 than expected under the null hypothesis. For such a setting, the previous investigations into centrality suggest that $\kappa_{\min}$ should be large. An example is $f = Beta(2,4)$ under $H_4$, shown in Figure \ref{MC:fig:lowbiasBetaQ}.

\begin{figure}[!h] 
  \begin{center}
    \begin{tabular}{cc}
      \includegraphics[scale = 1]{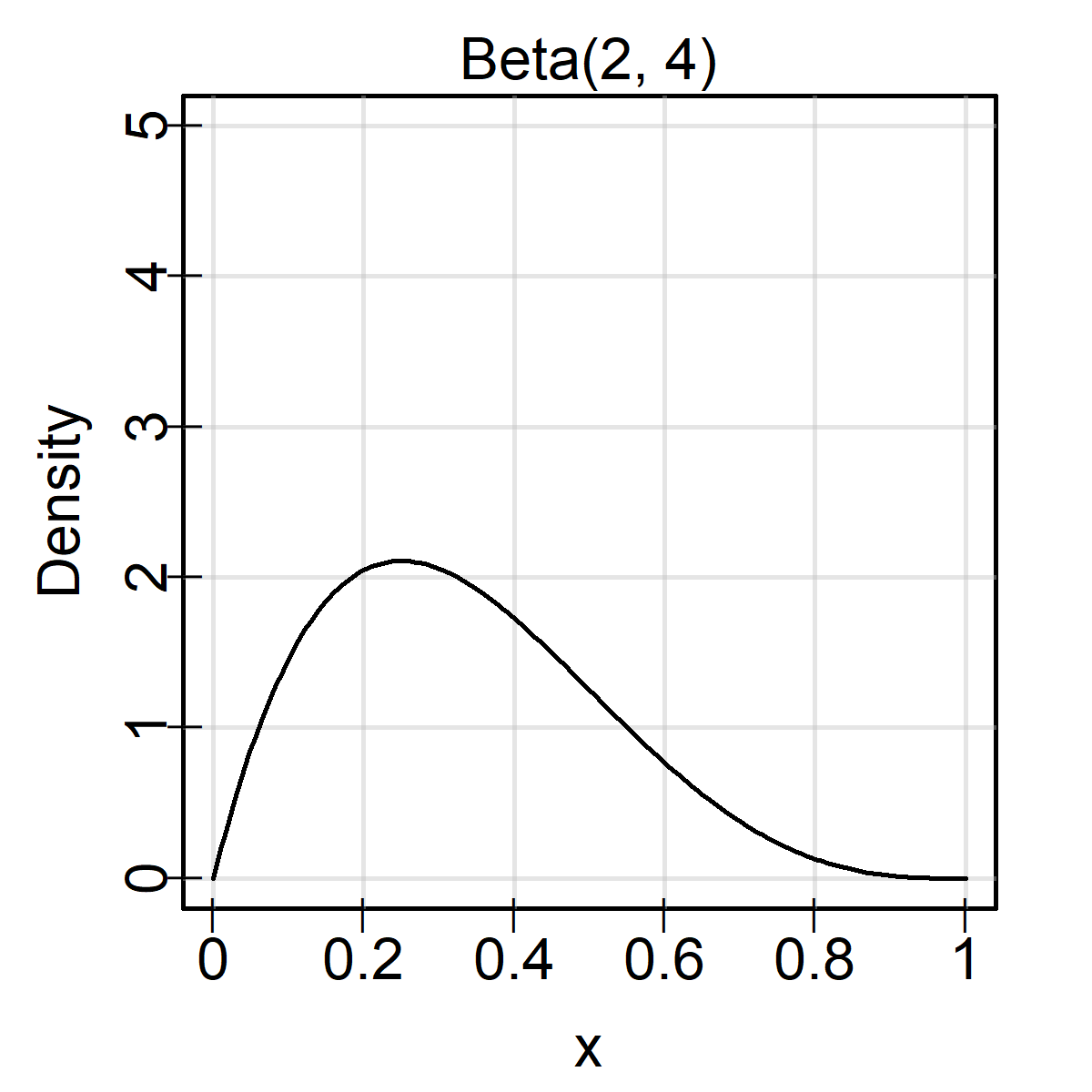} & \includegraphics[scale = 1]{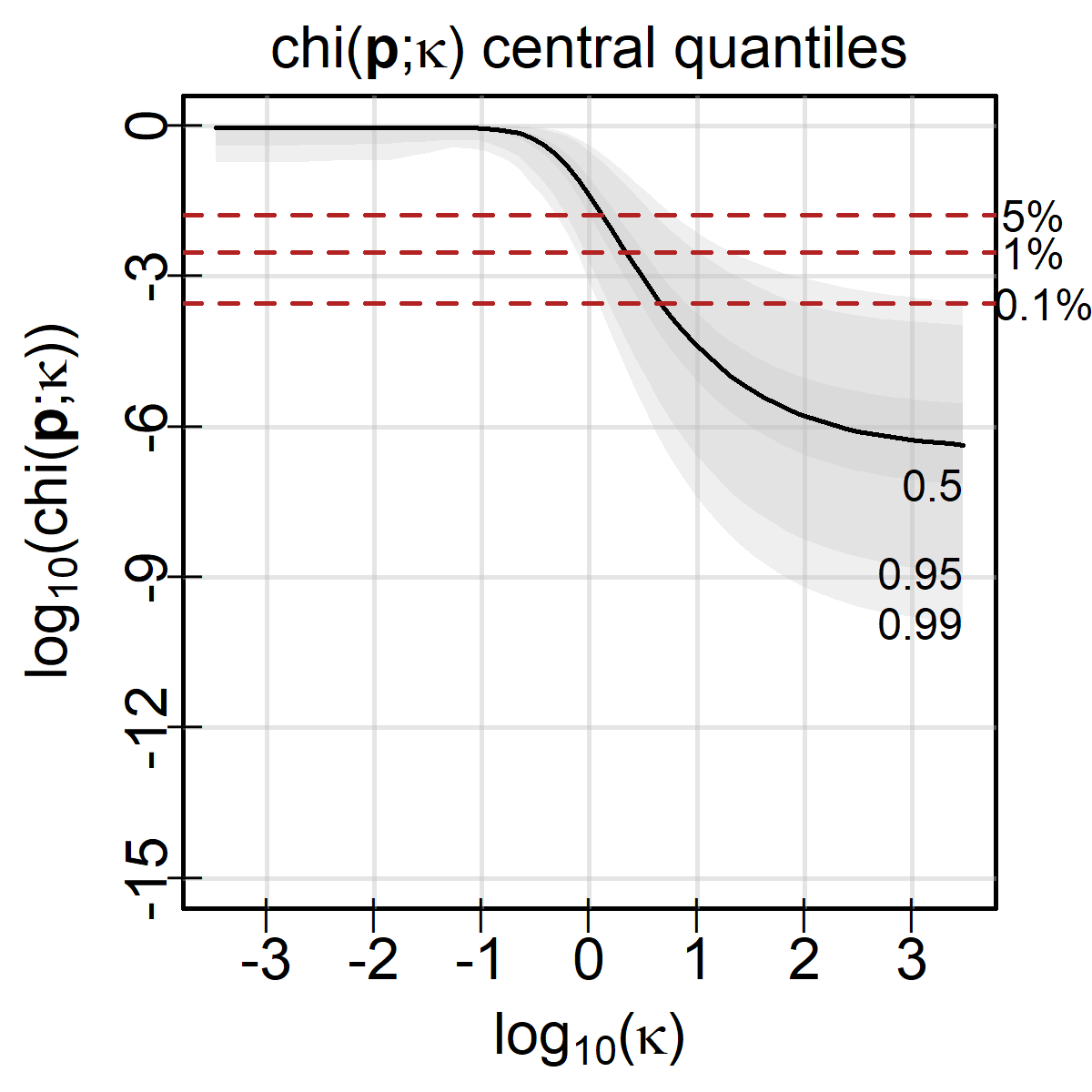} \\
      {\footnotesize (a)} & {\footnotesize (b)}
    \end{tabular}
    \caption{The (a) density and (b) central quantiles and median of a $p$-values following a Beta(2, 4) distribution. The absence of very small $p$-values and bias towards smaller ones means large $\kappa$ values are most powerful.}
    \label{MC:fig:lowbiasBetaQ}
  \end{center}
\end{figure}

Once again, this plot matches the expectation reasoned from centrality, despite the relaxation of the assumptions used to motivate centrality. Both of these examples suggest that the concepts of central and marginal rejection may have use beyond non-increasing densities, and provide a promising framework for future investigation.

\clearpage

\section{The \code{PoolBal} package}

There is no lack of packages available to pool $p$-values in \Rnsp. The most recent of these, \code{poolr} (\citealp{poolr}), lists 8 others all providing piecemeal coverage of every pooled $p$-value proposal\footnote{These are \cite{dewey2022metap} \cite{zhangetal2020tfisher}; \cite{wilson2019harmonic}; \cite{aggregation}; \cite{empbrown}; \cite{daietal2014modified}; \cite{schroderetal2011survcomp}; \cite{zhao2008gap}. Most of these packages cover a subset of pooling functions or implement adjustments for dependence rather than attempting to be the complete package for pooling $p$-values.}. Rather than re-implement the functionality provided by these packages, the \code{PoolBal} package aims primarily to support the evaluation of the central rejection level, marginal rejection level, and centrality quotient for these and any future packages which pool $p$-values. As they both allow some tuning of centrality, these core functions are supported by implementations of $\chipool{\ve{p}}{\kappa}$ and $\hrpool{\ve{p}}{\kappa}$ along with functions to evaluate the Kullback-Leibler divergence for general densities and compute it for the beta density in particular. This is meant to make the adoption of the framework provided in this work as simple as possible.

Briefly summarized, the functionality of \code{PoolBal} includes
\begin{description}
\item[\code{klDiv}, \code{betaDiv}:] compute the Kullback-Leibler divergence for arbitrary densities and the uniform to Beta case, respectively
\item[\code{findA}:] invert a given Kullback-Leibler divergence and most powerful test parameter $w$ to identify the unique Beta parameter $a$ that corresponds to this setting
\item[\code{pBetaH4}, \code{pBetaH3}:] helpers to generate $\ve{p}$ under under $H_4$ and $H_3$
\item[\code{estimatePc}, \code{estimatePrb}, \code{estimateQ}:] wrappers for \code{uniroot} from \code{base} that estimate the central rejection level, marginal rejection level at $b$, and centrality quotient for an arbitrary function
\item[\code{chiPool}:] an implementation of $\chipool{\ve{p}}{\kappa}$
\item[\code{chiPc}, \code{chiPr}, \code{chiQ}:] functions to compute the central rejection level, marginal rejection level, and centrality quotient of $\chipool{\ve{p}}{\kappa}$ using Equations (\ref{MC:eq:chiMethPc}), (\ref{MC:eq:chiMethPr}), and (\ref{MC:eq:chiMethC})
\item[\code{chiKappa}:] a wrapper for \code{uniroot} from \code{base} that inverts a given centrality quotient to give the $\kappa$ value in $\chipool{\ve{p}}{\kappa}$ with the corresponding quotient
\item[\code{hrStat}, \code{hrPool}:] compute $l_{HR}(\ve{p}; w)$ and $\hrpool{\ve{p}}{w}$ for $\ve{p}$, with the $p$-value determined empirically using simulated null data
\item[\code{hrPc}, \code{hrPr}, \code{hrQ}:] functions to compute the central rejection level, marginal rejection level, and centrality quotient of $\hrpool{\ve{p}}{\kappa}$ using simulation and \code{uniroot}
\item[\code{altFrequencyMat}:] function allowing access to a summarized version of the simulation results from Section \ref{chipool:regions}
  \item[\code{marHistHeatMap}:] function which generates heatmaps with marginal histograms, that is visualizations such as those in Figure \ref{MC:fig:kappaAlternativesUnifW}.
\end{description}
The package can be found on \href{https://github.com/Salahub/chi-pooling}{the author's GitHub}.

\section{Conclusion}

When presented with $M$ $p$-values from independent tests of hypotheses $H_{01}, \dots, H_{0M}$, a natural way to control the family-wise error rate (FWER) is by pooling these $p$-values using a function $g(\ve{p})$. If $g(\ve{p})$ is constructed using the sum of quantile transformations or the order statistics of the $p$-values, then the rejection rule $g(\ve{p}) \leq \alpha$ controls the FWER at $\alpha$. Selecting between the many possible $g(\ve{p})$ requires the choice of an alternative hypothesis from the telescoping set $H_1 \supset H_2 \supset H_3 \supset H_4$ in order to determine their powers against these alternatives. $H_3$ and $H_4$, though the most restrictive, still require the choice of $\prevalence$, the prevalence of non-null $p$-values in $\ve{p}$, and $f$, the distribution of these non-null values. An obvious choice for $f$ is the beta distribution restricted to be non-decreasing, as this biases non-null $p$-values lower than null $p$-values. By using $\prevalence$ and the Kullback-Leibler divergence of $f$ from the uniform distribution, both the prevalence and strength of non-null evidence can be measured.

If all the evidence is non-null, i.e. $\prevalence = 1$, the pooled $p$-value based on $l_w(\ve{p}) = w \sum_{i = 1}^M \ln p_i - (1 - w) \sum_{i = 1}^M \ln ( 1 - p_i )$ is uniformly most powerful (UMP) but is sensitive to the specification of its parameter $w \in [0,1]$. Incorrectly choosing this parameter, i.e. selecting a value that does not match the true generative distribution, costs power to reject the alternative hypothesis $H_4$. When $\prevalence \neq 1$, both the prevalence and strength of non-null evidence dictate the most powerful choice of $w$. Small values of $w$ are more powerful for weak evidence spread among all tests while large values are better at detecting strong evidence in a few tests.

This reflects a more universal pattern in pooled $p$-values and motivates a new paradigm for selecting and analyzing them. The marginal level of rejection at $\alpha$, the largest individual $p$-value that leads to rejection at $\alpha$ when all other $p$-values are 1, and the central rejection level at $\alpha$, the largest value simultaneously taken by all elements of $\ve{p}$ which still leads to rejection at $\alpha$, characterize this paradigm. By defining the central and marginal rejection level, a number of fundamental properties can be proven. Among them, the central rejection level of a pooled $p$-value satisfying some mild conditions is always greater than or equal to the marginal rejection level, with equality occurring only for $\tippool(\ve{p})$. This order allows a centrality quotient to be defined which summarizes the preference of a pooled $p$-value to diffuse or concentrated evidence with a value in $[0,1]$.

In order to control this quotient, a pooled $p$-value based on $\chi^2_{\kappa}$ quantile transformations was defined, $\chipool{\ve{p}}{\kappa}$. By choosing the degrees of freedom $\kappa \in [0, \infty)$, arbitrary control over the centrality of the pooled $p$-value is obtained. Increasing $\kappa$ raises the centrality quotient, and decreasing it drops the quotient. Furthermore, the limiting cases of $\kappa = 0$ and $\kappa \rightarrow \infty$ correspond to $\tippool(\ve{p})$, the minimum order statistic $p$-value, and $\stopool(\ve{p})$, the normal quantile transformation $p$-value. Both of these limiting pooled $p$-values are classic pooling functions which have been used and studied widely in the literature. $\chipool{\ve{p}}{\kappa}$ therefore provides a means to balance an important aspect of pooling $p$-values with a single parameter that has ready interpretation along its range. Comparing its power to $\hrpool{\ve{p}}{w}$ under $H_3$ and $H_4$, $\chipool{\ve{p}}{\kappa}$ loses less power than $\hrpool{\ve{p}}{w}$ with $w$ mis-specified. $\chipool{\ve{p}}{\kappa}$ is therefore more robust, and demonstrates that the central and marginal rejection paradigm is instructive to predict which version of $\chipool{\ve{p}}{\kappa}$ will be most powerful for a particular alternative hypothesis. $\chipool{\ve{p}}{\kappa}$ and the centrality quotient are both potent tools for pooling $p$-values to control the FWER.

% Bibliography
\bibliographystyle{plainnat}
\renewcommand*{\bibname}{References} % use title "References" for bibliography
\bibliography{./fullbib.bib}

\end{document}